\documentclass[11pt]{article}

\usepackage{amsmath,amssymb,amsthm,bm,mathtools}
\usepackage{graphicx}
\usepackage{booktabs}
\usepackage{enumitem}
\usepackage{geometry}
\usepackage{hyperref}
\usepackage{natbib}
\usepackage{multirow}
\usepackage{authblk} 
\usepackage{algorithm}
\usepackage{algorithmic}
\usepackage{subfigure}
\usepackage{rotating}
\hypersetup{colorlinks=true,linkcolor=blue,citecolor=blue,urlcolor=blue}

\newtheorem{assumption}{Assumption}
\newtheorem{theorem}{Theorem}
\newtheorem{lemma}{Lemma}

\theoremstyle{definition}

\newtheorem{remark}{Remark}

\title{\textbf{Block Empirical Likelihood Inference for Longitudinal Generalized Partially Linear Single-Index Models}}
\author[1]{Tianni Zhang}
\author[1]{Yuyao Wang}
\author[1]{Yu Lu}
\author[1]{Mengfei Ran\footnote{Corresponding author. \url{mengfei.ran@xjtlu.edu.cn}}}
\affil[1]{Wisdom Lake Academy of Pharmacy, Xi'an Jiaotong-Liverpool University}
\date{}

\begin{document}
\maketitle

\begin{abstract}
Generalized partially linear single-index models (GPLSIMs) provide a flexible and interpretable semiparametric framework for longitudinal outcomes by combining a low-dimensional parametric component with a nonparametric index component. For repeated measurements, valid inference is challenging because within-subject correlation induces nuisance parameters and variance estimation can be unstable in semiparametric settings. We propose a profile estimating-equation approach based on spline approximation of the unknown link function and construct a   block empirical likelihood (BEL) for joint inference on the parametric coefficients and the single-index direction. The resulting BEL ratio statistic enjoys a Wilks-type chi-square limit, yielding likelihood-free confidence regions without explicit sandwich variance estimation. We also discuss practical implementation, including constrained optimization for the index parameter, working-correlation choices, and bootstrap-based confidence bands for the nonparametric component. Simulation studies and an application to the epilepsy longitudinal study illustrate the finite-sample performance.
\end{abstract}

\noindent\textbf{Keywords:} block empirical likelihood; generalized estimating equations; longitudinal data; partially linear single-index model.

\section{Introduction}

Longitudinal and other clustered studies collect repeated measurements on each experimental unit and arise routinely in biomedical research, economics, and the social sciences \citep{diggle2002analysis}. A key feature of longitudinal data is within-subject correlation, which, if neglected, can compromise efficiency and distort uncertainty quantification. Generalized estimating equations (GEE) \citep{liang1986longitudinal} offer a widely used semiparametric framework that avoids full likelihood specification by relying on moment restrictions and a working correlation. Subsequent developments clarified how to improve efficiency and robustness under correlation misspecification; for example, the quadratic inference function was proposed by \cite{qu2000improving} to provide an alternative moment construction with favorable testing properties. When the mean structure departs from a purely parametric form, semiparametric regression for clustered outcomes using GEE becomes especially attractive \citep{lin2001semiparametric}, and new estimation and model selection procedures for longitudinal semiparametric modeling were developed in \cite{fan2004new}. Nevertheless, Wald-type inference based on sandwich covariance estimation can be unstable when the number of clusters is moderate and the working correlation is difficult to calibrate in practice \citep{Liang2004generalized}.

Meanwhile, purely linear predictors can be too rigid for modern longitudinal studies, where covariate effects may be nonlinear, heterogeneous across subjects, or driven by a few latent directions. Single-index structures address this by projecting high-dimensional covariates onto a single informative index and estimating an unknown univariate link, with the choice of smoothing level playing a crucial role; \cite{hardle1993optimal} studied optimal smoothing for this class of models. Partially linear single-index models further retain an explicit linear component for interpretability while capturing remaining nonlinear variation through the index link, aligning naturally with additive and other nonparametric regression ideas \citep{stone1985additive}. For independent data, \cite{yu2002penalized} proposed penalized spline estimation procedures for partially linear single-index models, and \cite{xia2006semiparametric} developed semiparametric estimation theory that justifies their asymptotic properties. To accommodate non-Gaussian outcomes, \cite{carroll1997generalized} introduced the generalized partially linear single-index model (GPLSIM), which embeds the unknown link in a generalized mean structure and thus bridges generalized linear modeling with flexible regression. In repeated-measures settings, \cite{liang2010estimation} developed estimation and testing methods for partially linear single-index models with longitudinal data, while \cite{bai2009model} studied model-checking tools tailored to longitudinal single-index specifications. Closely related semiparametric longitudinal formulations include local polynomial mixed-effects models proposed by \cite{wu2002local} and polynomial spline inference for varying-coefficient models developed in \cite{huang2004polynomial}. Because longitudinal outcomes are often contaminated by outliers or heavy-tailed noise, robust alternatives have been pursued: \cite{qin2008robust} investigated robust estimation in partial linear models with longitudinal data, and \cite{liu2018robust} studied robust procedures for varying-coefficient models in longitudinal settings.

Despite this extensive modeling literature, reliable inference for longitudinal GPLSIMs remains challenging because the unknown link function is a nuisance component whose estimation error can affect the second-order behavior of inference on finite-dimensional parameters. This difficulty is closely related to general principles for inference on parameters in semiparametric models \citep{he2000parameters}, and becomes more pronounced when variable selection under correlation is also of interest. From an implementation standpoint, generalized semiparametric fitting is often carried out using iteratively reweighted least squares, as discussed by \cite{green1984iteratively}, and stable quasi-Newton updating strategies can be helpful for high-dimensional optimization \citep{nocedal1980updating}. Spline sieves provide a practical approximation device for unknown smooth functions \cite{de2001practical}, and a comprehensive treatment of semiparametric regression is given by \cite{ruppert2003semiparametric}. In modern longitudinal studies with dense trajectories, connections to principal component methodology for functional and longitudinal data also offer useful perspective on dimension reduction and variability \citep{hall2005properties}.

These challenges motivate inferential approaches that remain faithful to the estimating-equation paradigm while avoiding unstable plug-in variance calculations. Empirical likelihood provides a convenient vehicle: it treats moment restrictions as the primitive object and yields likelihood-ratio type confidence regions without specifying a full parametric likelihood.  The original formulation is due to \cite{owen2001empirical}, and the extension to general estimating equations was formalized by \cite{Kolaczyk1994empirical}. In semiparametric contexts, \cite{xue2006empirical} proposed EL-based inference for single-index models, whereas \cite{xue2016empirical} considered EL procedures when covariables are missing. 

When outcomes are correlated, EL constructions typically need to be modified so that the dependence structure is respected rather than ignored. A natural remedy is to build the empirical likelihood on blocks, using blocks as approximately independent units.  In longitudinal regression, \cite{you2007block} proposed a   block empirical likelihood for partially linear models, and \cite{ma2015empirical} studied EL inference for generalized partially linear single-index models. Methodological extensions have continued to appear along several directions: robustification via robust GEE combined with EL was developed by \cite{hu2022robust_gee_el}, while \cite{tan2021pel_longitudinal} investigated penalized EL for longitudinal generalized linear models. Practical complications such as measurement error have also been addressed; for example, \cite{zhang2021el_me_arxiv} considered EL inference for longitudinal data with covariate measurement errors. To integrate information beyond the primary sample, \cite{sheng2022pel_external} proposed a penalized EL approach for synthesizing external aggregated information under population heterogeneity. Bayesian variants have been explored too: \cite{ouyang2023bayesian_el} developed Bayesian EL for longitudinal data, and decorrelation ideas for stabilizing inference in high-dimensional longitudinal GLMs were proposed in \cite{geng2024decorrelated}. Finally, EL has been pushed into modern semiparametric and high-dimensional settings with missingness, including single-index quantile regression \citep{wang2023el_singleindex_qr}, and a broader perspective connecting EL to functional data analysis is surveyed by \cite{chang2025el_fda}.

In this paper, we develop a   block empirical likelihood (BEL) approach for longitudinal generalized partially linear single-index models. Our estimation strategy starts from a profile GEE formulation: for a finite-dimensional parameter $\pmb{\theta}=(\pmb{\beta}^{\top},\pmb{\varphi}^{\top})^{\top}$ (with $\pmb{\alpha}=\pmb{\alpha}(\pmb{\varphi})$ enforcing scale identifiability), we approximate the unknown link $\eta_0(\cdot)$ by a spline sieve \citep{de2001practical}. Plugging the profiled link estimator into the marginal mean yields   estimating functions, which are then embedded into a BEL ratio that treats each subject as a block \citep{you2007block}. This construction produces likelihood-free confidence regions for $\pmb{\theta}_0$ that remain valid under mild conditions even when the working correlation is misspecified. At the technical level, a key ingredient is a profile-orthogonality property that renders the impact of estimating $\eta_0(\cdot)$ second order for inference on $\pmb{\theta}_0$, enabling a Wilks-type limit for the BEL statistic in the longitudinal GPLSIM setting.

The rest of the paper is organized as follows. Section~\ref{sec:method} introduces the longitudinal GPLSIM, the profile estimating equations, and the BEL construction. Section~\ref{sec:alg} presents a stable implementation and practical choices for spline dimension and correlation updating. Section~\ref{sec:theory} establishes large-sample properties of the proposed estimator and the Wilks-type limit for BEL. Simulation studies (Section~\ref{sec:sim}) and a real-data analysis (Section~\ref{sec:realdata}) illustrate finite-sample performance and practical utility. Section~\ref{sec:conclusion} concludes with discussion and possible extensions.

\section{Methodology}\label{sec:method}
\subsection{Longitudinal GPLSIM}

For subject $i=1,\dots,n$, let $\{(Y_{ij},\pmb{x}_{ij},\pmb{z}_{ij}): j=1,\dots,m_i\}$ denote repeated measurements, where $Y_{ij}$ is the response, $\pmb{x}_{ij}\in\mathbb{R}^{p}$ enters the linear component, and $\pmb{z}_{ij}\in\mathbb{R}^{q}$ enters the index component. We assume independence across subjects while allowing arbitrary within-subject correlation, which is the standard setting for GEE-type methodology \citep{liang1986longitudinal,diggle2002analysis}.

Let $\mu_{ij}=\mathbb{E}(Y_{ij}\mid \pmb{x}_{ij},\pmb{z}_{ij})$ and let $g(\cdot)$ be a known link function. We consider the longitudinal generalized partially linear single-index model (GPLSIM)
\begin{equation}\label{eq:gplsim}
g(\mu_{ij})
=
\pmb{x}_{ij}^{\top}\pmb{\beta}_0
+
\eta_0\!\left(\pmb{z}_{ij}^{\top}\pmb{\alpha}_0\right),
\qquad i=1,\dots,n,\ \ j=1,\dots,m_i,
\end{equation}
where $\pmb{\beta}_0\in\mathbb{R}^{p}$, $\pmb{\alpha}_0\in\mathbb{R}^{q}$, and $\eta_0(\cdot)$ is an unknown smooth function. This structure reduces dimensionality via the index while preserving interpretability through the linear component \citep{hardle1993optimal,xia2006semiparametric}. Longitudinal estimation and testing for related partially linear single-index models have been studied in \cite{liang2010estimation}, and our focus is to develop a likelihood-free inference procedure for \eqref{eq:gplsim} under within-subject dependence.

Because $(\pmb{\alpha}_0,\eta_0)$ is identifiable only up to scale, we impose the standard constraint
\begin{equation}\label{eq:alpha_id}
\|\pmb{\alpha}_0\|_2=1,\qquad \alpha_{0,1}>0.
\end{equation}
To handle \eqref{eq:alpha_id} seamlessly in both computation and inference, we reparameterize
\begin{equation}\label{eq:alpha_map}
\pmb{\alpha}(\pmb{\varphi})
=
\left(\sqrt{1-\|\pmb{\varphi}\|_2^2},\,\pmb{\varphi}^{\top}\right)^{\top},
\qquad
\pmb{\varphi}\in\mathbb{R}^{q-1},\ \ \|\pmb{\varphi}\|_2<1,
\end{equation}
and define the finite-dimensional parameter as
\[
\pmb{\theta}=(\pmb{\beta}^{\top},\pmb{\varphi}^{\top})^{\top}\in\mathbb{R}^{d},
\qquad d=p+q-1.
\]

\subsection{Sieve Approximation}\label{ssec:sieve}

Let $u_{ij}(\pmb{\theta})=\pmb{z}_{ij}^{\top}\pmb{\alpha}(\pmb{\varphi})$ denote the single-index. We approximate the unknown smooth link $\eta_0(\cdot)$ by a polynomial spline sieve,
\begin{equation}\label{eq:eta_sieve}
\eta(u)\approx \pmb{B}(u)^{\top}\pmb{\gamma},
\qquad
\pmb{B}(u)=(B_1(u),\dots,B_K(u))^{\top},\ \ \pmb{\gamma}\in\mathbb{R}^{K},
\end{equation}
where $\pmb{B}(\cdot)$ is taken as cubic $B$-splines with quasi-uniform knots. This choice is computationally stable and flexible enough to capture nonlinear effects, while retaining a transparent bias--variance trade-off and tractable sieve theory for semiparametric inference \citep{huang2004polynomial,he2000parameters}. In particular, if $\eta_0$ is sufficiently smooth, the sieve approximation error is of order $K^{-s}$ for some $s\ge 2$, and $K=K_n$ is allowed to increase slowly with $n$ so that the approximation bias becomes asymptotically negligible.

For the $i$-th subject, define
\[
\pmb{Y}_i=(Y_{i1},\dots,Y_{im_i})^{\top},\quad
\pmb{X}_i=(\pmb{x}_{i1},\dots,\pmb{x}_{im_i})^{\top}\in\mathbb{R}^{m_i\times p},
\]
and the spline design matrix
\[
\pmb{B}_i(\pmb{\theta})
=
\left(\pmb{B}\{u_{i1}(\pmb{\theta})\},\dots,\pmb{B}\{u_{im_i}(\pmb{\theta})\}\right)^{\top}
\in\mathbb{R}^{m_i\times K}.
\]
Then the linear predictor and mean vector can be written in the compact form
\begin{equation}\label{eq:mu_def}
\pmb{\xi}_i(\pmb{\theta},\pmb{\gamma})
=
\pmb{X}_i\pmb{\beta}+\pmb{B}_i(\pmb{\theta})\pmb{\gamma},
\qquad
\pmb{\mu}_i(\pmb{\theta},\pmb{\gamma})
=
g^{-1}\!\left\{\pmb{\xi}_i(\pmb{\theta},\pmb{\gamma})\right\}.
\end{equation}
In implementation, we select $K$ from a small candidate set using a deviance- or information-criterion-type rule, which empirically provides stable performance; the asymptotic theory only requires that $K$ grows slowly enough so that the sieve bias does not affect root-$n$ inference for $\pmb{\theta}$.

\subsection{Profile Estimating Equations}

To accommodate within-subject correlation, we adopt a working covariance
\begin{equation}\label{eq:Vi}
\pmb{V}_i(\pmb{\theta},\pmb{\gamma})
=
\pmb{A}_i(\pmb{\theta},\pmb{\gamma})^{1/2}\,
\pmb{R}_i(\pmb{\rho})\,
\pmb{A}_i(\pmb{\theta},\pmb{\gamma})^{1/2},
\end{equation}
where $\pmb{A}_i=\mathrm{diag}\{v(\mu_{i1}),\dots,v(\mu_{im_i})\}$ with $v(\cdot)$ being the variance function implied by the mean model, and $\pmb{R}_i(\pmb{\rho})$ is a working correlation matrix (e.g., independence, exchangeable, AR(1)). This parallels the generalized estimating equations (GEE) framework \citep{liang1986longitudinal,qu2000improving}

Let $\dot{\mu}_{ij}=\partial \mu_{ij}/\partial \xi_{ij}$ and define $\pmb{\Delta}_i(\pmb{\theta},\pmb{\gamma})=\mathrm{diag}(\dot{\mu}_{i1},\dots,\dot{\mu}_{im_i})$. For fixed $(\pmb{\theta},\pmb{\gamma})$, the partial derivative with respect to $\pmb{\beta}$ is
\[
\frac{\partial \pmb{\mu}_i(\pmb{\theta},\pmb{\gamma})}{\partial \pmb{\beta}^{\top}}
=
\pmb{\Delta}_i(\pmb{\theta},\pmb{\gamma})\,\pmb{X}_i.
\]
The derivative with respect to $\pmb{\varphi}$ depends on the index score $u_{ij}(\pmb{\theta})$ and the derivative of the link function $\dot{\eta}(\cdot)$; under the spline sieve \eqref{eq:eta_sieve}, $\dot{\eta}(u)$ is computed from the derivative of the spline basis functions. We denote the resulting partial Jacobian with respect to $\pmb{\theta}$ by
\[
\pmb{D}_i(\pmb{\theta},\pmb{\gamma})
=
\frac{\partial \pmb{\mu}_i(\pmb{\theta},\pmb{\gamma})}{\partial \pmb{\theta}^{\top}}
\in\mathbb{R}^{m_i\times d}.
\]

A natural estimating equation for the joint parameter $(\pmb{\theta},\pmb{\gamma})$ is the quasi-score form
\begin{equation}\label{eq:ee_full}
\sum_{i=1}^{n}
\pmb{D}_i(\pmb{\theta},\pmb{\gamma})^{\top}
\pmb{V}_i(\pmb{\theta},\pmb{\gamma})^{-1}
\left\{\pmb{Y}_i-\pmb{\mu}_i(\pmb{\theta},\pmb{\gamma})\right\}
=\pmb{0}.
\end{equation}
Directly solving \eqref{eq:ee_full} is feasible but treats the increasing-dimensional spline coefficient $\pmb{\gamma}$ as part of the main parameter, which complicates inference for the finite-dimensional target $\pmb{\theta}=(\pmb{\beta}^{\top},\pmb{\varphi}^{\top})^{\top}$.

We therefore adopt a profile approach: for each fixed $\pmb{\theta}$, we estimate the nuisance $\pmb{\gamma}$ and then solve a $d$-dimensional estimating equation in $\pmb{\theta}$. Specifically, define the inner (spline) estimating equation
\begin{equation}\label{eq:gamma_prof}
\sum_{i=1}^{n}
\pmb{B}_i(\pmb{\theta})^{\top}
\pmb{\Delta}_i(\pmb{\theta},\pmb{\gamma})
\pmb{V}_i(\pmb{\theta},\pmb{\gamma})^{-1}
\left\{\pmb{Y}_i-\pmb{\mu}_i(\pmb{\theta},\pmb{\gamma})\right\}
=\pmb{0},
\end{equation}
and let $\widehat{\pmb{\gamma}}(\pmb{\theta})$ denote a solution to \eqref{eq:gamma_prof}, which can be obtained via IRLS since \eqref{eq:gplsim} corresponds to a generalized linear model with a known offset.
Define the profiled mean and covariance as
\[
\widehat{\pmb{\mu}}_i(\pmb{\theta})
=
\pmb{\mu}_i\!\left(\pmb{\theta},\widehat{\pmb{\gamma}}(\pmb{\theta})\right),
\qquad
\widehat{\pmb{V}}_i(\pmb{\theta})
=
\pmb{V}_i\!\left(\pmb{\theta},\widehat{\pmb{\gamma}}(\pmb{\theta})\right).
\]
Then the outer estimating equation for $\pmb{\theta}$ is
\begin{equation}\label{eq:Utheta}
\sum_{i=1}^{n}
\pmb{G}_i(\pmb{\theta})^{\top}
\widehat{\pmb{V}}_i(\pmb{\theta})^{-1}
\left\{\pmb{Y}_i-\widehat{\pmb{\mu}}_i(\pmb{\theta})\right\}
=\pmb{0},
\end{equation}
where $\pmb{G}_i(\pmb{\theta}) = d \widehat{\pmb{\mu}}_i(\pmb{\theta}) / d \pmb{\theta}^{\top}$ is the total derivative of the profiled mean with respect to $\pmb{\theta}$. Note that $\pmb{G}_i(\pmb{\theta})$ captures the variation of $\widehat{\pmb{\mu}}_i$ both directly through $\pmb{\theta}$ and indirectly through the dependence of $\widehat{\pmb{\gamma}}$ on $\pmb{\theta}$. In practice, $\pmb{G}_i(\pmb{\theta})$ can be computed by a numerical derivative, or analytically using the implicit function relationship induced by \eqref{eq:gamma_prof}.

Let $\widehat{\pmb{\theta}}$ be a solution to \eqref{eq:Utheta} and set
\[
\widehat{\pmb{\alpha}}=\pmb{\alpha}(\widehat{\pmb{\varphi}}),\qquad
\widehat{\eta}(u)=\pmb{B}(u)^{\top}\widehat{\pmb{\gamma}}(\widehat{\pmb{\theta}}).
\]
The profile strategy is consistent with longitudinal partially linear single-index estimation \citep{liang2010estimation} while being tailored here to support empirical-likelihood inference for $\pmb{\theta}$. By concentrating out $\pmb{\gamma}$, the profile estimating equation effectively removes the projection of the score function onto the nuisance tangent space, facilitating valid semiparametric inference.

\subsection{Block Empirical Likelihood for $\pmb{\theta}$}

Define the   estimating function
\begin{equation}\label{eq:gi}
\pmb{g}_i(\pmb{\theta})
=
\pmb{G}_i(\pmb{\theta})^{\top}
\widehat{\pmb{V}}_i(\pmb{\theta})^{-1}
\left\{\pmb{Y}_i-\widehat{\pmb{\mu}}_i(\pmb{\theta})\right\}
\in\mathbb{R}^{d}.
\end{equation}
Then \eqref{eq:Utheta} is equivalent to $\sum_{i=1}^{n}\pmb{g}_i(\pmb{\theta})=\pmb{0}$. Crucially, although $\pmb{g}_i(\pmb{\theta})$ aggregates all repeated measurements for subject $i$ and incorporates within-subject dependence through $\widehat{\pmb{V}}_i(\pmb{\theta})$, the sequence $\{\pmb{g}_i(\pmb{\theta})\}_{i=1}^{n}$ is i.i.d.\ across subjects under our sampling assumption. This makes   block empirical likelihood (BEL) a natural inferential tool \citep{you2007block,ma2015empirical}.

The BEL ratio for $\pmb{\theta}$ is defined as
\begin{equation}\label{eq:bel_ratio}
R(\pmb{\theta})
=
\max_{\{p_i\}}
\left\{
\prod_{i=1}^{n}(n p_i):
\ p_i\ge 0,\ 
\sum_{i=1}^{n}p_i=1,\ 
\sum_{i=1}^{n}p_i\,\pmb{g}_i(\pmb{\theta})=\pmb{0}
\right\}.
\end{equation}
By the standard Lagrange-multiplier argument for empirical likelihood \citep{owen2001empirical,Kolaczyk1994empirical}, the maximizer has the form
\[
p_i(\pmb{\theta})
=
\frac{1}{n\{1+\pmb{\lambda}(\pmb{\theta})^{\top}\pmb{g}_i(\pmb{\theta})\}},
\]
where $\pmb{\lambda}(\pmb{\theta})\in\mathbb{R}^{d}$ solves
\begin{equation}\label{eq:lambda_eq}
\sum_{i=1}^{n}
\frac{\pmb{g}_i(\pmb{\theta})}{1+\pmb{\lambda}(\pmb{\theta})^{\top}\pmb{g}_i(\pmb{\theta})}
=
\pmb{0}.
\end{equation}
The corresponding empirical log-likelihood ratio statistic is
\begin{equation}\label{eq:ell_theta}
\ell(\pmb{\theta})
=
-2\log R(\pmb{\theta})
=
2\sum_{i=1}^{n}
\log\!\left\{1+\pmb{\lambda}(\pmb{\theta})^{\top}\pmb{g}_i(\pmb{\theta})\right\}.
\end{equation}

In longitudinal semiparametric models, Wald-type inference typically requires estimating a sandwich covariance that is sensitive to smoothing, correlation misspecification, and moderate $n$. BEL instead builds confidence regions by inverting $\ell(\pmb{\theta})$ and often enjoys an (asymptotic) Wilks phenomenon, i.e., $\ell(\pmb{\theta}_0)$ converges to a $\chi^2$ limit without explicit variance estimation \citep{owen2001empirical,Kolaczyk1994empirical,you2007block}. This ``automatic studentization'' is especially attractive when the nuisance function $\eta_0(\cdot)$ is estimated nonparametrically and when the working correlation is used primarily for efficiency \citep{qu2000improving}.

Accordingly, a $(1-\alpha)$ BEL confidence region for $\pmb{\theta}$ is
\begin{equation}\label{eq:bel_region}
\mathcal{C}_{\alpha}
=
\left\{
\pmb{\theta}:\ 
\ell(\pmb{\theta})\le \chi^2_{d,\,1-\alpha}
\right\},
\end{equation}
where $\chi^2_{d,\,1-\alpha}$ is the $(1-\alpha)$ quantile of $\chi^2_d$.
Marginal inference on a component (e.g., $\beta_k$) can be obtained by profiling $\ell(\pmb{\theta})$ over the remaining parameters, analogous to profile likelihood.

\subsection{Bootstrap Inference for $\eta_0(\cdot)$}

BEL targets the finite-dimensional parameter $\pmb{\theta}$. For $\eta_0(\cdot)$, we recommend a   (cluster) bootstrap that resamples entire subjects to preserve within-subject dependence \citep{diggle2002analysis}. Each bootstrap sample refits \eqref{eq:gplsim} using the algorithm in Section~\ref{sec:alg}, producing $\widehat{\eta}^{\ast}(u)$ on a grid. Pointwise confidence bands can be formed by bootstrap percentiles, and simultaneous bands can be constructed from the bootstrap distribution of $\sup_{u\in\mathcal{U}}|\widehat{\eta}^{\ast}(u)-\widehat{\eta}(u)|$. The bootstrap complements the Wilks-type BEL inference for $\pmb{\theta}$ and provides a practical uncertainty quantification for the nonparametric component.

\section{Algorithm}\label{sec:alg}

This section summarizes a practical implementation for the profile GEE estimator and the associated   block empirical likelihood (BEL).

\subsection{Inputs}

\begin{itemize}
\item \textbf{Spline basis and dimension.} 
Use a cubic $B$-spline basis $\pmb{B}(u)$ on $\mathcal{U}$ with $K$ basis functions and equally spaced interior knots.
In practice, we select $K$ from a small candidate set using a BIC/AIC-type criterion or cross-validation, subject to the growth conditions in Assumption~\ref{ass:K}.
\item \textbf{Working correlation.} 
Choose a parametric family $\pmb{R}_i(\pmb{\rho})$ (e.g., independence, exchangeable, or AR(1)).
Update $\pmb{\rho}$ by method-of-moments using Pearson residuals, as in standard GEE implementations \citep{liang1986longitudinal,qu2000improving}.
\item \textbf{Initialization.} 
Initialize $\pmb{\theta}^{(0)}=(\pmb{\beta}^{(0)},\pmb{\varphi}^{(0)})$ by fitting a working GLM that ignores the single-index nonlinearity (or by a few iterations with independence working correlation), and normalize via \eqref{eq:alpha_map}.
\end{itemize}

\subsection{Profile Algorithm for $\widehat{\pmb{\theta}}$ and $\widehat{\eta}$}

\begin{algorithm}[h]
\caption{Profile fitting for longitudinal GPLSIM}\label{alg:profile_fit}
\begin{algorithmic}[1]
\STATE Choose $K$ and a working correlation family $\pmb{R}_i(\pmb{\rho})$.
\STATE Initialize $\pmb{\theta}^{(0)}=(\pmb{\beta}^{(0)},\pmb{\varphi}^{(0)})$ with $\|\pmb{\varphi}^{(0)}\|_2<1$; set $t=0$.
\REPEAT
\STATE \textbf{Inner step (update $\pmb{\gamma}$).} 
Given $\pmb{\theta}^{(t)}$, solve the spline-score equation \eqref{eq:gamma_prof} for
$\widehat{\pmb{\gamma}}^{(t+1)}=\widehat{\pmb{\gamma}}(\pmb{\theta}^{(t)})$ (e.g., by IRLS).
\STATE \textbf{Correlation update (optional).} 
Compute Pearson residuals based on $\widehat{\pmb{\mu}}_i(\pmb{\theta}^{(t)})$ and update $\widehat{\pmb{\rho}}^{(t+1)}$ within the chosen $\pmb{R}_i(\cdot)$ family.
\STATE \textbf{Outer step (update $\pmb{\theta}$).}
Form the profiled estimating equation \eqref{eq:Utheta} using the profiled Jacobian $\pmb{G}_i(\pmb{\theta}^{(t)})=\partial \widehat{\pmb{\mu}}_i(\pmb{\theta})/\partial \pmb{\theta}^{\top}\big|_{\pmb{\theta}=\pmb{\theta}^{(t)}}$ (computed via implicit differentiation of \eqref{eq:gamma_prof}), and update $\pmb{\theta}^{(t+1)}$ by Newton or a damped quasi-Newton method (e.g., BFGS).
\STATE $t\leftarrow t+1$.
\UNTIL{convergence in $\pmb{\theta}$ and $\pmb{\gamma}$.}
\STATE Output $\widehat{\pmb{\theta}}=\pmb{\theta}^{(t)}$, $\widehat{\pmb{\alpha}}=\pmb{\alpha}(\widehat{\pmb{\varphi}})$, and $\widehat{\eta}(u)=\pmb{B}(u)^{\top}\widehat{\pmb{\gamma}}(\widehat{\pmb{\theta}})$.
\end{algorithmic}
\end{algorithm}

\subsection{BEL Statistic}

Given a candidate $\pmb{\theta}$, compute $\pmb{g}_i(\pmb{\theta})$ in \eqref{eq:gi} and solve the Lagrange-multiplier equation \eqref{eq:lambda_eq} for $\pmb{\lambda}(\pmb{\theta})$.
A stable approach is Newton's method applied to
\[
\pmb{\Psi}(\pmb{\lambda})
=
\sum_{i=1}^{n}
\frac{\pmb{g}_i(\pmb{\theta})}{1+\pmb{\lambda}^{\top}\pmb{g}_i(\pmb{\theta})},
\qquad
\frac{\partial \pmb{\Psi}(\pmb{\lambda})}{\partial \pmb{\lambda}^{\top}}
=
-\sum_{i=1}^{n}
\frac{\pmb{g}_i(\pmb{\theta})\pmb{g}_i(\pmb{\theta})^{\top}}{\{1+\pmb{\lambda}^{\top}\pmb{g}_i(\pmb{\theta})\}^{2}}.
\]
Because feasibility requires $1+\pmb{\lambda}^{\top}\pmb{g}_i(\pmb{\theta})>0$ for all $i$, we recommend a damped Newton update with backtracking line search to preserve positivity.
Once $\pmb{\lambda}(\pmb{\theta})$ is obtained, compute $\ell(\pmb{\theta})$ from \eqref{eq:ell_theta} and form confidence regions using \eqref{eq:bel_region} (and the profile statistic in Theorem~\ref{thm:wilks} when targeting subvectors of $\pmb{\theta}$).

\section{Asymptotic Theory}\label{sec:theory}

In this section, we investigate the theoretical properties. Firstly, recall the longitudinal GPLSIM in \eqref{eq:gplsim} and the profile estimating equations \eqref{eq:gamma_prof}--\eqref{eq:Utheta}. Let $K=K_n$ denote the sieve dimension in \eqref{eq:eta_sieve}. Write the true parameter as $\pmb{\theta}_0=(\pmb{\beta}_0^{\top},\pmb{\varphi}_0^{\top})^{\top}\in\mathbb{R}^{d}$,
and $\pmb{\alpha}_0=\pmb{\alpha}(\pmb{\varphi}_0)$.

For any $\pmb{\theta}$, define the population (sieve) nuisance parameter $\pmb{\gamma}_0(\pmb{\theta})$ as a solution to the population counterpart of \eqref{eq:gamma_prof}:
\begin{equation}\label{eq:gamma_pop}
\mathbb{E}\!\left[
\pmb{B}_i(\pmb{\theta})^{\top}
\pmb{\Delta}_i(\pmb{\theta},\pmb{\gamma})
\pmb{V}_i(\pmb{\theta},\pmb{\gamma})^{-1}
\left\{\pmb{Y}_i-\pmb{\mu}_i(\pmb{\theta},\pmb{\gamma})\right\}
\right]
=\pmb{0}.
\end{equation}
Define the population profiled mean and covariance
\[
\pmb{\mu}_{i,0}(\pmb{\theta})=\pmb{\mu}_i\!\left(\pmb{\theta},\pmb{\gamma}_0(\pmb{\theta})\right),
\qquad
\pmb{V}_{i,0}(\pmb{\theta})=\pmb{V}_i\!\left(\pmb{\theta},\pmb{\gamma}_0(\pmb{\theta})\right),
\]
and the population subject estimating function
\begin{equation}\label{eq:g_pop}
\pmb{g}_{i,0}(\pmb{\theta})
=
\pmb{G}_{i,0}(\pmb{\theta})^{\top}\,
\pmb{V}_{i,0}(\pmb{\theta})^{-1}
\left\{\pmb{Y}_i-\pmb{\mu}_{i,0}(\pmb{\theta})\right\},
\end{equation}
where $\pmb{G}_{i,0}(\pmb{\theta})$ is the Jacobian of $\pmb{\mu}_{i,0}(\pmb{\theta})$ w.r.t.\ $\pmb{\theta}$.
The sample estimating function in \eqref{eq:gi} equals $\pmb{g}_i(\pmb{\theta})=\pmb{g}_{i,0}(\pmb{\theta})$ with
$\pmb{\gamma}_0(\pmb{\theta})$ replaced by $\widehat{\pmb{\gamma}}(\pmb{\theta})$ and (if applicable)
$\pmb{\rho}$ replaced by its update.

Define the population moment map
\[
\pmb{U}_0(\pmb{\theta})=\mathbb{E}\{\pmb{g}_{i,0}(\pmb{\theta})\},
\qquad
\pmb{H}_0=\left.\frac{\partial \pmb{U}_0(\pmb{\theta})}{\partial \pmb{\theta}^{\top}}\right|_{\pmb{\theta}=\pmb{\theta}_0},
\qquad
\pmb{S}_0=\mathrm{Var}\{\pmb{g}_{i,0}(\pmb{\theta}_0)\}.
\]
Let $\widehat{\pmb{\theta}}$ be any measurable root of the sample profile equation \eqref{eq:Utheta}.

\subsection{Assumptions}
We next state a set of regularity conditions under which the profile estimator is root-$n$ consistent and the BEL statistic admits a chi-square (Wilks) limit. These assumptions are standard in longitudinal estimating-equation analysis and spline-sieve semiparametrics, but we present them explicitly to clarify how within-subject dependence and nuisance estimation are handled in our framework.

\begin{assumption}[Sampling and cluster size]\label{ass:sampling}
$\{(Y_{ij},\pmb{x}_{ij},\pmb{z}_{ij}):j=1,\dots,m_i\}$ are independent across $i$.
Within each subject, $\{(Y_{ij},\pmb{x}_{ij},\pmb{z}_{ij})\}_{j=1}^{m_i}$ may be arbitrarily dependent. Moreover $\max_{1\le i\le n} m_i \le M < \infty$ for a fixed constant $M$.
\end{assumption}

\begin{assumption}[Covariates and moments]\label{ass:moments}
There exists a compact set $\mathcal{X}\times\mathcal{Z}$ such that $(\pmb{x}_{ij},\pmb{z}_{ij})\in\mathcal{X}\times\mathcal{Z}$ almost surely. In addition, $\mathbb{E}\|\pmb{g}_{i,0}(\pmb{\theta}_0)\|_2^{4}<\infty$.
\end{assumption}

\begin{assumption}[Link and variance regularity]\label{ass:link}
The inverse link $g^{-1}(\cdot)$ is twice continuously differentiable. The variance function $v(\mu)$ is continuous and bounded away from $0$ and $\infty$ on the range of $\mu_{ij}$. Moreover $\dot{\mu}_{ij}=\partial \mu_{ij}/\partial \xi_{ij}$ is bounded away from $0$ and $\infty$ uniformly in $(i,j)$ and in a neighborhood of $(\pmb{\theta}_0,\pmb{\gamma}_0(\pmb{\theta}_0))$.
\end{assumption}

\begin{assumption}[Smoothness of $\eta_0$]\label{ass:smooth}
The true link $\eta_0$ is $s$ times continuously differentiable on a compact interval $\mathcal{U}$, with $s\ge 2$, and $\sup_{u\in\mathcal{U}}|\eta_0^{(s)}(u)|<\infty$.
\end{assumption}

\begin{assumption}[Index support and density]\label{ass:index_density}
Let $U_{ij,0}=\pmb{z}_{ij}^{\top}\pmb{\alpha}_0$. Then $U_{ij,0}\in\mathcal{U}$ almost surely. The marginal density of $U_{ij,0}$ exists and is bounded away from $0$ and $\infty$ on $\mathcal{U}$.
\end{assumption}

\begin{assumption}[Sieve dimension growth]\label{ass:K}
The spline basis in \eqref{eq:eta_sieve} is a cubic $B$-spline basis with quasi-uniform knots on $\mathcal{U}$. The sieve dimension $K=K_n$ satisfies
\begin{equation}\label{eq:K_growth}
K\to\infty,\qquad
\frac{K^{2}}{n}\to 0,\qquad
\sqrt{n}\,K^{-s}\to 0.
\end{equation}

\end{assumption}

\begin{assumption}[Working covariance]\label{ass:V}
For any $(\pmb{\theta},\pmb{\gamma})$ in a neighborhood of $(\pmb{\theta}_0,\pmb{\gamma}_0(\pmb{\theta}_0))$, the eigenvalues of $\pmb{V}_i(\pmb{\theta},\pmb{\gamma})$ are uniformly bounded away from $0$ and $\infty$ over $i$. If $\pmb{\rho}$ is updated, the resulting $\widehat{\pmb{\rho}}$ converges in probability to a deterministic limit $\pmb{\rho}^{\dagger}$.
\end{assumption}

\begin{assumption}[Local identifiability and nonsingularity]\label{ass:H}
The true parameter $\pmb{\varphi}_0$ satisfies $\|\pmb{\varphi}_0\|_2 < 1$. The population equation $\pmb{U}_{0}(\pmb{\theta})=0$ has a unique solution at $\pmb{\theta}_{0}$ in a neighborhood, and the Jacobian matrix $\pmb{H}_{0}$ is nonsingular.
\end{assumption}

Assumption~\ref{ass:sampling} formalizes the subject-as-a-block view: arbitrary within-subject dependence is allowed, while cross-subject independence supplies the effective sample size $n$, matching the block empirical likelihood paradigm \citep{you2007block,ma2015empirical}. Assumption~\ref{ass:moments} imposes bounded design and a finite fourth moment for the subject estimating function; this is standard for deriving both asymptotic normality and the EL quadratic expansion \citep{Kolaczyk1994empirical,owen2001empirical}. Assumption~\ref{ass:link} ensures the mean map is smooth and well behaved, which is needed for uniform Taylor expansions in the profile equations and for stability of IRLS-type fitting \citep{liang1986longitudinal,qu2000improving}. Assumptions~\ref{ass:smooth}--\ref{ass:index_density} guarantee the single-index $U_{ij,0}$ lives on a compact interval with well-behaved density, and that $\eta_0$ is sufficiently smooth for spline sieve approximation with bias $K^{-s}$ \citep{huang2004polynomial,he2000parameters}. Assumption~\ref{ass:K} balances sieve bias and variance so that the nuisance estimation error is asymptotically negligible at the $\sqrt{n}$ scale; this is the key condition that enables a Wilks phenomenon for BEL despite the nonparametric component. Assumption~\ref{ass:V} guarantees working covariance matrices remain invertible uniformly; if $\pmb{\rho}$ is estimated, only convergence to a deterministic limit is needed (it need not equal the true correlation), consistent with GEE practice \citep{wang2004generalized}. Finally, Assumption~\ref{ass:H} is a local identifiability condition ensuring the profile estimating equation has a well-defined root and that linearization yields a valid influence representation.

\subsection{Main Results}

\begin{theorem}[Consistency and rates]\label{thm:consistency}
Under Assumptions~\ref{ass:sampling}--\ref{ass:H}, there exists a sequence of roots $\widehat{\pmb{\theta}}$ of \eqref{eq:Utheta} such that
\[
\|\widehat{\pmb{\theta}}-\pmb{\theta}_0\|_2=O_{\mathbb{P}}(n^{-1/2}).
\]
Moreover, with $\widehat{\eta}$ defined by the spline sieve at $\widehat{\pmb{\theta}}$,
\[
\sup_{u\in\mathcal{U}}|\widehat{\eta}(u)-\eta_0(u)|
=
O_{\mathbb{P}}\!\left(K^{-s}+\sqrt{\frac{K}{n}}\right).
\]
\end{theorem}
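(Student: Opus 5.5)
Our plan is to separate the nuisance-estimation step from the finite-dimensional root-finding step and then combine them through a uniform linearization of the profiled estimating function. The first step controls the inner estimator $\widehat{\pmb{\gamma}}(\pmb{\theta})$ uniformly over a shrinking neighborhood $\Theta_n=\{\pmb{\theta}:\|\pmb{\theta}-\pmb{\theta}_0\|_2\le Cn^{-1/2}\}$ (and, for the existence argument, over a fixed small ball).

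By Assumptions~\ref{ass:smooth}--\ref{ass:K} and standard spline approximation theory (de Boor), there exists $\pmb{\gamma}^\ast$ with $\sup_{u\in\mathcal{U}}|\pmb{B}(u)^\top\pmb{\gamma}^\ast-\eta_0(u)|=O(K^{-s})$. We will show that $\|\pmb{\gamma}_0(\pmb{\theta}_0)-\pmb{\gamma}^\ast\|_2=O(K^{-s})$ and $\|\widehat{\pmb{\gamma}}(\pmb{\theta})-\pmb{\gamma}_0(\pmb{\theta})\|_2=O_{\mathbb{P}}(\sqrt{K/n})$, uniformly in $\pmb{\theta}$.

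The key tool is the inner Hessian
\[
\pmb{\Lambda}_n(\pmb{\theta})=n^{-1}\sum_i\pmb{B}_i^\top\pmb{\Delta}_i\pmb{V}_i^{-1}\pmb{\Delta}_i\pmb{B}_i .
\]
Normalizing the B-splines by $\sqrt{K}$, its eigenvalues are bounded away from $0$ and $\infty$ with probability tending to one. This follows from Assumptions~\ref{ass:link}, \ref{ass:index_density}, and \ref{ass:V}, the B-spline Riesz property, and a matrix Bernstein bound, which needs $K\log K/n\to0$ (implied by $K^2/n\to0$).

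A Taylor expansion of \eqref{eq:gamma_prof} about $\pmb{\gamma}_0(\pmb{\theta})$ then gives
\[
\widehat{\pmb{\gamma}}(\pmb{\theta})-\pmb{\gamma}_0(\pmb{\theta})=\pmb{\Lambda}_n^{-1}\,n^{-1}\sum_i\pmb{B}_i^\top\pmb{\Delta}_i\pmb{V}_i^{-1}(\pmb{Y}_i-\pmb{\mu}_{i,0})+\text{remainder}.
\]
The leading term has squared norm $O_{\mathbb{P}}(K/n)$. The remainder, quadratic in $\widehat{\pmb{\gamma}}-\pmb{\gamma}_0$, is handled by a Brouwer fixed-point argument on a ball of radius $C\sqrt{K/n}$, using $\sup_u\|\pmb{B}(u)\|_2\lesssim\sqrt{K}$ and $K^2/n\to0$.

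For the second claim, uniform convergence follows from $\sup_u|\pmb{B}(u)^\top\pmb{v}|\le\|\pmb{v}\|_\infty$ for B-splines and the banded structure of $\pmb{\Lambda}_n$ (Demko's lemma gives exponential off-diagonal decay of $\pmb{\Lambda}_n^{-1}$). These give the sup-norm rate $K^{-s}+\sqrt{K/n}$ up to a possible logarithmic factor. We expect to remove that factor via the $\ell_\infty$ operator bound on $\pmb{\Lambda}_n^{-1}$ together with a maximal inequality for the $K$ coordinates. The main obstacle is the accompanying derivative bound $\|\partial\widehat{\pmb{\gamma}}(\pmb{\theta})/\partial\pmb{\theta}^\top-\partial\pmb{\gamma}_0(\pmb{\theta})/\partial\pmb{\theta}^\top\|=o_{\mathbb{P}}(1)$ in the appropriate norm, obtained by implicit differentiation of \eqref{eq:gamma_prof}. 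This is needed because $\pmb{G}_i$ involves $d\widehat{\pmb{\gamma}}/d\pmb{\theta}$ and $\dot{\eta}$ via spline derivatives, whose sup-norm error is $O(K\cdot(K^{-s}+\sqrt{K/n}))$. This is $o(1)$ since $s\ge2$ and $K^3/n$ must be controlled. If $K^2/n\to0$ turns out to be insufficient, we will argue through mean-square ($L_2$) norms instead.

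Next we linearize $n^{-1}\sum_i\pmb{g}_i(\pmb{\theta})$. We write
\[
n^{-1}\sum_i\pmb{g}_i(\pmb{\theta})=n^{-1}\sum_i\pmb{g}_{i,0}(\pmb{\theta})+\pmb{R}_n(\pmb{\theta}).
\]
The remainder $\pmb{R}_n$ collects the effect of replacing $\pmb{\gamma}_0(\pmb{\theta})$ by $\widehat{\pmb{\gamma}}(\pmb{\theta})$ and $\pmb{\rho}$ by $\widehat{\pmb{\rho}}$. Its first-order term in $\widehat{\pmb{\gamma}}-\pmb{\gamma}_0$ has mean $-\mathbb{E}\{\pmb{G}_{i,0}^\top\pmb{V}_{i,0}^{-1}\pmb{\Delta}_i\pmb{B}_i\}(\widehat{\pmb{\gamma}}-\pmb{\gamma}_0)$. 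This vanishes up to $O(K^{-s})$ by profile orthogonality: $\pmb{G}_{i,0}=\pmb{D}_i+\pmb{\Delta}_i\pmb{B}_i\,\partial\pmb{\gamma}_0/\partial\pmb{\theta}^\top$, and differentiating \eqref{eq:gamma_pop} in $\pmb{\theta}$ shows that the columns of $\pmb{G}_{i,0}$ are $\pmb{V}^{-1}$-orthogonal to the spline span in population. The centered fluctuation is $O_{\mathbb{P}}(\sqrt{K/n}\cdot\sqrt{K/n})=o_{\mathbb{P}}(n^{-1/2})$, and the quadratic terms are $O_{\mathbb{P}}(K/n)$ times bounded factors, so also $o_{\mathbb{P}}(n^{-1/2})$. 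The $\widehat{\pmb{\rho}}$ effect is $o_{\mathbb{P}}(n^{-1/2})$ because $\mathbb{E}(\pmb{Y}_i-\pmb{\mu}_{i,0}(\pmb{\theta}_0))=O(K^{-s})$, so the estimating equation is unbiased under any working correlation.

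Combining these, $\sup_{\pmb{\theta}\in\Theta_n}\|\pmb{R}_n(\pmb{\theta})\|=o_{\mathbb{P}}(n^{-1/2})$ and $n^{-1}\sum_i\pmb{g}_{i,0}(\pmb{\theta})=n^{-1}\sum_i\pmb{g}_{i,0}(\pmb{\theta}_0)+\pmb{H}_0(\pmb{\theta}-\pmb{\theta}_0)+o_{\mathbb{P}}(n^{-1/2})$, uniformly on $\Theta_n$. The smoothness here comes from Assumption~\ref{ass:link} and the bounded design. We also have $n^{-1}\sum_i\pmb{g}_{i,0}(\pmb{\theta}_0)=O_{\mathbb{P}}(n^{-1/2})$ by Chebyshev, using Assumption~\ref{ass:moments} and $\pmb{U}_0(\pmb{\theta}_0)=O(K^{-s})=o(n^{-1/2})$ by Assumption~\ref{ass:K}.

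Finally we establish existence of the root and the two rates. Since $\pmb{H}_0$ is nonsingular (Assumption~\ref{ass:H}), consider the map
\[
\pmb{\theta}\mapsto\pmb{\theta}-\pmb{H}_0^{-1}n^{-1}\sum_i\pmb{g}_i(\pmb{\theta}).
\]
It maps $\Theta_n$ into itself for $C$ large with probability at least $1-\epsilon$, so Brouwer's theorem yields a root $\widehat{\pmb{\theta}}$ with $\|\widehat{\pmb{\theta}}-\pmb{\theta}_0\|_2=O_{\mathbb{P}}(n^{-1/2})$. Alternatively one can argue that $(\pmb{\theta}-\pmb{\theta}_0)^\top\pmb{H}_0^{-1}\sum_i\pmb{g}_i$ has constant sign on the boundary of $\Theta_n$.

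For $\widehat{\eta}$ we decompose
\[
\widehat{\eta}-\eta_0=\pmb{B}^\top\{\widehat{\pmb{\gamma}}(\widehat{\pmb{\theta}})-\pmb{\gamma}_0(\widehat{\pmb{\theta}})\}+\pmb{B}^\top\{\pmb{\gamma}_0(\widehat{\pmb{\theta}})-\pmb{\gamma}_0(\pmb{\theta}_0)\}+\{\pmb{B}^\top\pmb{\gamma}_0(\pmb{\theta}_0)-\eta_0\}.
\]
The first term is $O_{\mathbb{P}}(\sqrt{K/n})$ by the uniform inner bound. The last is $O(K^{-s})$. The middle term is $O_{\mathbb{P}}(n^{-1/2})$ by Lipschitz continuity of $\pmb{\theta}\mapsto\pmb{B}^\top\pmb{\gamma}_0(\pmb{\theta})$ in sup-norm, which follows from differentiating \eqref{eq:gamma_pop} and the boundedness of $\eta_0'$.
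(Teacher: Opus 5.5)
Your proposal has two genuine gaps.

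\textbf{The sup-norm rate for $\widehat\eta$.} The fix you propose here cannot work. Your inner bound $\|\widehat{\pmb\gamma}(\pmb\theta)-\pmb\gamma_0(\pmb\theta)\|_2=O_{\mathbb P}(\sqrt{K/n})$ refers to the $\sqrt K$-normalized basis, so it is an $L_2$ bound on the fitted function. Converting it with $\sup_u\|\sqrt K\,\pmb B(u)\|_2\asymp\sqrt K$ gives only $K/\sqrt n$. Your last paragraph's claim that the first term is $O_{\mathbb P}(\sqrt{K/n})$ ``by the uniform inner bound'' therefore does not follow.

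The $\ell_\infty$ route (Demko plus $\sup_u|\pmb B(u)^\top\pmb v|\le\|\pmb v\|_\infty$) reduces that term to a maximum over $K$ coordinates. These are nearly independent at well-separated knots, and each is of order $\sqrt{K/n}$ on the function scale. A maximal inequality is exactly what \emph{produces} the $\sqrt{\log K}$ factor; it cannot remove it.

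The factor is in fact sharp. Already with $m_i\equiv1$, identity link and Gaussian errors, the stochastic part of a cubic-spline fit has sup-norm of exact order $\sqrt{K\log K/n}$; this is the Gumbel-type behaviour behind spline confidence bands. The $\widehat{\pmb\theta}$-error, $O_{\mathbb P}(n^{-1/2})$, and the bias, $O(K^{-s})=o(n^{-1/2})$ under Assumption~\ref{ass:K}, are too small to absorb it. What you can prove is $O_{\mathbb P}(K^{-s}+\sqrt{K\log K/n})$ in sup-norm, or the displayed rate in $L_2$.

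The paper does not supply the missing step either. It uses the same three-term decomposition as you and bounds the first term by the coefficient rate of Lemma~\ref{lem:gamma_uniform} times $\sup_u\|\pmb B(u)\|_2\le1$. For the unnormalized basis, however, the inner Jacobian has eigenvalues of order $K^{-1}$, not bounded below as asserted in \eqref{eq:J0_inv}, so that route also yields only $K/\sqrt n$. Note also that Demko's lemma needs $\pmb\Lambda_n$ to be banded. This fails for a non-diagonal working correlation, because the $j\neq j'$ terms $\pmb B(u_{ij})\pmb B(u_{ij'})^\top$ contribute a dense component.

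\textbf{The $\pmb\theta$-rate and the derivative terms in $\pmb G_i$.} Here your route differs from the paper's, and on the central point it is the right one. Lemma~\ref{lem:U_linear} bounds the effect of replacing $\pmb\gamma_0(\pmb\theta)$ by $\widehat{\pmb\gamma}(\pmb\theta)$ by $O_{\mathbb P}(\sqrt{K/n}+K^{-s})$ and calls this $o_{\mathbb P}(n^{-1/2})$. That is false, since $\sqrt{K/n}/n^{-1/2}=\sqrt K\to\infty$. Your profile-orthogonality argument with an $O_{\mathbb P}(K/n)$ centred term is what that step actually needs. Brouwer on the shrinking ball $\Theta_n$ is also a clean replacement for the paper's degree argument on a fixed ball followed by contraction.

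The gap is the derivative content of $\pmb G_i$, which you leave open, and your fallback does not close it. The $d\widehat{\pmb\gamma}/d\pmb\theta$ part is harmless. It enters $\sum_i\pmb g_i(\pmb\theta)$ only through $(d\widehat{\pmb\gamma}/d\pmb\theta^\top)^\top\sum_i\pmb B_i^\top\pmb\Delta_i\widehat{\pmb V}_i^{-1}(\pmb Y_i-\widehat{\pmb\mu}_i)$, which vanishes identically by \eqref{eq:gamma_prof}.

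What remains is $\widehat\eta'=\dot{\pmb B}^\top\widehat{\pmb\gamma}$ in the $\pmb\varphi$-columns, with $\dot{\pmb B}=d\pmb B/du$. Its contribution to your $\pmb R_n$ contains $(\widehat{\pmb\gamma}-\pmb\gamma_0)^\top n^{-1}\sum_{i,j}\dot{\pmb B}(u_{ij})\,\pmb a_{ij}^\top e_{ij}$. Here $\pmb a_{ij}$ are bounded weights and $e_{ij}$ is the $j$th entry of $\pmb V_{i,0}^{-1}(\pmb Y_i-\pmb\mu_{i,0})$.

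Your $O_{\mathbb P}(\sqrt{K/n}\cdot\sqrt{K/n})$ bound does not cover this term. It is a degenerate $U$-statistic whose kernel is the $u$-derivative of the spline projection kernel. That kernel has size $K^2$ on a $1/K$-neighbourhood of the diagonal, so the standard deviation is of order $K^{3/2}/n$. This is $o(n^{-1/2})$ only if $K^3/n\to0$.

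Working in mean-square norms does not change this, because the stochastic part of $\widehat\eta'$ is itself of order $\sqrt{K^3/n}$ in $L_2$. As written, $\sup_{\Theta_n}\|\pmb R_n\|=o_{\mathbb P}(n^{-1/2})$ is not established under Assumption~\ref{ass:K}. You need an extra condition such as $K^3/n\to0$, or a further cancellation you have not identified.
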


\begin{remark}\label{rem:consistency}
The first conclusion states that the finite-dimensional target $\pmb{\theta}$ is estimable at the parametric rate despite the presence of an unknown link, reflecting the dimension-reduction benefit of the single-index structure and the profiling step. The second rate is the familiar spline bias--variance trade-off: $K^{-s}$ is the approximation bias controlled by smoothness in Assumption~\ref{ass:smooth}, while $\sqrt{K/n}$ is the stochastic term. Assumption~\ref{ass:K} ensures $\sqrt{n}K^{-s}\to 0$ and $K/n\to 0$, which makes the nuisance estimation error asymptotically negligible for root-$n$ inference on $\pmb{\theta}$.
\end{remark}

\begin{theorem}[Asymptotic normality]\label{thm:normal}
Under Assumptions~\ref{ass:sampling}--\ref{ass:H},
\[
\sqrt{n}\,(\widehat{\pmb{\theta}}-\pmb{\theta}_0)
\ \Rightarrow\
\mathcal{N}\!\left(\pmb{0},\,\pmb{H}_0^{-1}\pmb{S}_0(\pmb{H}_0^{-1})^{\top}\right).
\]
Equivalently, $\widehat{\pmb{\theta}}$ admits the influence representation
\[
\sqrt{n}\,(\widehat{\pmb{\theta}}-\pmb{\theta}_0)
=
-\pmb{H}_0^{-1}\frac{1}{\sqrt{n}}\sum_{i=1}^{n}\pmb{g}_{i,0}(\pmb{\theta}_0)
+o_{\mathbb{P}}(1).
\]
\end{theorem}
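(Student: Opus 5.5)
The plan is the standard Z-estimator linearization around $\pmb{\theta}_0$, using the root-$n$ consistency of Theorem~\ref{thm:consistency}. The key step is to replace the sample estimating functions $\pmb{g}_i(\pmb{\theta})$, which are built from $\widehat{\pmb{\gamma}}(\pmb{\theta})$ and $\widehat{\pmb{\rho}}$, by their population versions $\pmb{g}_{i,0}(\pmb{\theta})$ at a cost of $o_{\mathbb{P}}(n^{-1/2})$. The first target is
\begin{equation*}
\frac{1}{\sqrt n}\sum_{i=1}^n \pmb{g}_i(\pmb{\theta}_0)=\frac{1}{\sqrt n}\sum_{i=1}^n \pmb{g}_{i,0}(\pmb{\theta}_0)+o_{\mathbb{P}}(1).
\end{equation*}
The second target is the uniform derivative statement
\begin{equation*}
\sup_{\|\pmb{\theta}-\pmb{\theta}_0\|\le Cn^{-1/2}}\Big\|\frac1n\sum_{i=1}^n\{\pmb{g}_i(\pmb{\theta})-\pmb{g}_i(\pmb{\theta}_0)\}-\pmb{H}_0(\pmb{\theta}-\pmb{\theta}_0)\Big\|=o_{\mathbb{P}}(n^{-1/2}).
\end{equation*}
Given both, the argument closes as follows. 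Evaluate at $\widehat{\pmb{\theta}}$, use $\sum_i\pmb{g}_i(\widehat{\pmb{\theta}})=\pmb{0}$, and invert the nonsingular $\pmb{H}_0$ (Assumption~\ref{ass:H}). This gives the influence representation. The multivariate Lindeberg--L\'evy CLT then applies to the i.i.d.\ terms $\pmb{g}_{i,0}(\pmb{\theta}_0)$, which are independent across subjects by Assumption~\ref{ass:sampling}, have mean $\pmb{U}_0(\pmb{\theta}_0)=\pmb{0}$ and variance $\pmb{S}_0$, and have finite second moments by Assumption~\ref{ass:moments}. Slutsky's lemma yields the sandwich limit $\pmb{H}_0^{-1}\pmb{S}_0\pmb{H}_0^{-\top}$.

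For the first target, I would expand $\pmb{g}_i(\pmb{\theta}_0)$ in the nuisance direction. The difference splits into three terms. The first is a term linear in $\widehat{\pmb{\gamma}}(\pmb{\theta}_0)-\pmb{\gamma}_0(\pmb{\theta}_0)$ that multiplies the residual $\pmb{Y}_i-\pmb{\mu}_{i,0}$ (through the dependence of $\pmb{G}_i$ and $\pmb{V}_i$ on $\pmb{\gamma}$). The second is a term $-\pmb{G}_{i,0}^\top\pmb{V}_{i,0}^{-1}\pmb{\Delta}_i\pmb{B}_i\{\widehat{\pmb{\gamma}}-\pmb{\gamma}_0\}$ coming from the mean. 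The third is a quadratic remainder. The $\widehat{\pmb{\rho}}$ effect is handled the same way: the residual factor has mean zero at the truth up to sieve bias, so $\widehat{\pmb{\rho}}-\pmb{\rho}^\dagger=o_{\mathbb{P}}(1)$ only multiplies an $O_{\mathbb{P}}(1)$ normalized sum. This is the usual GEE argument under Assumption~\ref{ass:V}.

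The first-order mean term is where profile orthogonality enters, and I expect it to be the main obstacle. Because $\pmb{G}_{i,0}$ is the \emph{total} derivative of $\pmb{\mu}_i(\pmb{\theta},\pmb{\gamma}_0(\pmb{\theta}))$, implicit differentiation of \eqref{eq:gamma_pop} shows that $\mathbb{E}[\pmb{G}_{i,0}^\top\pmb{V}_{i,0}^{-1}\pmb{\Delta}_i\pmb{B}_i]$ is orthogonal to the spline directions up to $O(K^{-s})$. Concretely, $\pmb{G}_{i,0}=\pmb{D}_i+\pmb{\Delta}_i\pmb{B}_i\,\partial\pmb{\gamma}_0/\partial\pmb{\theta}^\top$, and the second piece is exactly the weighted projection that removes the $\pmb{B}_i$-component. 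The empirical version of this cross-product then differs from its expectation by $O_{\mathbb{P}}(\sqrt{K/n})$ in operator norm. I would use the standard spline Gram-matrix result that eigenvalues of $K^{-1}$-scaled Gram matrices are bounded, which relies on Assumptions~\ref{ass:index_density}--\ref{ass:K}. I would also use $\|\widehat{\pmb{\gamma}}-\pmb{\gamma}_0\|=O_{\mathbb{P}}(\sqrt{K/n}+K^{-s})$ from the proof of Theorem~\ref{thm:consistency}. Multiplying these, the term is $\sqrt n\cdot O_{\mathbb{P}}(\sqrt{K/n}+K^{-s})\cdot O_{\mathbb{P}}(\sqrt{K/n}+K^{-s})$, which is $o_{\mathbb{P}}(1)$ by $K^2/n\to0$ and $\sqrt nK^{-s}\to0$.

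The quadratic remainder is bounded the same way using Assumption~\ref{ass:link}. The sieve bias in $\pmb{U}_0$ is $o(n^{-1/2})$ because $\sqrt nK^{-s}\to0$. Getting the rate of the empirical-process cross term sharp enough is the delicate point, and if the direct bound fails I would first try a sample-splitting or stochastic-equicontinuity argument over the sieve class.

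For the second target, I would use a stochastic equicontinuity argument for $n^{-1}\sum_i\pmb{g}_i(\pmb{\theta})$ on the shrinking neighborhood. The derivative $\partial\pmb{g}_i/\partial\pmb{\theta}^\top$ is continuous by Assumptions~\ref{ass:link}--\ref{ass:V}. The bounded design from Assumption~\ref{ass:moments} gives a uniform LLN for its average, which converges to $\pmb{H}_0$. Replacing $\widehat{\pmb{\gamma}}(\pmb{\theta})$ by $\pmb{\gamma}_0(\pmb{\theta})$ inside this average costs $o_{\mathbb{P}}(1)$ by the uniform spline rate, and this suffices because the target involves multiplication by $\pmb{\theta}-\pmb{\theta}_0=O_{\mathbb{P}}(n^{-1/2})$.
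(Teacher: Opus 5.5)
Your skeleton is the same as the paper's. You expand $\sum_i\pmb{g}_i(\widehat{\pmb{\theta}})=\pmb{0}$ around $\pmb{\theta}_0$, use the root-$n$ rate of Theorem~\ref{thm:consistency} to absorb the $o_{\mathbb{P}}(\|\widehat{\pmb{\theta}}-\pmb{\theta}_0\|_2)$ remainder, invert $\pmb{H}_0$, and finish with a CLT and Slutsky.

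The genuine difference is the nuisance-replacement step. The paper takes it from Lemma~\ref{lem:U_linear}. There it bounds $\sup_{\pmb{\theta}}\|\pmb{U}_n(\pmb{\theta})-\pmb{U}_{n,0}(\pmb{\theta})\|_2$ by the plug-in rate $O_{\mathbb{P}}(\sqrt{K/n}+K^{-s})$ and then declares this $o_{\mathbb{P}}(n^{-1/2})$. That is false, since $(\sqrt{K/n})/n^{-1/2}=\sqrt{K}\to\infty$.

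Your profile-orthogonality argument is exactly what is needed at that point. Because $\pmb{G}_{i,0}$ is the total derivative, implicit differentiation of \eqref{eq:gamma_pop} makes $\mathbb{E}[\pmb{G}_{i,0}^{\top}\pmb{V}_{i,0}^{-1}\pmb{\Delta}_i\pmb{B}_i]$ vanish up to sieve bias. The first-order mean effect of $\widehat{\pmb{\gamma}}-\pmb{\gamma}_0$ then becomes a product of two small factors, of order $\sqrt{n}(\sqrt{K/n}+K^{-s})^2\to0$ under Assumption~\ref{ass:K}. The introduction advertises this mechanism, but the appendix never uses it. Your route is therefore the sounder one; the paper's only buys brevity.

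One step of yours does not go through as written. You handle the term in which $\widehat{\pmb{\gamma}}-\pmb{\gamma}_0$ multiplies the residual ``the same way'' as $\widehat{\pmb{\rho}}$. But this perturbation is $K$-dimensional, not finite-dimensional. Part of it enters $\pmb{D}_i$ through $\dot\eta=\dot{\pmb{B}}^{\top}\pmb{\gamma}$, i.e.\ through derivatives of the basis, which are a factor $K$ larger.
\begin{itemize}
\item A Cauchy--Schwarz bound on this term gives only $O_{\mathbb{P}}(K^{2}/\sqrt{n})$.
\item Treating it as a degenerate $U$-statistic, you should still expect a piece of order $\sqrt{K^{3}/n}$. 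This is the spline analogue of the $nh^{3}\to\infty$ requirement for derivative terms in kernel single-index estimation.
\item $K^{2}/n\to0$ does not control either bound. You need a sharper argument or a strengthened Assumption~\ref{ass:K}, e.g.\ $K^{3}/n\to0$.
\end{itemize}

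The same issue affects your second target. The uniform rate for $\pmb{B}^{\top}(\widehat{\pmb{\gamma}}-\pmb{\gamma}_0)$ does not control the $\dot{\pmb{B}}$ and $\ddot{\pmb{B}}$ terms in $\partial\pmb{g}_i/\partial\pmb{\theta}^{\top}$. It is cleaner to make your first target uniform over the $n^{-1/2}$-ball and differentiate only the population $\pmb{g}_{i,0}$.

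One simplification is available. Every term carrying $\partial\widehat{\pmb{\gamma}}/\partial\pmb{\theta}^{\top}-\partial\pmb{\gamma}_0/\partial\pmb{\theta}^{\top}$ sums to exactly zero. This is because those terms equal that difference times $\sum_i\pmb{B}_i^{\top}\pmb{\Delta}_i\widehat{\pmb{V}}_i^{-1}\{\pmb{Y}_i-\widehat{\pmb{\mu}}_i\}=\pmb{0}$, by \eqref{eq:gamma_prof}. Equivalently, $\sum_i\pmb{g}_i(\pmb{\theta})=\sum_i\pmb{D}_i(\pmb{\theta},\widehat{\pmb{\gamma}}(\pmb{\theta}))^{\top}\widehat{\pmb{V}}_i(\pmb{\theta})^{-1}\{\pmb{Y}_i-\widehat{\pmb{\mu}}_i(\pmb{\theta})\}$. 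So do not bound those pieces separately.

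A minor point: $\pmb{g}_{i,0}$ depends on $K_n$. The CLT is therefore for a triangular array (Lyapunov, via the fourth moment), not Lindeberg--L\'evy.
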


\begin{remark}\label{rem:normal}
The covariance matrix depends on the working covariance through $\pmb{g}_{i,0}$. Thus, a well-chosen working correlation can improve efficiency, but correct specification is not required for consistency or asymptotic normality, matching the spirit of GEE. In practice, Wald inference based on estimating $\pmb{H}_0$ and $\pmb{S}_0$ can be sensitive in moderate samples, which motivates the BEL approach below that avoids explicit sandwich estimation.
\end{remark}

\begin{lemma}[Quadratic expansion of BEL]\label{lem:el_quad}
Under Assumptions~\ref{ass:sampling}--\ref{ass:H}, let $\bar{\pmb{g}}(\pmb{\theta}_0)=n^{-1}\sum_{i=1}^{n}\pmb{g}_i(\pmb{\theta}_0)$ and
$\pmb{S}_n(\pmb{\theta}_0)=n^{-1}\sum_{i=1}^{n}\pmb{g}_i(\pmb{\theta}_0)\pmb{g}_i(\pmb{\theta}_0)^{\top}$.
Then
\[
\ell(\pmb{\theta}_0)
=
n\,\bar{\pmb{g}}(\pmb{\theta}_0)^{\top}\pmb{S}_n(\pmb{\theta}_0)^{-1}\bar{\pmb{g}}(\pmb{\theta}_0)
+o_{\mathbb{P}}(1),
\]
where $\ell(\pmb{\theta})$ is defined in \eqref{eq:ell_theta}.
\end{lemma}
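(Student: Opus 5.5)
We follow Owen's standard argument, applied to the plug-in estimating functions $\pmb{g}_i(\pmb{\theta}_0)$. Three preliminary facts are needed:
\[
\text{(a) } \max_{1\le i\le n}\|\pmb{g}_i(\pmb{\theta}_0)\|_2=o_{\mathbb{P}}(n^{1/2}),\qquad
\text{(b) } \bar{\pmb{g}}(\pmb{\theta}_0)=O_{\mathbb{P}}(n^{-1/2}),\qquad
\text{(c) } \pmb{S}_n(\pmb{\theta}_0)\to_{\mathbb{P}}\pmb{S}_0>0.
\]
We establish them by comparing $\pmb{g}_i(\pmb{\theta}_0)$ with the i.i.d.\ population version $\pmb{g}_{i,0}(\pmb{\theta}_0)$.

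For the population version, Assumption~\ref{ass:moments} gives $\mathbb{E}\|\pmb{g}_{i,0}\|^4<\infty$. Hence $\max_i\|\pmb{g}_{i,0}\|=o_{\mathbb{P}}(n^{1/4})$ by a union bound, the law of large numbers gives $n^{-1}\sum_i\pmb{g}_{i,0}\pmb{g}_{i,0}^\top\to\pmb{S}_0$, and the CLT gives $n^{-1/2}\sum_i\pmb{g}_{i,0}(\pmb{\theta}_0)=O_{\mathbb{P}}(1)$. Note that $\mathbb{E}\,\pmb{g}_{i,0}(\pmb{\theta}_0)=\pmb{U}_0(\pmb{\theta}_0)=\pmb{0}$ by Assumption~\ref{ass:H}.

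For the plug-in error, we write $\pmb{g}_i-\pmb{g}_{i,0}$ as a Taylor expansion in $\widehat{\pmb{\gamma}}(\pmb{\theta}_0)-\pmb{\gamma}_0(\pmb{\theta}_0)$, together with the contribution from the derivative $\pmb{G}_i$ and from $\widehat{\pmb{\rho}}-\pmb{\rho}^\dagger$. We control these using the bounded design, Assumptions~\ref{ass:link} and \ref{ass:V}, and the spline rate underlying Theorem~\ref{thm:consistency}, namely $\sup_u|\widehat\eta-\eta_0|=O_{\mathbb{P}}(K^{-s}+\sqrt{K/n})=o_{\mathbb{P}}(1)$. This gives $\max_i\|\pmb{g}_i-\pmb{g}_{i,0}\|=o_{\mathbb{P}}(1)$ uniformly, which yields (a) and (c).

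For (b) a uniform bound is not enough, and this is the main obstacle. We need
\[
n^{-1/2}\sum_i\{\pmb{g}_i(\pmb{\theta}_0)-\pmb{g}_{i,0}(\pmb{\theta}_0)\}=O_{\mathbb{P}}(1),
\]
and in fact $o_{\mathbb{P}}(1)$. Here we invoke the profile-orthogonality already used for Theorem~\ref{thm:normal}. First, the first-order effect of perturbing $\pmb{\gamma}$ in $\sum_i\pmb{g}_i$ is a linear functional along the spline space. Second, $\pmb{G}_i$ is the total derivative, so its $\pmb{\theta}$-score is orthogonal to that space in the $\pmb{V}^{-1}$ metric. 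Together these make the linear term vanish up to the sieve bias, which is $O(\sqrt n K^{-s})=o(1)$ after scaling. The quadratic remainder is then $O_{\mathbb{P}}(\sqrt n(K/n+K^{-2s}))=o_{\mathbb{P}}(1)$ by Assumption~\ref{ass:K}, since $K^2/n\to0$. Equivalently, one can read (b) off the influence representation in Theorem~\ref{thm:normal}, which was established under the same assumptions and implicitly contains this negligibility.

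With (a)--(c) in hand, the EL algebra is routine. Write $\pmb{\lambda}=\rho\pmb{e}$ with $\|\pmb{e}\|=1$. From \eqref{eq:lambda_eq},
\[
0=\pmb{e}^\top\bar{\pmb{g}}-\rho\,\pmb{e}^\top\tilde{\pmb{S}}\pmb{e},\qquad
\tilde{\pmb{S}}=n^{-1}\sum_i\frac{\pmb{g}_i\pmb{g}_i^\top}{1+\pmb{\lambda}^\top\pmb{g}_i},
\]
and $1+\pmb{\lambda}^\top\pmb{g}_i\le1+\rho\max_i\|\pmb{g}_i\|$. Together with (c) this yields
\[
\rho\,\{1+\rho\max_i\|\pmb{g}_i\|\}^{-1}\le O_{\mathbb{P}}(n^{-1/2}),
\]
so by (a) we get $\|\pmb{\lambda}\|=O_{\mathbb{P}}(n^{-1/2})$ and $\max_i|\pmb{\lambda}^\top\pmb{g}_i|=o_{\mathbb{P}}(1)$.

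Expanding $1/(1+x)=1-x+x^2/(1+x)$ in \eqref{eq:lambda_eq} gives $\pmb{\lambda}=\pmb{S}_n^{-1}\bar{\pmb{g}}+\pmb{\zeta}$. The remainder satisfies
\[
\|\pmb{\zeta}\|\le n^{-1}\sum_i\|\pmb{g}_i\|^3\|\pmb{\lambda}\|^2\,O_{\mathbb{P}}(1)=o_{\mathbb{P}}(n^{-1/2}),
\]
using $n^{-1}\sum_i\|\pmb{g}_i\|^3\le\max_i\|\pmb{g}_i\|\cdot\mathrm{tr}(\pmb{S}_n)=o_{\mathbb{P}}(n^{1/2})$. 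Finally we expand $\log(1+x)=x-x^2/2+r$ with $|r|\le C|x|^3$ in \eqref{eq:ell_theta}. This gives
\[
\ell(\pmb{\theta}_0)=2n\pmb{\lambda}^\top\bar{\pmb{g}}-n\pmb{\lambda}^\top\pmb{S}_n\pmb{\lambda}+o_{\mathbb{P}}(1).
\]
Substituting the expansion of $\pmb{\lambda}$ yields $n\bar{\pmb{g}}^\top\pmb{S}_n^{-1}\bar{\pmb{g}}-n\pmb{\zeta}^\top\pmb{S}_n\pmb{\zeta}+o_{\mathbb{P}}(1)$, and the $\pmb{\zeta}$ term is $o_{\mathbb{P}}(1)$, which proves the claim.
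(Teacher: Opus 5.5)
Your proposal is correct at the paper's level of rigour and uses the same skeleton as the paper. Both arguments secure $\bar{\pmb g}=O_{\mathbb P}(n^{-1/2})$, $\pmb S_n\to\pmb S_0>0$ and a bound on $\max_i\|\pmb g_i\|$, then show $\pmb\lambda=\pmb S_n^{-1}\bar{\pmb g}+o_{\mathbb P}(n^{-1/2})$ and expand $\log(1+x)$ to second order. Where the two differ, your version is the sounder one.

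\begin{itemize}
\item \emph{The multiplier bound.} Lemma~\ref{lem:lambda_expand} obtains the multiplier as a Newton--Kantorovich root in a ball of radius $cn^{-1/2}$. It asserts $\max_i\|\pmb g_i\|_2=O_{\mathbb P}(1)$, which does not follow from a fourth moment and fails for unbounded responses such as the Gaussian and Poisson cases. Your Owen-type argument with $\pmb\lambda=\rho\pmb e$ needs only $\max_i\|\pmb g_i\|=o_{\mathbb P}(n^{1/2})$, and it controls whatever solution exists. You should state that existence follows from your (a)--(c) by the usual convex-hull argument; as written you use it implicitly.
\item \emph{The bound on $\bar{\pmb g}$.} The paper gets $\bar{\pmb g}=O_{\mathbb P}(n^{-1/2})$ from Lemma~\ref{lem:U_linear}. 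That proof treats the uniform plug-in error $O_{\mathbb P}(\sqrt{K/n}+K^{-s})$ as $o_{\mathbb P}(n^{-1/2})$. This is wrong, because $\sqrt n\cdot\sqrt{K/n}=\sqrt K\to\infty$. You are right that an orthogonality argument for the total-derivative score is what is actually needed there.
\end{itemize}

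One step of your sketch is too quick, although the paper has the same gap (it simply asserts $\pmb S_n\to\pmb S_0$). The sup-norm rate of $\widehat\eta$ does not by itself give $\max_i\|\pmb g_i-\pmb g_{i,0}\|=o_{\mathbb P}(1)$. The matrix $\pmb G_i$ contains $\dot{\widehat\eta}$ in its $\pmb\varphi$-columns and the implicit derivative $d\widehat{\pmb\gamma}(\pmb\theta)/d\pmb\theta^{\top}$. Both multiply the unbounded residual $\pmb Y_i-\widehat{\pmb\mu}_i$, so a uniform-in-$i$ $o_{\mathbb P}(1)$ bound is not available in general.

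What (a) and (c) actually require is a bound $\|\pmb g_i-\pmb g_{i,0}\|\le\epsilon_n(1+\|\pmb Y_i-\pmb\mu_{i,0}\|)$ with $\epsilon_n=o_{\mathbb P}(1)$. Here $\epsilon_n$ includes the spline-derivative error, which is of order $K^{1-s}+\sqrt{K^3/n}$ by the Markov inequality for splines. Strictly speaking, then, you need a growth condition such as $K^3/n\to0$, which is stronger than Assumption~\ref{ass:K}.
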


\begin{remark}\label{rem:el_quad}
Lemma~\ref{lem:el_quad} is the technical core of Wilks-type results for empirical likelihood:
it reduces the BEL statistic to a self-normalized quadratic form.
Compared to i.i.d.\ EL, the novelty here is that $\pmb{g}_i(\pmb{\theta})$ aggregates a dependent within-subject vector and involves a profiled nonparametric estimator.
Assumptions~\ref{ass:sampling} and \ref{ass:K} ensure these extra layers only contribute $o_{\mathbb{P}}(1)$ to the EL expansion.
\end{remark}

\begin{theorem}[Wilks phenomenon for BEL]\label{thm:wilks}
Under Assumptions~\ref{ass:sampling}--\ref{ass:H},
\[
\ell(\pmb{\theta}_0)\ \Rightarrow\ \chi^{2}_{d}.
\]
More generally, let $\pmb{\theta}=(\pmb{\theta}_1^{\top},\pmb{\theta}_2^{\top})^{\top}$ with $\dim(\pmb{\theta}_1)=r$.
Define the profile BEL statistic
\[
\ell_{\mathrm{prof}}(\pmb{\theta}_1)
=
\min_{\pmb{\theta}_2}\ \ell(\pmb{\theta}_1,\pmb{\theta}_2).
\]
Then $\ell_{\mathrm{prof}}(\pmb{\theta}_{1,0})\Rightarrow \chi^2_{r}$.
\end{theorem}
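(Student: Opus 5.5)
The proof has two parts, matching the two claims in Theorem~\ref{thm:wilks}.

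\textbf{Full-vector statement.} We start from Lemma~\ref{lem:el_quad}, which already gives
\[
\ell(\pmb{\theta}_0)=n\,\bar{\pmb{g}}(\pmb{\theta}_0)^{\top}\pmb{S}_n(\pmb{\theta}_0)^{-1}\bar{\pmb{g}}(\pmb{\theta}_0)+o_{\mathbb{P}}(1).
\]
It then suffices to establish two facts:
\begin{enumerate}[label=(\roman*)]
\item $n^{-1/2}\sum_i \pmb{g}_i(\pmb{\theta}_0)=n^{-1/2}\sum_i \pmb{g}_{i,0}(\pmb{\theta}_0)+o_{\mathbb{P}}(1)$;
\item $\pmb{S}_n(\pmb{\theta}_0)\to_{\mathbb{P}}\pmb{S}_0$, with $\pmb{S}_0$ positive definite.
\end{enumerate}

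For (i), expand $\pmb{g}_i$ around $\widehat{\pmb{\gamma}}=\pmb{\gamma}_0(\pmb{\theta}_0)$ (and $\widehat{\pmb{\rho}}$ around $\pmb{\rho}^\dagger$). The leading term of the expansion is
\[
n^{-1}\sum_i \frac{\partial \pmb{g}_{i,0}}{\partial\pmb{\gamma}^\top}\,(\widehat{\pmb{\gamma}}-\pmb{\gamma}_0).
\]
Its population version vanishes by the profile-orthogonality built into \eqref{eq:gamma_pop}: $\pmb{G}_{i,0}$ is the total derivative, so its columns are residualized against the span of $\pmb{\Delta}_i\pmb{B}_i$ in the $\pmb{V}_i^{-1}$ inner product. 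The centered fluctuation of this term is of order $\sqrt{K/n}\cdot\|\widehat{\pmb{\gamma}}-\pmb{\gamma}_0\|=O_{\mathbb{P}}(K/n)$, which is $o(n^{-1/2})$ by $K^2/n\to0$. The quadratic remainder is $O_{\mathbb{P}}(K/n+K^{-2s})$, again $o(n^{-1/2})$ by \eqref{eq:K_growth}. The sieve bias $\pmb{\mu}_{i,0}(\pmb{\theta}_0)-\pmb{\mu}_i^{\mathrm{true}}=O(K^{-s})$ contributes $\sqrt n K^{-s}\to0$. For the $\widehat{\pmb{\rho}}$ perturbation, its derivative multiplies $\pmb{Y}_i-\pmb{\mu}_{i}$, which has conditional mean zero up to the sieve bias, so that term is also $o_{\mathbb{P}}(n^{-1/2})$.

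Given (i), the independence across subjects in Assumption~\ref{ass:sampling}, together with the finite second (indeed fourth) moment in Assumption~\ref{ass:moments} and $\mathbb{E}\,\pmb{g}_{i,0}(\pmb{\theta}_0)=o(n^{-1/2})$, gives
\[
\sqrt n\,\bar{\pmb{g}}(\pmb{\theta}_0)\Rightarrow\mathcal{N}(\pmb{0},\pmb{S}_0)
\]
by the Lindeberg CLT. For (ii), the same uniform expansion plus the weak law of large numbers gives $\pmb{S}_n\to_{\mathbb{P}}\pmb{S}_0$. Positive definiteness of $\pmb{S}_0$ will be taken from Assumptions~\ref{ass:V}--\ref{ass:H}: $\pmb{V}_i$ is nondegenerate and $\pmb{G}_{i,0}$ has full column rank because $\pmb{H}_0$ is nonsingular. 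Slutsky and the continuous mapping theorem then give $\ell(\pmb{\theta}_0)\Rightarrow\chi^2_d$.

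\textbf{Profile statistic.} We follow the Qin--Lawless argument.
\begin{enumerate}[label=(\alph*)]
\item Extend Lemma~\ref{lem:el_quad} to hold uniformly over the shrinking neighborhood $\|\pmb{\theta}-\pmb{\theta}_0\|\le Cn^{-1/2}$. Using Theorem~\ref{thm:normal}'s linearization
\[
\bar{\pmb{g}}(\pmb{\theta})=\bar{\pmb{g}}(\pmb{\theta}_0)+\pmb{H}_0(\pmb{\theta}-\pmb{\theta}_0)+o_{\mathbb{P}}(n^{-1/2})
\]
uniformly there, together with $\pmb{S}_n(\pmb{\theta})\to\pmb{S}_0$ uniformly, this yields
\[
\ell(\pmb{\theta})=n\{\bar{\pmb{g}}_0+\pmb{H}_0(\pmb{\theta}-\pmb{\theta}_0)\}^\top\pmb{S}_0^{-1}\{\cdot\}+o_{\mathbb{P}}(1).
\]
\item Show that the minimizer over $\pmb{\theta}_2$ lies in this neighborhood with probability tending to one. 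The argument is the usual one: $\ell$ is bounded below by $c\,n\|\pmb{\theta}-\pmb{\theta}_0\|^2$ on the sphere of radius $Cn^{-1/2}$, while $\ell(\pmb{\theta}_{1,0},\pmb{\theta}_{2,0})=O_{\mathbb{P}}(1)$.
\item Minimize the quadratic in $\pmb{\theta}_2$ explicitly. Writing $\pmb{H}_0=(\pmb{H}_{01},\pmb{H}_{02})$ and $\pmb{Z}=\pmb{S}_0^{-1/2}\sqrt n\,\bar{\pmb{g}}_0\sim\mathcal{N}(\pmb{0},\pmb{I}_d)$, the minimum equals $\pmb{Z}^\top(\pmb{I}-\pmb{P})\pmb{Z}$. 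Here $\pmb{P}$ is the projection onto the column space of $\pmb{S}_0^{-1/2}\pmb{H}_{02}$, which has rank $d-r$ by nonsingularity of $\pmb{H}_0$. Hence the limit is $\chi^2_r$.
\end{enumerate}

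\textbf{Main obstacle.} The hardest step is (i) and its uniform-in-$\pmb{\theta}$ version: showing that the profiled spline error enters $\bar{\pmb{g}}$ only at order $o_{\mathbb{P}}(n^{-1/2})$. The natural tool is an implicit-function expansion of \eqref{eq:gamma_prof}, representing $\widehat{\pmb{\gamma}}(\pmb{\theta})-\pmb{\gamma}_0(\pmb{\theta})$ as a linear projection of the residuals plus a remainder. The cross term with $\pmb{G}_{i,0}$ is then handled by the orthogonality identity $\mathbb{E}[\pmb{G}_{i,0}^\top\pmb{V}_{i,0}^{-1}\pmb{\Delta}_i\pmb{B}_i]=\pmb{0}$ at $\pmb{\theta}_0$. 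What remains is to bound the empirical-process fluctuation using the $K^2/n\to0$ rate, with spline eigenvalue bounds coming from Assumptions~\ref{ass:index_density} and \ref{ass:K}. The same expansion is also what underlies Theorem~\ref{thm:normal}, so it can be quoted in streamlined form.
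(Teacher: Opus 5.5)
Your proposal has the same skeleton as the paper's proof. Both start from Lemma~\ref{lem:el_quad}, then establish a CLT for $\sqrt{n}\,\bar{\pmb{g}}(\pmb{\theta}_0)$ and $\pmb{S}_n(\pmb{\theta}_0)\to\pmb{S}_0$. For the profile statistic, both minimize a uniform quadratic approximation over $\pmb{\theta}_2$; your $\pmb{Z}^{\top}(\pmb{I}-\pmb{P})\pmb{Z}$ is the paper's Schur-complement step.

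You differ on the key input. The paper imports $\sqrt{n}\,\bar{\pmb{g}}(\pmb{\theta}_0)\Rightarrow\mathcal{N}(\pmb{0},\pmb{S}_0)$ from Lemma~\ref{lem:U_linear}. That proof bounds the plug-in error $\sup_{\pmb{\theta}}\|\pmb{U}_n(\pmb{\theta})-\pmb{U}_{n,0}(\pmb{\theta})\|_2$ by $O_{\mathbb{P}}(\sqrt{K/n}+K^{-s})$ and then calls this $o_{\mathbb{P}}(n^{-1/2})$. That implication fails, since $\sqrt{n}\cdot\sqrt{K/n}=\sqrt{K}\to\infty$.

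You instead remove the first-order nuisance term with the profile orthogonality $\mathbb{E}[\pmb{G}_{i,0}^{\top}\pmb{V}_{i,0}^{-1}\pmb{\Delta}_i\pmb{B}_i]\approx\pmb{0}$ at $\pmb{\theta}_0$. This follows by differentiating \eqref{eq:gamma_pop} in $\pmb{\theta}$ and is exact up to sieve-bias terms. You then bound the centered remainder by $O_{\mathbb{P}}(K/n)$, which is exactly where $K^2/n\to 0$ in Assumption~\ref{ass:K} is needed. You also account for $\mathbb{E}\,\pmb{g}_{i,0}(\pmb{\theta}_0)=O(K^{-s})\neq\pmb{0}$, which the paper treats as exactly zero.

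In the profile part, your Qin--Lawless localization replaces the paper's claim that $\ell(\pmb{\theta})$ is locally convex. That claim is not justified: the dual is concave in $\pmb{\lambda}$, which says nothing about convexity in $\pmb{\theta}$. So your route supplies the orthogonality argument the paper's introduction promises but its appendix never carries out. The price is that for step (a) you must redo the orthogonality expansion uniformly over the $n^{-1/2}$-ball, rather than quote Lemma~\ref{lem:U_linear}.

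Three points to tighten.

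(1) Through $\pmb{G}_i$, each $\pmb{g}_i$ depends on $\partial\widehat{\pmb{\gamma}}/\partial\pmb{\theta}^{\top}$, not only on $\widehat{\pmb{\gamma}}$. For $\bar{\pmb{g}}$ this is harmless: that contribution equals $(\partial\widehat{\pmb{\gamma}}/\partial\pmb{\theta}^{\top})^{\top}$ times the left-hand side of \eqref{eq:gamma_prof}, so it vanishes identically. But $\pmb{S}_n\to\pmb{S}_0$ requires consistency of $\partial\widehat{\pmb{\gamma}}/\partial\pmb{\theta}^{\top}$, which you should derive from the implicit-function representation of \eqref{eq:gamma_prof}.

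(2) Positive definiteness of $\pmb{S}_0$ does not follow from nondegeneracy of the \emph{working} $\pmb{V}_i$. You need the true conditional covariance of $\pmb{Y}_i$ to be bounded below; the paper also uses this tacitly.

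(3) Because $\ell$ is not convex in $\pmb{\theta}$, the sphere bound in step (b) only yields a local minimizer inside the $Cn^{-1/2}$-ball. To identify it with $\min_{\pmb{\theta}_2}\ell$, you need a lower bound on $\ell$ over the annulus out to a fixed radius. One way is $\ell(\pmb{\theta})\ge 2\sum_i\log\{1+\pmb{\lambda}^{\top}\pmb{g}_i(\pmb{\theta})\}$ with $\pmb{\lambda}$ chosen along $\bar{\pmb{g}}(\pmb{\theta})$, together with $\|\bar{\pmb{g}}(\pmb{\theta})\|_2\ge c\|\pmb{\theta}-\pmb{\theta}_0\|_2$ there. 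Beyond that radius you need a global identification condition that Assumption~\ref{ass:H} does not supply. As in the paper, the profile result should be read with the minimum taken over a neighborhood of $\pmb{\theta}_{2,0}$.
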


\begin{remark}\label{rem:wilks}
Theorem~\ref{thm:wilks} justifies BEL confidence regions of the form
$\{\pmb{\theta}: \ell(\pmb{\theta})\le \chi^2_{d,1-\alpha}\}$ without explicitly estimating a sandwich covariance,
providing an ``automatic studentization'' effect familiar in EL theory \citep{owen2001empirical,Kolaczyk1994empirical}.
The result holds under arbitrary within-subject dependence because inference is built on   blocks, and it remains valid in semiparametric settings because the nuisance estimation error is controlled by Assumption~\ref{ass:K}. The profile version yields chi-square limits for marginal inference, analogous to profile likelihood.
\end{remark}
\section{Simulation Studies}\label{sec:sim}
This section evaluates the finite-sample performance of the proposed profile block empirical likelihood (Profile-BEL) inference for the finite-dimensional parameter $\pmb{\theta}=(\pmb{\beta}^{\top},\pmb{\varphi}^{\top})^{\top}$, together with the spline-sieve estimator $\widehat{\eta}(\cdot)$ for the unknown link.

We focus on three outcome types (Gaussian, Bernoulli, and Poisson) and systematically vary the strength of within-subject dependence to examine both estimation accuracy and inferential validity under longitudinal correlation, in the spirit of GEE benchmarking \citep{liang1986longitudinal,diggle2002analysis,qu2000improving}.

\subsection{Data Generation}\label{ssec:sim_dgm}

For each Monte Carlo replication, we generate independent subjects $i=1,\dots,n$ with bounded cluster size $m_i\equiv m$ (default $m=5$) and consider two sample sizes $n\in\{100,200\}$. Let $(p,q)=(3,3)$ and set the true coefficients
\[
\pmb{\beta}_0=(1,-1,0.5)^{\top},
\qquad
\pmb{\alpha}_0=\frac{(1,1,1)^{\top}}{\sqrt{3}},
\qquad
\pmb{\theta}_0=(\pmb{\beta}_0^{\top},\pmb{\varphi}_0^{\top})^{\top},
\]
where $\pmb{\varphi}_0$ is the $(q-1)$-dimensional parameter in \eqref{eq:alpha_map} corresponding to $\pmb{\alpha}_0$.

For each subject $i$ and visit $j$, generate $(\pmb{x}_{ij}^{\top},\pmb{z}_{ij}^{\top})^{\top}\in\mathbb{R}^{p+q}$ from a centered Gaussian distribution with unit variances and Toeplitz correlation $\mathrm{Corr}(W_k,W_{\ell})=\kappa^{|k-\ell|}$, with $\kappa\in\{0,0.3\}$ to represent weak/moderate collinearity. We also consider a sensitivity experiment with heavier tails in Section~\ref{ssec:sim_sens}.

Define the index $u_{ij,0}=\pmb{z}_{ij}^{\top}\pmb{\alpha}_0$ and rescale it to $[0,1]$ by
$\tilde u_{ij,0}=\{u_{ij,0}-\min(u_{ij,0})\}/\{\max(u_{ij,0})-\min(u_{ij,0})\}$ within each replication. Set
\begin{equation}\label{eq:sim_eta}
\eta_0(u)=\sin(2\pi u),
\end{equation}
so that $\eta_0$ is nonlinear and smooth but not polynomial, making it informative for spline-sieve approximation \citep{huang2004polynomial,he2000parameters}.

To induce longitudinal correlation while keeping subjects independent, generate a latent Gaussian vector $\pmb{b}_i=(b_{i1},\dots,b_{im})^{\top}\sim \mathcal{N}(\pmb{0},\sigma_b^2\pmb{R}_{\mathrm{AR}}(\rho))$, where $\pmb{R}_{\mathrm{AR}}(\rho)$ is the AR(1) correlation matrix with entries $\rho^{|j-k|}$. We vary $\rho\in\{0,0.3,0.6\}$ and fix $\sigma_b=0.6$ (moderate   dependence). The latent effect enters the conditional linear predictor, which is a standard simulation device for correlated non-Gaussian outcomes.

Let the systematic component be
\[
\xi_{ij,0}=\pmb{x}_{ij}^{\top}\pmb{\beta}_0+\eta_0(\tilde u_{ij,0}).
\]
We generate outcomes from three families:
\begin{itemize}[leftmargin=1.25em]
\item \textbf{Gaussian:} $Y_{ij}=\xi_{ij,0}+b_{ij}+\varepsilon_{ij}$ with $\varepsilon_{ij}\stackrel{\text{i.i.d.}}{\sim}\mathcal{N}(0,\sigma_{\varepsilon}^2)$ and $\sigma_{\varepsilon}=1$.
\item \textbf{Bernoulli:} $Y_{ij}\mid b_{ij}\sim \mathrm{Bernoulli}(\pi_{ij})$ with $\mathrm{logit}(\pi_{ij})=\xi_{ij,0}+b_{ij}$.
\item \textbf{Poisson:} $Y_{ij}\mid b_{ij}\sim \mathrm{Poisson}(\lambda_{ij})$ with $\log(\lambda_{ij})=\xi_{ij,0}+b_{ij}$.
\end{itemize}
The analysis model in Section~\ref{sec:method} targets the marginal mean structure \eqref{eq:gplsim}; thus, in Bernoulli/Poisson cases the latent effect also serves as a deliberate mild misspecification to probe robustness of estimating-equation inference.

\subsection{Implementation}\label{ssec:sim_impl}

We compare the proposed BEL-based inference with several competing approaches that are routinely used for semiparametric longitudinal models.

\begin{itemize}[leftmargin=1.25em]
\item \textbf{Profile-BEL.}
Let $\psi_i(\theta,\alpha,\eta)$ be the   estimating function and define the BEL ratio $\ell(\theta)=2\sup_{\lambda}\sum_{i=1}^n \log\{1+\lambda^\top\psi_i(\theta,\widehat\alpha(\theta),\widehat\eta(\theta))\}$,
where $(\widehat\alpha(\theta),\widehat\eta(\theta))$ are obtained by iteratively updating $\alpha$ (and $\eta$) conditional on $\theta$ until convergence. A $95\%$ CI for a component of $\theta$ is $\{\theta_j:\ \ell(\theta)\le \chi^2_{1,0.95}\}$; pointwise bands for $\eta_0(\cdot)$ are built by   bootstrap.

\item \textbf{Naive-EL.} This is an independence-based EL that treats all $(i,j)$ as i.i.d. and uses observation-level estimating equations $g_{ij}(\theta)$: $\ell_{\mathrm{ind}}(\theta)=2\sup_{\lambda}\sum_{i,j}\log\{1+\lambda^\top g_{ij}(\theta)\}$. CIs are obtained by $\{\theta_j:\ \ell_{\mathrm{ind}}(\theta)\le \chi^2_{1,0.95}\}$, ignoring within-subject correlation in both estimation and inference.

\item \textbf{GEE-Wald.} Point estimator $\widehat\theta$ solving a GEE-type block equation $\sum_{i=1}^n \psi_i(\theta,\alpha,\eta)=0$ under a chosen working correlation, and form Wald CIs $\widehat\theta_j \pm 1.96\sqrt{\widehat{\text{Var}}(\widehat\theta)_{jj}}$, with $\widehat{\text{Var}}(\widehat\theta)$ given by the plug-in sandwich estimator.

\item \textbf{GEE-Poly.}
We approximate the nonparametric component by a low-order polynomial $\eta(t)\approx \sum_{\ell=0}^{d} c_\ell t^\ell$ and fit the resulting parametric model by GEE. Inference for $\theta$ is again Wald-type using the sandwich covariance under the same working-correlation options.
\end{itemize}

Unless otherwise stated, we use cubic $B$-splines on $\mathcal{U}=[0,1]$ with quasi-uniform knots. The sieve dimension is selected from $K\in\{6,8,10,12\}$ by a BIC-type criterion based on the working quasi-likelihood (Gaussian) or binomial/Poisson deviance (non-Gaussian). For the working correlation in \eqref{eq:Vi}, we fit under three choices:
\[
\pmb{R}_i=\pmb{I}\ \ (\text{independence}),\qquad
\pmb{R}_i=\pmb{R}_{\mathrm{EX}}(\rho)\ \ (\text{exchangeable}),\qquad
\pmb{R}_i=\pmb{R}_{\mathrm{AR}}(\rho)\ \ (\text{AR(1)}),
\]
to assess efficiency gains and sensitivity to misspecification. The default reported results use AR(1) working correlation with $\rho$ estimated by moment methods as in standard GEE implementations.

We run $B=200$ Monte Carlo replications for each configuration. For the $\eta_0(\cdot)$ bands, we use a   bootstrap with $B^{\ast}=200$ resamples per replication (resampling entire subjects to preserve dependence). All optimizations use the profile iteration in Algorithm~\ref{alg:profile_fit}; the BEL multiplier \eqref{eq:lambda_eq} is solved by Newton's method as in Section~\ref{sec:alg}.

\subsection{Performance Measures}\label{ssec:sim_metrics}

We evaluate finite-sample performance from three complementary perspectives: estimation accuracy for the finite-dimensional parameters, recovery quality for the nonparametric link, and calibration/efficiency of the resulting uncertainty quantification. In what follows, $\widehat{\pmb{\alpha}}=\pmb{\alpha}(\widehat{\pmb{\varphi}})$ and $B$ denotes the number of Monte Carlo replications.

For each scalar component $\theta_k$ of $\pmb{\theta}$, we report the empirical bias and root mean squared error (RMSE),
\[
\mathrm{Bias}(\widehat{\theta}_k)
=
\frac{1}{B}\sum_{b=1}^{B}\left(\widehat{\theta}_k^{(b)}-\theta_{0,k}\right),
\qquad
\mathrm{RMSE}(\widehat{\theta}_k)
=
\left\{\frac{1}{B}\sum_{b=1}^{B}\left(\widehat{\theta}_k^{(b)}-\theta_{0,k}\right)^2\right\}^{1/2},
\]
where $\widehat{\theta}_k^{(b)}$ is the estimate from the $b$-th replication. To assess recovery of the index direction, we additionally report the angle error
\[
\mathrm{Ang}(\widehat{\pmb{\alpha}},\pmb{\alpha}_0)
=
\arccos\!\left(\left|\widehat{\pmb{\alpha}}^{\top}\pmb{\alpha}_0\right|\right),
\]
which is invariant to sign changes in $\widehat{\pmb{\alpha}}$.

To quantify accuracy of the estimated link function, we evaluate $\widehat{\eta}$ on a dense grid $\{u_{\ell}\}_{\ell=1}^{L}$ over $[0,1]$ (default $L=200$) and compute the integrated squared error
\[
\mathrm{ISE}(\widehat{\eta})
=
\frac{1}{L}\sum_{\ell=1}^{L}\left\{\widehat{\eta}(u_{\ell})-\eta_0(u_{\ell})\right\}^2.
\]
We summarize the distribution of $\mathrm{ISE}(\widehat{\eta})$ across replications by its mean and selected quantiles, which provides a stable picture of the bias--variance trade-off induced by the spline sieve.

For inferential performance, we focus on two standard metrics for nominal $95\%$ confidence intervals: empirical coverage probability and average interval length. For each targeted scalar parameter, let $\mathrm{CI}_k^{(b)}$ be the interval produced in replication $b$. We compute
\[
\mathrm{Cover}
=
\frac{1}{B}\sum_{b=1}^{B}\mathbf{1}\{\theta_{0,k}\in \mathrm{CI}_k^{(b)}\},
\qquad
\mathrm{Len}
=
\frac{1}{B}\sum_{b=1}^{B}\mathrm{length}(\mathrm{CI}_k^{(b)}).
\]
For the nonparametric component, we report pointwise coverage on the grid for the bootstrap bands of $\eta_0(\cdot)$, and (when included) simultaneous coverage based on the supremum deviation used to construct uniform bands.

\subsection{Main Results}\label{ssec:sim_results}

We summarize the simulation findings in three parts: estimation accuracy for the finite-dimensional component, inferential validity for $\pmb{\theta}$, and recovery of the nonparametric link $\eta_0(\cdot)$. We additionally examine three working correlation structures to assess correlation sensitivity and the spline dimension $K$ is selected by the criterion described in Section~\ref{ssec:sim_impl}.

Tables~\ref{tab:sim_rmse_gauss}--\ref{tab:sim_rmse_pois} report RMSE for the components of $\pmb{\beta}$, the angle error for the index direction $\pmb{\alpha}$, and the integrated squared error for $\widehat{\eta}$. Across all three outcome families and a wide range of within-subject dependence levels, Profile-BEL delivers the most accurate recovery of the index direction and the nonparametric link. In particular, Profile-BEL consistently attains the smallest (or near-smallest) $\mathrm{ISE}(\widehat{\eta})$ and the smallest angular error $\mathrm{Ang}(\widehat{\boldsymbol{\alpha}},\boldsymbol{\alpha}_0)$, while maintaining competitive RMSEs for $\widehat{\boldsymbol{\beta}}$. This advantage is most pronounced in the moderate-to-strong dependence regimes, where methods that either ignore correlation (Naive-EL) or rely on Wald-type plug-in variance formulas (GEE-Wald) tend to exhibit noticeably larger errors in $\widehat{\boldsymbol{\alpha}}$ and $\widehat{\eta}$.

The contrast with the polynomial working correlation approach is particularly clear. Although GEE-Poly can yield reasonable estimates in some configurations, it is far more sensitive to model misspecification of $\eta(\cdot)$: the polynomial restriction may lead to inflated errors for $\widehat{\eta}$ and, consequently, degraded estimation of the index direction. In comparison, Profile-BEL avoids imposing a rigid parametric form on $\eta(\cdot)$ and updates $(\boldsymbol{\alpha},\eta)$ iteratively, which translates into more stable and accurate recovery of the single-index component.

The nonparametric component exhibits the expected spline bias--variance trade-off. The ISE decreases as $n$ increases, and the median selected $K$ remains in a moderate range, indicating that the selection procedure avoids severe underfitting or overfitting across scenarios. In settings with stronger dependence, the ISE can increase modestly, reflecting reduced effective information per subject when observations are more redundant. Overall, $\widehat{\eta}$ tracks the oscillatory shape of $\eta_0$ well, and the improvement with larger $n$ is consistent with the rate statement in Theorem~\ref{thm:consistency}.

Table~\ref{tab:sim_cover_main} evaluates inference quality by reporting empirical coverage and average interval length for nominal $95\%$ CIs of selected parameters under different working correlation choices. Overall, Profile-BEL provides reliable inference with favorable length--coverage trade-offs: its intervals remain comparatively short while achieving coverage closer to the nominal level for the well-identified components (notably for $\beta_2$ and $\alpha^2$ in challenging scenarios). In contrast, GEE-Wald intervals can be noticeably wider and more sensitive to the assumed working correlation, and GEE-Poly may produce unstable inference when the polynomial restriction poorly matches the true link.

Finally, Figures~\ref{fig1}-\ref{fig2} visualize the estimated link function and its pointwise bootstrap band under Bernoulli and Guassian settings. The Profile-BEL estimate closely tracks the true curve, and the bootstrap band provides an appropriate uncertainty envelope across the index range, offering a clear graphical confirmation of the improved $\eta(\cdot)$ recovery suggested by the ISE summaries.

Taken together, the simulation evidence supports two main conclusions. First, the proposed profile estimator yields accurate finite-dimensional estimation and reliable recovery of the index direction under a range of within-subject dependence structures. Second, the Profile-BEL approach delivers well-calibrated confidence intervals for $\pmb{\theta}$ in finite samples, while the bootstrap bands provide a practical and interpretable uncertainty assessment for $\eta_0(\cdot)$.

\begin{sidewaystable}[!]
\centering
\caption{Gaussian Case}\label{tab:sim_rmse_gauss}
\resizebox{\textwidth}{!}{
\begin{tabular}{cclcccccccccccccccccccccccccccc}
\toprule
\multirow{2}*{$n$} & \multirow{2}*{$\rho$} 
& \multirow{2}*{Method} & \multicolumn{3}{c}{$\mathrm{RMSE}(\widehat{\beta}_1)$} & & \multicolumn{3}{c}{$\mathrm{RMSE}(\widehat{\beta}_2)$} && \multicolumn{3}{c}{$\mathrm{RMSE}(\widehat{\beta}_3)$} & & \multicolumn{3}{c}{$\mathrm{Ang}(\widehat{\pmb{\alpha}},\pmb{\alpha}_0)$} && \multicolumn{3}{c}{$\mathrm{ISE}(\widehat{\eta})$} && \multirow{2}*{$K$(med)}\\
\cline{4-6}\cline{8-10}\cline{12-14}\cline{16-18}\cline{20-22}
&&&Ind.&Exc.&AR(1)&&Ind.&Exc.&AR(1)&&Ind.&Exc.&AR(1)&&Ind.&Exc.&AR(1)&&Ind.&Exc.&AR(1)\\
\midrule
\multirow{4}*{100} & \multirow{4}*{0.0}
& Profile-BEL & 0.1596 & 0.1596 & 0.1596 & & 0.0562 & 0.0562 & 0.0562&& 0.0562 & 0.0562 & 0.0562&& 0.1305 & 0.1305 & 0.1305&& 0.1117 & 0.1117 & 0.1117&& 6 \\
& & Naive-EL  & 0.1597 & 0.1598 & 0.1597 & & 0.0568 & 0.0568 & 0.0568&& 0.0822 & 0.0822 & 0.0822&& 0.4496 & 0.4496 & 0.4496&& 0.1716 & 0.1716 & 0.1716&& 8 \\
& & GEE-Wald  & 0.1598 & 0.1597 & 0.1598 & & 0.0573 & 0.0573 & 0.0573&& 0.0895 & 0.0895 & 0.0895&& 0.5172 & 0.5172 & 0.5172&& 0.2309 & 0.2309 & 0.2309&& 12 \\
& & GEE-Poly  & 1.6665 & 1.6649 & 1.6676 & & 0.0583 & 0.0583 & 0.0584&& 0.0957 & 0.0959 & 0.0958&& 0.6184 & 0.6184 & 0.6184&& 0.3877 & 0.3879 & 0.3876&& 2 \\
\hline\addlinespace
\multirow{4}*{100} & \multirow{4}*{0.3}
& Profile-BEL & 0.1611 & 0.1611 & 0.1611 & & 0.0563 & 0.0563 & 0.0563&& 0.0573 & 0.0573 & 0.0573&& 0.1299 & 0.1299 & 0.1299&& 0.1120 & 0.1120 & 0.1120&& 6 \\
& & Naive-EL  & 0.1612 & 0.1612 & 0.1612 & & 0.0563 & 0.0563 & 0.0563&& 0.0859 & 0.0859 & 0.0859&& 0.4641 & 0.4641 & 0.4641&& 0.1669 & 0.1669 & 0.1668&& 8 \\
& & GEE-Wald & 0.1613 & 0.1613 & 0.1613 & & 0.0573 & 0.0573 & 0.0573&& 0.0899 & 0.0899 & 0.0899&& 0.5145 & 0.5145 & 0.5145&& 0.2307 & 0.2307 & 0.2307&& 12 \\
& & GEE-Poly & 1.6754 & 1.6722 & 1.6736 & & 0.0582 & 0.0575 & 0.0575&& 0.0962 & 0.0965 & 0.0962&& 0.6182 & 0.6182 & 06182&& 0.3886 & 0.3884 & 0.3871&& 2 \\
\hline\addlinespace
\multirow{4}*{100} & \multirow{4}*{0.6}
& Profile-BEL & 0.1631 & 0.1631 & 0.1631 & & 0.0565 & 0.0565 & 0.0565&& 0.0581 & 0.0581 & 0.0581&& 0.1282 & 0.1282 & 0.1282&& 0.1122 & 0.1122 & 0.1122&& 6 \\
& & Naive-EL  & 0.1632 & 0.1632 & 0.1632 & & 0.0578 & 0.0578 & 0.0578&& 0.0864 & 0.0864 & 0.0864&& 0.4799 & 0.4799 & 0.4799&& 0.1716 & 0.1716 & 0.1716&& 8 \\
& & GEE-Wald & 0.1633 & 0.1633 & 0.1633 & & 0.0574 & 0.0574 & 0.0574&& 0.0909 & 0.0909 & 0.0909&& 0.5194 & 0.5194 & 0.5194&& 0.2346 & 0.2346 & 0.2346&& 12 \\
& & GEE-Poly & 1.6873 & 1.6803 & 1.6816 & & 0.0584 & 0.0564 & 0.0565&& 0.0964 & 0.0966 & 0.0962&& 0.6162 & 0.6162 & 0.6162&& 0.3899 & 0.3882 & 0.3867&& 2 \\
\hline\addlinespace
\multirow{4}*{200} & \multirow{4}*{0.0}
& Profile-BEL & 0.1499 & 0.1499 & 0.1499 & & 0.0399 & 0.0399 & 0.0399&& 0.0405 & 0.0405 & 0.0405&& 0.0950 & 0.0950 & 0.0950&& 0.0810 & 0.0810 & 0.0810&& 6 \\
& & Naive-EL  & 0.1496 & 0.1496 & 0.1496 & & 0.0413 & 0.0414 & 0.0413&& 0.0849 & 0.0849 & 0.0849&& 0.6162 & 0.6162 & 0.6162&& 0.1625 & 0.1625 & 0.1625&& 8 \\
& & GEE-Wald & 0.1496 & 0.1496 & 0.1496 & & 0.0414 & 0.0413 & 0.0414&& 0.0851 & 0.0851 & 0.0851&& 0.6119 & 0.6119 & 0.6119&& 0.2122 & 0.2122 & 0.2122&& 12 \\
& & GEE-Poly & 1.7325 & 1.7327 & 1.7331 & & 0.0411 & 0.0413 & 0.0412&& 0.0848 & 0.0852 & 0.0850&& 0.6214 & 0.6214 & 0.6214&& 0.4346 & 0.4344 & 0.4346&& 2 \\
\hline\addlinespace
\multirow{4}*{200} & \multirow{4}*{0.3}
& Profile-BEL & 0.1510 & 0.1510 & 0.1510 & & 0.0400 & 0.0400 & 0.0400&& 0.0412 & 0.0412 & 0.0412&& 0.0957 & 0.0957 & 0.0957&& 0.0805 & 0.0805 & 0.0805&& 6 \\
& & Naive-EL  & 0.1507 & 0.1507 & 0.1507 & & 0.0416 & 0.0416 & 0.0416&& 0.0860 & 0.0860 & 0.0860&& 0.6197 & 0.6197 & 0.6197&& 0.1622 & 0.1622 & 0.1622&& 8 \\
& & GEE-Wald & 0.1507 & 0.1507 & 0.1507 & & 0.0416 & 0.0416 & 0.0416&& 0.0859 & 0.0859 & 0.0859&& 0.6144 & 0.6144 & 0.6144&& 0.2122 & 0.2122 & 0.2122&& 12 \\
& & GEE-Poly & 1.7318 & 1.7278 & 1.7296 & & 0.0414 & 0.0412 & 0.0411&& 0.0855 & 0.0853 & 0.0854&& 0.6217 & 0.6217 & 0.6217&& 0.4342 & 0.4322 & 0.4334&& 2 \\
\hline\addlinespace
\multirow{4}*{200} & \multirow{4}*{0.6}
& Profile-BEL & 0.1523 & 0.1523 & 0.1523 & & 0.0398 & 0.0398 & 0.0398&& 0.0419 & 0.0419 & 0.0419&& 0.0961 & 0.0961 & 0.0961&& 0.0804 & 0.0804 & 0.0804&& 6 \\
& & Naive-EL  & 0.1520 & 0.1520 & 0.1519 & & 0.0416 & 0.0416 & 0.0416&& 0.0865 & 0.0865 & 0.0865&& 0.6194 & 0.6093 & 0.6194&& 0.1622 & 0.1622 & 0.16224&& 8 \\
& & GEE-Wald & 0.1519 & 0.1519 & 0.1519 & & 0.0415 & 0.0415 & 0.0415&& 0.0863 & 0.0863 & 0.0863&& 0.6093 & 0.6194 & 0.6093&& 0.2124 & 0.2124 & 0.2124&& 12 \\
& & GEE-Poly & 1.7304 & 1.7216 & 1.7261 & & 0.0415 & 0.0405& 0.0404&& 0.0862 & 0.0852 & 0.0857&& 0.6221 & 0.6221 & 0.6221&& 0.4333 & 0.4312 & 0.4318&& 2 \\
\bottomrule
\end{tabular}}
\end{sidewaystable}

\begin{sidewaystable}[!]
\centering
\caption{Bernoulli Case}\label{tab:sim_rmse_bern}
\resizebox{\textwidth}{!}{
\begin{tabular}{cclcccccccccccccccccccccc cccccc}
\toprule
\multirow{2}*{$n$} & \multirow{2}*{$\rho$} 
& \multirow{2}*{Method} & \multicolumn{3}{c}{$\mathrm{RMSE}(\widehat{\beta}_1)$} & & \multicolumn{3}{c}{$\mathrm{RMSE}(\widehat{\beta}_2)$} && \multicolumn{3}{c}{$\mathrm{RMSE}(\widehat{\beta}_3)$} & & \multicolumn{3}{c}{$\mathrm{Ang}(\widehat{\pmb{\alpha}},\pmb{\alpha}_0)$} && \multicolumn{3}{c}{$\mathrm{ISE}(\widehat{\eta})$} && \multirow{2}*{$K$(med)}\\
\cline{4-6}\cline{8-10}\cline{12-14}\cline{16-18}\cline{20-22}
&&&Ind.&Exc.&AR(1)&&Ind.&Exc.&AR(1)&&Ind.&Exc.&AR(1)&&Ind.&Exc.&AR(1)&&Ind.&Exc.&AR(1)\\
\midrule
\multirow{4}*{100} & \multirow{4}*{0.0}
& Profile-BEL & 0.2026 & 0.2026 & 0.2026 & & 0.1347 & 0.1347 & 0.1347&& 0.1840 & 0.1840 & 0.1840&& 0.3878 & 0.3878 & 0.3878&& 1.0929 & 1.0929 & 1.0929&& 6\\
& & Naive-EL   & 0.1920 & 0.1920 & 0.1920 & & 0.1311 & 0.1311 & 0.1311&& 0.1803 & 0.1803 & 0.1803&& 0.5767 & 0.5767 & 0.5767&& 3.2271 & 3.2271 & 3.2271&& 8\\
& & GEE-Wald  & 0.2477 & 0.2477 & 0.2477 & & 0.1359 & 0.1359 & 0.1359&& 0.2180 & 0.2180 & 0.2180&& 0.6378 & 0.6378 & 0.6378&& 1.4955 & 1.4955 & 1.4955&& 12\\
& & GEE-Poly  & 2.0976 & 2.0968 & 2.1063 & & 0.1330 & 0.1333 & 0.1321&& 0.1796 & 0.1807 & 0.1774&& 0.5102 & 0.5102 & 0.5102&& 0.4949 & 0.4927 & 0.5035&& 2\\
\hline\addlinespace
\multirow{4}*{100} & \multirow{4}*{0.3}
& Profile-BEL & 0.1974 & 0.1974 & 0.1974 & & 0.1171 & 0.1171 & 0.1171&& 0.1784 & 0.1784 & 0.1784&& 0.3101 & 0.3101 & 0.3101&& 2.7356 & 2.7356 & 2.7356&& 6\\
& & Naive-EL   & 0.1837 & 0.1837 & 0.1837 & & 0.1145 & 0.1145 & 0.1145&& 0.1871 & 0.1871 & 0.1871&& 0.4964 & 0.4964 & 0.4964&& 4.7557 & 4.7557 & 4.7557&& 8\\
& & GEE-Wald  & 0.2172 & 0.2172 & 0.2172 & & 0.1071 & 0.1071 & 0.1071&& 0.1685 & 0.1685 & 0.1685&& 0.6304 & 0.6304 & 0.6304&& 4.4098 & 4.4098 & 4.4098&& 12\\
& & GEE-Poly  & 1.8148 & 1.8197 & 1.8218 & & 0.1102 & 0.1109 & 0.1104&& 0.1744 & 0.1752 & 0.1739&& 0.5635 & 0.5635 & 0.5635&& 0.3823 & 0.3812 & 0.3832&& 2\\
\hline\addlinespace
\multirow{4}*{100} & \multirow{4}*{0.6}
& Profile-BEL & 0.1597 & 0.1597 & 0.1597 & & 0.1424 & 0.1424 & 0.1424&& 0.1770 & 0.1770 & 0.1770&& 0.2792 & 0.2792 & 0.2792&& 4.4703 & 4.4703 & 4.703&& 6\\
& & Naive-EL   & 0.1503 & 0.1503 & 0.1503 & & 0.1303 & 0.1303 & 0.1303&& 0.1726 & 0.1726 & 0.1726&& 0.5266 & 0.5266 & 0.5266&& 8.8542 & 8.8542 & 8.8542&& 8\\
& & GEE-Wald  & 0.1513 & 0.1513 & 0.1513 & & 0.1408 & 0.1408 & 0.1408&& 0.1778 & 0.1778 & 0.1778&& 0.5774 & 0.5774 & 0.5774&& 13.6590 & 13.6590 & 13.6590&& 12\\
& & GEE-Poly  & 1.9911 & 1.9837 & 1.9885 & & 0.1344 & 0.1332 & 0.1355&& 0.1697 & 0.1716 & 0.1681&& 0.4564 & 0.4564 & 0.4564&& 0.4665 & 0.4647 & 0.4698&& 2\\
\hline\addlinespace
\multirow{4}*{200} & \multirow{4}*{0.0}
& Profile-BEL & 0.1577 & 0.1577 & 0.1577 & & 0.0820 & 0.0820 & 0.0820&& 0.0801 & 0.0801 & 0.0801&& 0.2676 & 0.2676 & 0.2676&& 0.2967 & 0.2967 & 0.2967&& 6\\
& & Naive-EL   & 0.1532 & 0.1532 & 0.1532 & & 0.0825 & 0.0825 & 0.0825&& 0.1243 & 0.1243 & 0.1243&& 0.5412 & 0.5412 & 0.5412&& 0.5018 & 0.5018 & 0.5018&& 8\\
& & GEE-Wald  & 0.1546 & 0.1546 & 0.1546 & & 0.0740 & 0.0740 & 0.0740&& 0.1172 & 0.1172 & 0.1172&& 0.4531 & 0.4531 & 0.4531&& 1.1278 & 1.1278 & 1.1278&& 12\\
& & GEE-Poly  & 1.6207 & 1.6224 & 1.6169 & & 0.1006 & 0.1001 & 0.1003&& 0.1450 & 0.1445 & 0.1447&& 0.6309 & 0.6309 & 0.6309&& 0.3344 & 0.3351 & 0.3331&& 2\\
\hline\addlinespace
\multirow{4}*{200} & \multirow{4}*{0.3}
& Profile-BEL & 0.1422 & 0.1422 & 0.1422 & & 0.1021 & 0.1021 & 0.1021&& 0.0792 & 0.0792 & 0.0792&& 0.2720 & 0.2720 & 0.2720&& 0.2925 & 0.2925 & 0.2925&& 6\\
& & Naive-EL   & 0.1342 & 0.1342 & 0.1342 & & 0.0990 & 0.0990 & 0.0990&& 0.1061 & 0.1061 & 0.1061&& 0.4620 & 0.4620 & 0.4620&& 0.6091 & 0.6091 & 0.6091&& 8\\
& & GEE-Wald  & 0.1359 & 0.1359 & 0.1359 & & 0.0910 & 0.0910 & 0.0910&& 0.1125 & 0.1125 & 0.1125&& 0.4610 & 0.4610 & 0.4610&& 0.7399 & 0.7399 & 0.7399&& 12\\
& & GEE-Poly  & 1.8000 & 1.7951 & 1.7959 & & 0.1149 & 0.1154 & 0.1144&& 0.1336 & 0.1338 & 0.1333&& 0.6239 & 0.6239 & 0.6239&& 0.3722 & 0.3731 & 0.3714&& 2\\
\hline\addlinespace
\multirow{4}*{200} & \multirow{4}*{0.6}
& Profile-BEL & 0.1503 & 0.1503 & 0.1503 & & 0.1271 & 0.1271 & 0.1271&& 0.0788 & 0.0788 & 0.0788&& 0.2412 & 0.2412 & 0.2412&& 0.2966 & 0.2966 & 0.2966&& 6\\
& & Naive-EL   & 0.1486 & 0.1486 & 0.1486 & & 0.1306 & 0.1306 & 0.1306&& 0.1307 & 0.1307 & 0.1307&& 0.4894 & 0.4894 & 0.4894&& 0.7119 & 0.7119 & 0.7119&& 8\\
& & GEE-Wald  & 0.1607 & 0.1607 & 0.1607 & & 0.1175 & 0.1175 & 0.1175&& 0.1309 & 0.1309 & 0.1309&& 0.5114 & 0.511 & 0.5114&& 14.7871 & 14.7871 & 14.7871&& 12\\
& & GEE-Poly  & 1.9209 & 1.9300 & 1.9280 & & 0.1411 & 0.1432 & 0.1413&& 0.1426 & 0.1433 & 0.1435&& 0.6225 & 0.6225 & 0.6225&& 0.3913 & 0.3958 & 0.3924&& 2\\
\bottomrule
\end{tabular}}
\end{sidewaystable}

\begin{sidewaystable}[!]
\centering
\caption{Poisson Case}\label{tab:sim_rmse_pois}
\resizebox{\textwidth}{!}{
\begin{tabular}{cclcccccccccccccccccccccc cccccc}
\toprule
\multirow{2}*{$n$} & \multirow{2}*{$\rho$} 
& \multirow{2}*{Method} & \multicolumn{3}{c}{$\mathrm{RMSE}(\widehat{\beta}_1)$} & & \multicolumn{3}{c}{$\mathrm{RMSE}(\widehat{\beta}_2)$} && \multicolumn{3}{c}{$\mathrm{RMSE}(\widehat{\beta}_3)$} & & \multicolumn{3}{c}{$\mathrm{Ang}(\widehat{\pmb{\alpha}},\pmb{\alpha}_0)$} && \multicolumn{3}{c}{$\mathrm{ISE}(\widehat{\eta})$} && \multirow{2}*{$K$(med)}\\
\cline{4-6}\cline{8-10}\cline{12-14}\cline{16-18}\cline{20-22}
&&&Ind.&Exc.&AR(1)&&Ind.&Exc.&AR(1)&&Ind.&Exc.&AR(1)&&Ind.&Exc.&AR(1)&&Ind.&Exc.&AR(1)\\
\midrule
\multirow{4}*{100} & \multirow{4}*{0.0}
& Profile-BEL & 0.2303 & 0.2303 & 0.2303 & & 0.0669 & 0.0669 & 0.0669&& 0.0609 & 0.0609 & 0.0609&& 0.1562 & 0.1562 & 0.1562&& 0.2601 & 0.2601 & 0.2601&& 10\\
& & Naive-EL   & 0.2713 & 0.2713 & 0.2713 & & 0.0741 & 0.0741 & 0.0741&& 0.0983 & 0.0983 & 0.0983&& 0.6339 & 0.6339 & 0.6339&& 0.2605 & 0.2065 & 0.2065&& 8\\
& & GEE-Wald  & 0.2663 & 0.2663 & 0.2663 & & 0.0724 & 0.0724 & 0.0724&& 0.0997 & 0.0997 & 0.0997&& 0.6332 & 0.6332 & 0.6332&& 0.3068 & 0.3068 & 0.3068&& 12\\
& & GEE-Poly  & 1.3390 & 2.7788 & 1.3381 & & 0.0758 & 0.5585 & 0.0757&& 0.0970 & 0.2965 & 0.0970&& 0.6276 & 0.6273 & 0.6276&& 0.5192 & 3.4769 & 0.5201&& 2\\
\hline\addlinespace
\multirow{4}*{100} & \multirow{4}*{0.3}
& Profile-BEL & 0.2346 & 0.2346 & 0.2346 & & 0.0657 & 0.0657 & 0.0657&& 0.0641 & 0.0641 & 0.0641&& 0.1476 & 0.1476 & 0.1476&& 0.1711 & 0.1711 & 0.1711&& 10\\
& & Naive-EL   & 0.2732 & 0.2732 & 0.2732 & & 0.0688 & 0.0688 & 0.0688&& 0.0999 & 0.0999 & 0.0999&& 0.6271 & 0.6271 & 0.6271&& 0.1992 & 0.1992 & 0.1992&& 8\\
& & GEE-Wald  & 0.2679 & 0.2679 & 0.2679 & & 0.0680 & 0.0680 & 0.0680&& 0.0991 & 0.0991 & 0.0991&& 0.6250 & 0.6250 & 0.6250&& 0.2899 & 0.2899 & 0.2899&& 12\\
& & GEE-Poly  & 1.3427 & 1.3328 & 1.3393 & & 0.0715 & 0.1002 & 0.0718&& 0.0999 & 0.0996 & 0.0996&& 0.6328 & 0.6328 & 0.6328&& 0.5177 & 0.5214 & 0.5144&& 2\\
\hline\addlinespace
\multirow{4}*{100} & \multirow{4}*{0.6}
& Profile-BEL & 0.2398 & 0.2398 & 0.2398 & & 0.0720 & 0.0720 & 0.0720&& 0.0683 & 0.0683 & 0.0683&& 0.1523 & 0.1523 & 0.1523&& 0.2230 & 0.2230 & 0.2230&& 10\\
& & Naive-EL   & 0.2781 & 0.2781 & 0.2781 & & 0.0732 & 0.0732 & 0.0732&& 0.1011 & 0.1011 & 0.1011&& 0.6352 & 0.6352 & 0.6352&& 0.1767 & 0.1767 & 0.1767&& 8\\
& & GEE-Wald  & 0.2723 & 0.2723 & 0.2723 & & 0.0736 & 0.0736 & 0.0736&& 0.1014 & 0.1014 & 0.1014&& 0.6300 & 0.6300 & 0.6300&& 0.3614 & 0.3614 & 0.3614&& 12\\
& & GEE-Poly  & 1.3531 & 1.3301 & 1.3446 & & 0.0751 & 0.1197 & 0.0745&& 0.1003 & 0.0994 & 0.0992&& 0.6349 & 0.6366 & 0.6349&& 0.5169 & 0.5177 & 0.5072&& 2\\
\hline\addlinespace
\multirow{4}*{200} & \multirow{4}*{0.0}
& Profile-BEL & 0.2327 & 0.2327 & 0.2327 & & 0.0497 & 0.0497 & 0.0497&& 0.0432 & 0.0432 & 0.0432&& 0.1025 & 0.1025 & 0.1025&& 0.1662 & 0.1662 & 0.1662&& 10\\
& & Naive-EL   & 0.2733 & 0.2733 & 0.2733 & & 0.0558 & 0.0558 & 0.0558&& 0.0846 & 0.0846 & 0.0846&& 0.6331 & 0.6331 & 0.6331&& 0.1806 & 0.1806 & 0.1806&& 8\\
& & GEE-Wald  & 0.2710 & 0.2710 & 0.2710 & & 0.0542 & 0.0542 & 0.0542&& 0.0845 & 0.0845 & 0.0845&& 0.6335 & 0.6335 & 0.6335&& 0.2267 & 0.2267 & 0.2267&& 12\\
& & GEE-Poly  & 1.3938 & 1.4095 & 1.3951 & & 0.0566 & 0.3027 & 0.0567&& 0.0844 & 0.0886 & 0.0844&& 0.6328 & 0.6324 & 0.6328&& 0.5864 & 1.1599 & 0.5865&& 2\\
\hline\addlinespace
\multirow{4}*{200} & \multirow{4}*{0.3}
& Profile-BEL & 0.2334 & 0.2334 & 0.2334 & & 0.0503 & 0.0503 & 0.0503&& 0.0441 & 0.0441 & 0.0441&& 0.1042 & 0.1042 & 0.1042&& 0.1486 & 0.1486 & 0.1486&& 10\\
& & Naive-EL   & 0.2739 & 0.2739 & 0.2739 & & 0.0557 & 0.0557 & 0.0557&& 0.0859 & 0.0859 & 0.0859&& 0.6326 & 0.6326 & 0.6326&& 0.1732 & 0.1732 & 0.1732&& 8\\
& & GEE-Wald  & 0.2709 & 0.2709 & 0.2709 & & 0.0549 & 0.0549 & 0.0549&& 0.0855 & 0.0855 & 0.0855&& 0.6341 & 0.6341 & 0.6341&& 0.2526 & 0.2526 & 0.2526&& 12\\
& & GEE-Poly  & 1.4006 & 1.3947 & 1.3989 & & 0.0566 & 0.0664 & 0.0561&& 0.0854 & 0.0849 & 0.0849&& 0.6326 & 0.6325 & 0.6326&& 0.5791 & 0.5889 & 0.5796&& 2\\
\hline\addlinespace
\multirow{4}*{200} & \multirow{4}*{0.6}
& Profile-BEL & 0.2327 & 0.2327 & 0.2327 & & 0.0526 & 0.0526 & 0.0526&& 0.0458 & 0.0458 & 0.0458&& 0.1041 & 0.1041 & 0.1041&& 0.1555 & 0.1555 & 0.1555&& 10\\
& & Naive-EL   & 0.2735 & 0.2735 & 0.2735 & & 0.0565 & 0.0565 & 0.0565&& 0.0862 & 0.0862 & 0.0862&& 0.6296 & 0.6296 & 0.6296&& 0.1699 & 0.1699 & 0.1699&& 8\\
& & GEE-Wald  & 0.2706 & 0.2706 & 0.2706 & & 0.0559 & 0.0559 & 0.0559&& 0.0863 & 0.0863 & 0.0863&& 0.6287 & 0.6287 & 0.6287&& 0.2322 & 0.2322 & 0.2322&& 12\\
& & GEE-Poly  & 1.3965 & 1.3499 & 1.3966 & & 0.0574 & 0.1158 & 0.0561&& 0.0861 & 0.0830 & 0.0849&& 0.6327 & 0.6324 & 0.6327&& 0.5862 & 0.6572 & 0.5852&& 2\\
\bottomrule
\end{tabular}}
\end{sidewaystable}

\begin{table}[!]
\centering
\caption{Empirical coverage and average length of nominal $95\%$ CIs for selected parameters.}\label{tab:sim_cover_main}
\scalebox{0.9}{
\begin{tabular}{lllcccccc}
\toprule
\multirow{2}*{Case} & \multirow{2}*{W-Cor} & \multirow{2}*{Method} & \multicolumn{2}{c}{$\beta_1$} & \multicolumn{2}{c}{$\beta_2$} & \multicolumn{2}{c}{$\alpha_2$} \\
\cmidrule(lr){4-5}\cmidrule(lr){6-7}\cmidrule(lr){8-9}
 & & & Cover & Len & Cover & Len & Cover & Len \\
\midrule
\multirow{12}*{Gaussian} &
\multirow{4}*{Independence} 
& Profile-BEL & 0.6150 & 0.2875 & 0.9900 & 0.2594 & 0.9650 & 0.4423 \\
&& Naive-BEL & 0.4750 & 0.2092 & 0.9500 & 0.2219 & 0.7950 & 0.5165 \\
&& GEE-Wald  & 0.5200 & 0.2394 & 0.9550 & 0.2194 & 0.9500 & 0.7150 \\
&& GEE-Poly & 0.0350 & 1.5385 & 0.9500 & 0.2237 & 0.4500 & 0.3862 \\
\cline{2-9}\addlinespace
&\multirow{4}*{Exchange} 
& Profile-BEL & 0.6150 & 0.2875 & 0.9900 & 0.2594 & 0.9650 & 0.4423 \\
&& Naive-BEL & 0.4750 & 0.2092 & 0.9500 & 0.2219 & 0.7950 & 0.5165 \\
&& GEE-Wald  & 0.5200 & 0.2394 & 0.9550 & 0.2194 & 0.9500 & 0.7150 \\
&& GEE-Poly & 0.0250 & 1.5254 & 0.9550 & 0.2215 & 0.4500 & 0.3862 \\
\cline{2-9}\addlinespace
&\multirow{4}*{AR(1)} 
& Profile-BEL & 0.6150 & 0.2875 & 0.9900 & 0.2594 & 0.9650 & 0.4423 \\
&& Naive-BEL & 0.4750 & 0.2092 & 0.9500 & 0.2219 & 0.7950 & 0.5165 \\
&& GEE-Wald  & 0.5200 & 0.2394 & 0.9550 & 0.2194 & 0.9500 & 0.7150 \\
&& GEE-Poly & 0.0250 & 1.5190 & 0.9500 & 0.2206 & 0.4500 & 0.3862 \\
\hline\addlinespace
\multirow{12}*{Bernoulli} &
\multirow{4}*{Independence} 
& Profile-BEL & 0.8450 & 0.6227 & 0.9650 & 0.6259 & 0.9600 & 1.1121 \\
&& Naive-BEL & 0.7800 & 0.7062 & 0.9300 & 0.5169 & 0.9400 & 1.3321 \\
&& GEE-Wald  & 0.7550 & 0.8286 & 0.9200 & 0.5193 & 0.9050 & 2.1365 \\
&& GEE-Poly & 0.4950 & 3.9400 & 0.9000 & 0.5028 & 0.9600 & 0.9839 \\
\cline{2-9}\addlinespace
&\multirow{4}*{Exchange} 
& Profile-BEL & 0.8450 & 0.6227 & 0.9650 & 0.6259 & 0.9600 & 1.1121 \\
&& Naive-BEL & 0.7800 & 0.7062 & 0.9300 & 0.5169 & 0.9400 & 1.3321 \\
&& GEE-Wald  & 0.7550 & 0.8286 & 0.9200 & 0.5193 & 0.9050 & 2.1365 \\
&& GEE-Poly & 0.4750 & 3.9284 & 0.8950 & 0.5016 & 0.9600 & 0.9839 \\
\cline{2-9}\addlinespace
&\multirow{4}*{AR(1)} 
& Profile-BEL & 0.8450 & 0.6227 & 0.9650 & 0.6259 & 0.9600 & 1.1121 \\
&& Naive-BEL & 0.7800 & 0.7062 & 0.9300 & 0.5169 & 0.9400 & 1.3321 \\
&& GEE-Wald  & 0.7550 & 0.8286 & 0.9200 & 0.5193 & 0.9050 & 2.1365 \\
&& GEE-Poly & 0.4750 & 3.9329 & 0.9000 & 0.5012 & 0.9600 & 0.9839 \\
\hline\addlinespace
\multirow{12}*{Poisson} &
\multirow{4}*{Independence} 
& Profile-BEL & 0.4300 & 0.2888 & 0.9250 & 0.2458 & 0.9850 & 0.6272 \\
&& Naive-BEL & 0.1400 & 0.1071 & 0.4950 & 0.0778 & 0.6800 & 0.4960 \\
&& GEE-Wald  & 0.2500 & 0.2448 & 0.8500 & 0.2188 & 0.9700 & 1.0678 \\
&& GEE-Poly & 0.0650 & 1.1656 & 0.8850 & 0.2370 & 0.6600 & 0.4940 \\
\cline{2-9}\addlinespace
&\multirow{4}*{Exchange} 
& Profile-BEL & 0.4300 & 0.2888 & 0.9250 & 0.2458 & 0.9850 & 0.6272 \\
&& Naive-BEL & 0.1400 & 0.1071 & 0.4950 & 0.0778 & 0.6800 & 0.4960 \\
&& GEE-Wald  & 0.2500 & 0.2448 & 0.8500 & 0.2188 & 0.9700 & 1.0678 \\
&& GEE-Poly & 0.1285 & 1.2930 & 0.8827 & 0.3480 & 0.6648 & 0.5058 \\
\cline{2-9}\addlinespace
&\multirow{4}*{AR(1)} 
& Profile-BEL & 0.4300 & 0.2888 & 0.9250 & 0.2458 & 0.9850 & 0.6272 \\
&& Naive-BEL & 0.1400 & 0.1071 & 0.4950 & 0.0778 & 0.6800 & 0.4960 \\
&& GEE-Wald  & 0.2500 & 0.2448 & 0.8500 & 0.2188 & 0.9700 & 1.0678 \\
&& GEE-Poly & 0.0450 & 1.1291 & 0.8850 & 0.2322 & 0.6600 & 0.4940 \\
\bottomrule
\end{tabular}}
\end{table}

\begin{figure}[h]
\centering
    \begin{minipage}{0.49\textwidth}
    \centering
    \includegraphics[width=1\textwidth]{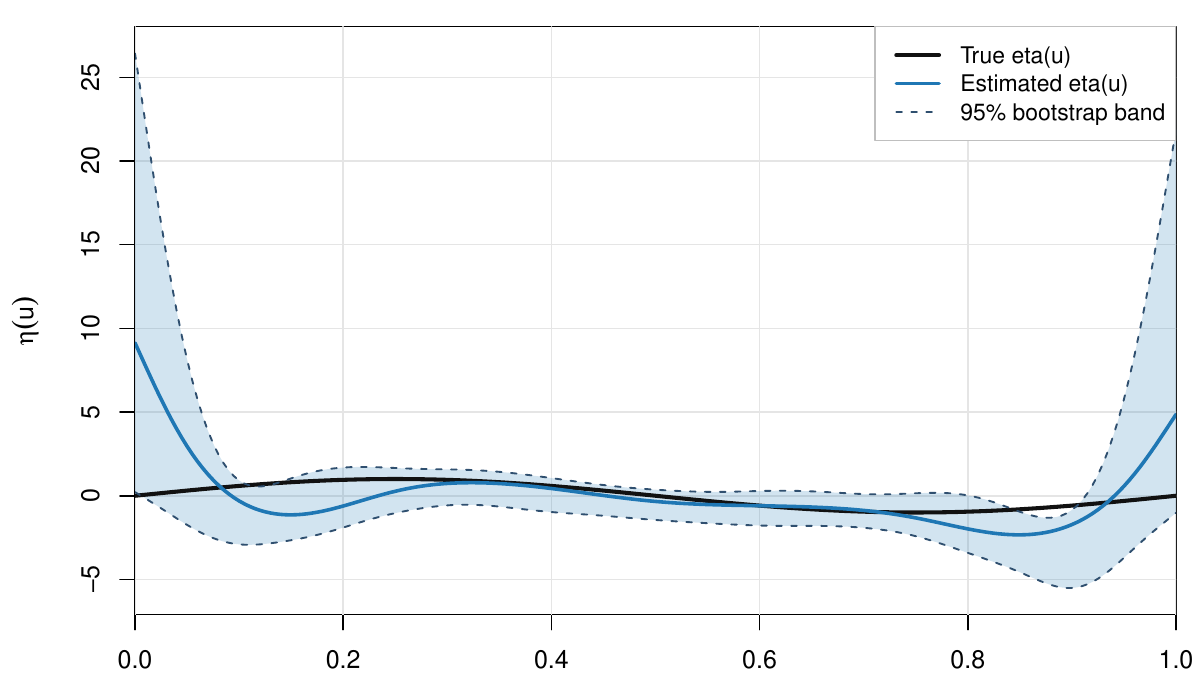}
    \end{minipage}
    \begin{minipage}{0.49\textwidth}
    \centering
    \includegraphics[width=1\textwidth]{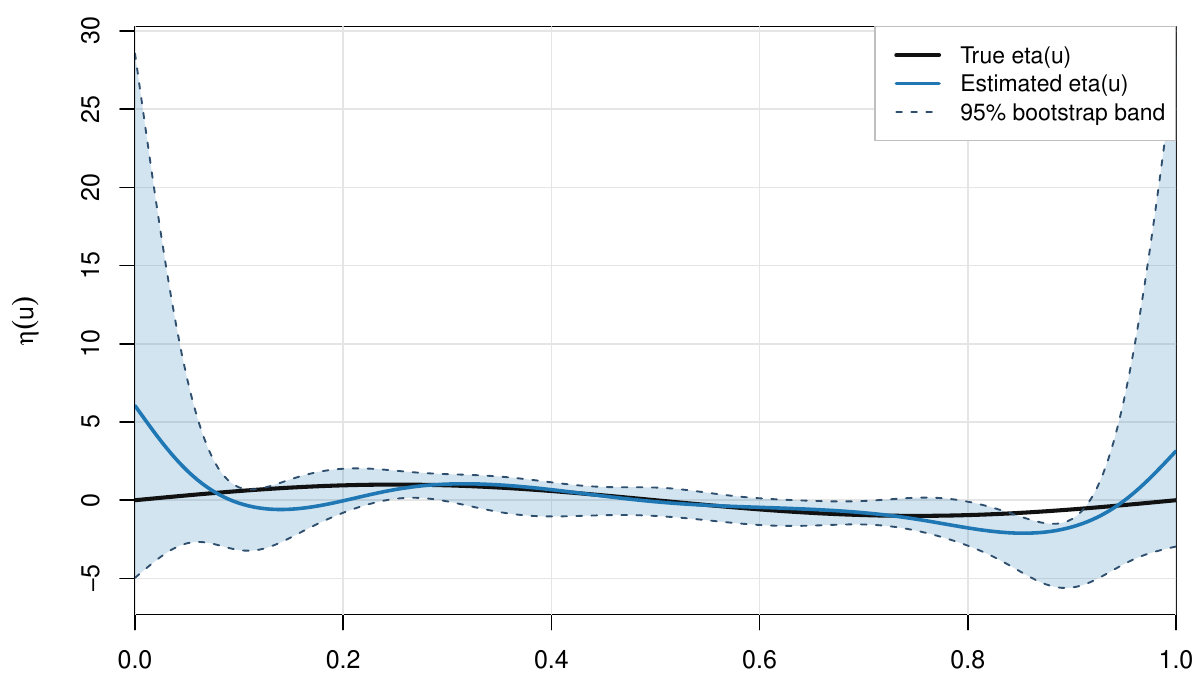}
    \end{minipage}
    \qquad
    \begin{minipage}{0.49\textwidth}
    \centering
    \includegraphics[width=1\textwidth]{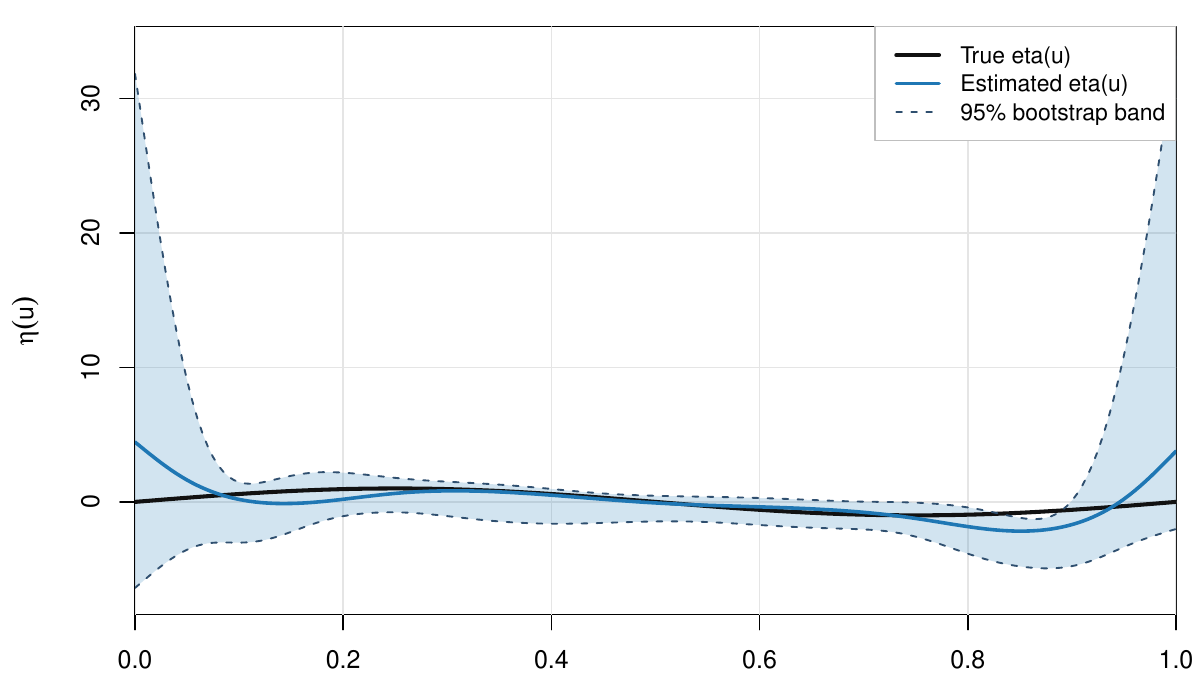}
    \end{minipage}
\caption{Representative fit of $\eta_0(u)$ and $\widehat{\eta}(u)$ with $95\%$ bootstrap bands under Bernoulli case. ($a$) \textit{Top Left}: $n=200$, $\rho=0.0$; ($b$) \textit{Top Right}: $n=200$, $\rho=0.3$; ($c$) \textit{Bottom}: $n=200$, $\rho=0.6$.} \label{fig1} 
\end{figure}

\begin{figure}[!]
\centering
    \begin{minipage}{0.49\linewidth}
    \centering
    \includegraphics[width=1\linewidth]{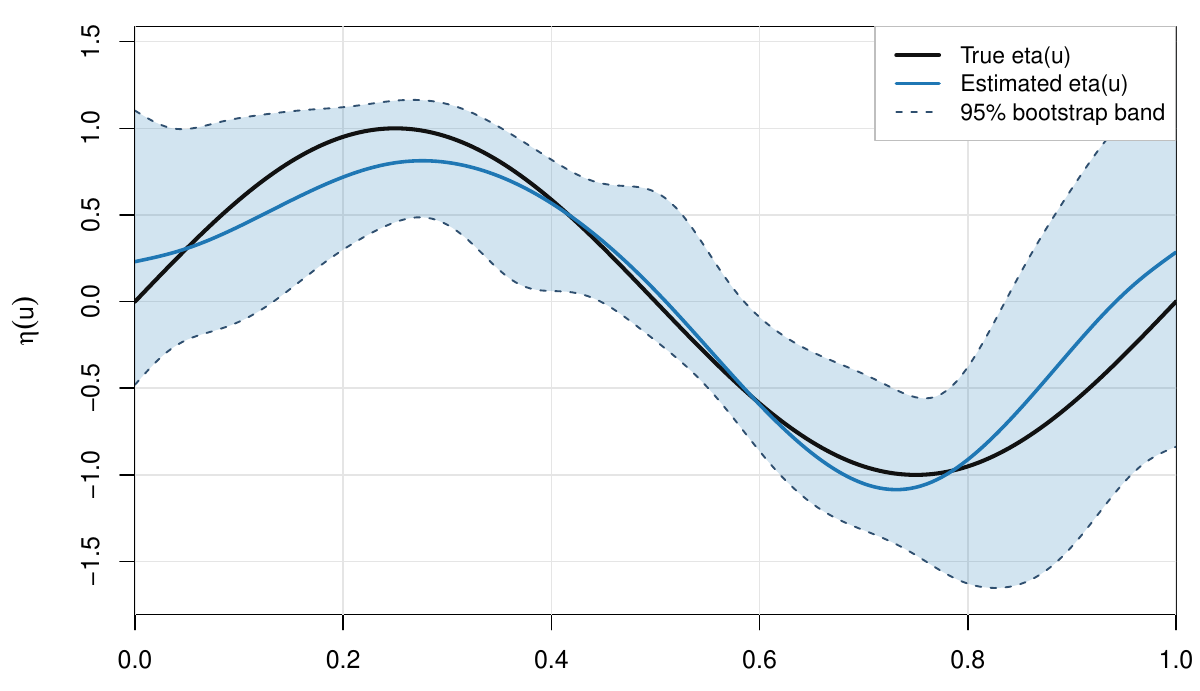}
    \end{minipage}
    \begin{minipage}{0.49\linewidth}
    \centering
    \includegraphics[width=1\linewidth]{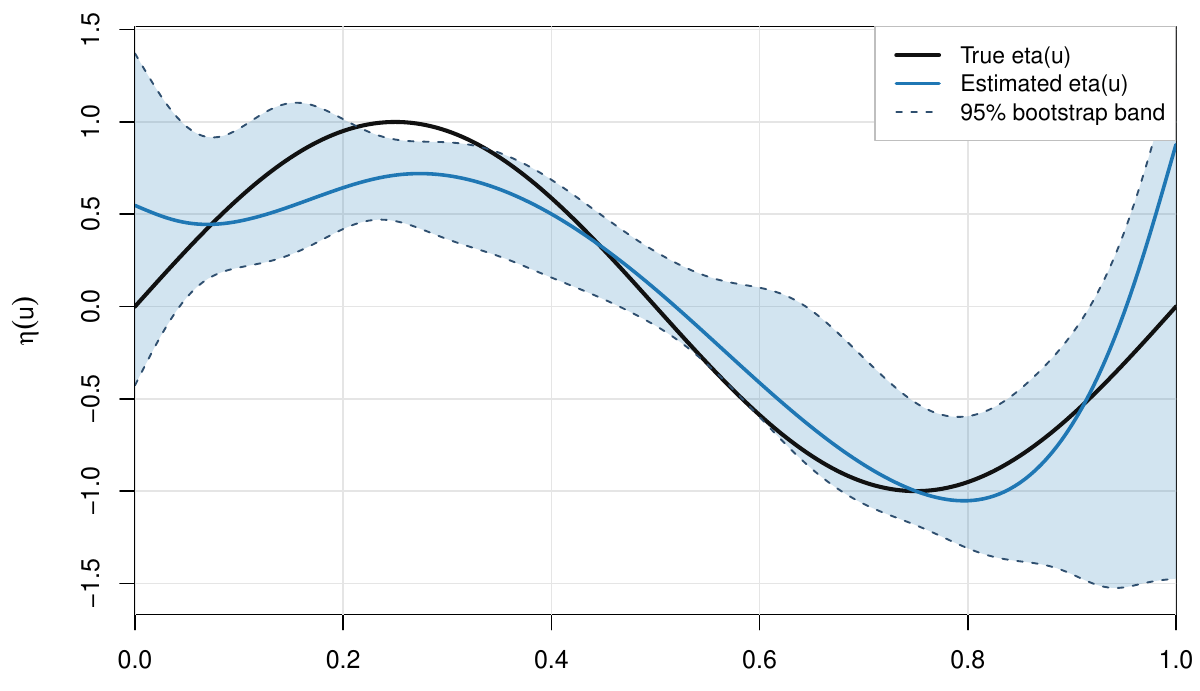}
    \end{minipage}
\caption{Representative fit of $\eta_0(u)$ and $\widehat{\eta}(u)$ with $95\%$ bootstrap bands under Gaussian case. ($a$) \textit{Left}: $n=100$, $\rho=0.0$; ($b$) \textit{Right}: $n=200$, $\rho=0.0$.}\label{fig2}
\end{figure}

\subsection{Sensitivity Analyses}\label{ssec:sim_sens}

We conduct sensitivity analyses to assess how the competing procedures behave when key implementation choices are perturbed. In particular, we vary the working correlation among independence, exchangeable, and AR(1), and we examine robustness across outcome families. Throughout, the spline dimension $K$ is selected by BIC over the same candidate grid used in the main experiments, so the comparison reflects the intended data-driven implementation rather than hand tuning.

Tables~\ref{tab:sim_rmse_gauss}--\ref{tab:sim_rmse_pois} show that the point estimation conclusions are largely insensitive to the working correlation choice, in the sense that the relative ordering of methods in RMSE/ISE-type summaries remains stable across three working correlation structures. In particular, Profile-BEL and GEE-Wald (which share the same estimating-equation backbone) yield comparable point estimates under different working correlations, while Naive-EL is essentially unchanged across correlation settings by construction. The polynomial GEE competitor exhibits the greatest variability, both in magnitude and in rank, especially for the more difficult components and for non-Gaussian responses, indicating that misspecification in its mean/working-structure interplay can translate into noticeably less stable finite-sample performance.

Table~\ref{tab:sim_cover_main} summarizes inference sensitivity via empirical coverage and average CI length. Two patterns emerge. First, Profile-BEL is largely insensitive to the working correlation choice: for each outcome family, its coverages and lengths change only mildly when switching among three working correlation structures, consistent with the blockwise construction based on   estimating equations and calibration that does not hinge on correctly specifying the working correlation. Second, Profile-BEL delivers a more favorable coverage--length trade-off for the harder parameters, most notably for $\beta_1$ and $\alpha_2$. In the Gaussian case, coverage for $\beta_1$ improves relative to Naive-EL and GEE-Wald while keeping CI length moderate, and for $\alpha_2$ Profile-BEL is close to nominal coverage with substantially shorter intervals than GEE-Wald. Similar behavior is seen under Bernoulli outcomes, where Profile-BEL maintains near-nominal coverage for $\alpha_2$ and improves calibration for $\beta_1$ without inflating lengths. Under the Poisson design, inference for $\beta_1$ is challenging for all methods, yet Profile-BEL still dominates the baselines in coverage for $\beta_1$ while remaining well-calibrated for $\beta_2$ and $\alpha_2$. 

Overall, these sensitivity results reinforce the main message: Profile-BEL provides the most robust and practically useful inference across correlation specifications and outcome types, whereas the polynomial GEE alternative can be substantially less stable in calibration and/or interval length for the more difficult components.

\section{Real Data Application}\label{sec:realdata}

\subsection{Data Description}\label{ssec:realdata_data}
We use \texttt{epil} dataset from the \textsf{MASS} package in \textsf{R} for analysis, which is a public longitudinal epilepsy study 1852 observations. The outcome is the seizure count recorded repeatedly for each subject across follow-up periods, which naturally motivates a Poisson-type mean model with within-subject correlation.

Let $Y_{ij}$ denote the seizure count for subject $i$ at visit/period $j$, and $t_{ij}\in[0,1]$ be the rescaled time index. We include a treatment indicator and baseline severity (e.g., baseline seizure frequency) as covariates, together with demographic adjustment variables (e.g., age). All continuous covariates are standardized.

\subsection{Implementation}\label{ssec:realdata_model}
We fit the same semiparametric longitudinal model as in the simulation section:
\begin{equation}\label{eq:realdata_epil}
\log\{\mathbb{E}(Y_{ij}\mid \mathbf{X}_{ij}, t_{ij})\}= \mathbf{X}_{ij}^\top\boldsymbol{\theta} + \eta(t_{ij}),
\end{equation}
where $\boldsymbol{\theta}$ is the finite-dimensional target and $\eta(\cdot)$ is an unknown smooth time effect. We approximate $\eta(t)$ by a spline basis with dimension $K$, and select $K$ by BIC over the same candidate grid used in the simulation (e.g., $K\in\{6,8,10,12\}$). Working correlation is taken as independence, AR(1), and exchangeable, matching the simulation design.

We compare the same four procedures as in Section~\ref{ssec:sim_impl}: \textbf{Profile-BEL}, \textbf{Naive-EL}, \textbf{GEE-Wald} and \textbf{GEE-Poly}. For Profile-BEL, the point estimator updates the nuisance parameter (including $\alpha$) iteratively, and inference for $\pmb{\theta}$ is obtained by profiling the BEL statistic. For the functional component, we construct a $95\%$ bootstrap pointwise band for $\eta(\cdot)$ using   resampling with $B$ bootstrap replicates.

\subsection{Evaluation Criteria}\label{ssec:realdata_metrics}
In the real-data analysis, the true $\boldsymbol{\theta}$ and the unknown link $\eta(\cdot)$ are not observable, so we compare methods using evaluation criteria that avoid reliance on ground truth.  

We therefore assess predictive performance through   $K$-fold cross-validation; for count outcomes, we report the held-out Poisson deviance (equivalently, the negative log-likelihood up to an additive constant), where smaller values indicate better out-of-sample fit.  For scientific interpretation and uncertainty quantification, we focus on key components of $\boldsymbol{\theta}$ (e.g., treatment and baseline severity effects) and summarize each method by its $95\%$ confidence-interval length as well as how sensitive that length is to the working-correlation specification (independence/AR(1)/exchangeable).  Finally, to evaluate uncertainty for the nonparametric component, we compare the estimated shapes of $\eta(\cdot)$ and the widths of the associated $95\%$ pointwise bands: bands that remain overly narrow despite poor predictive performance may indicate underestimated uncertainty, whereas uniformly very wide bands can reflect loss of efficiency.

To explicitly quantify how sensitive interval estimation is to the working-correlation choice, Table~\ref{tab:real_stab} reports a new stability metric, \emph{Range across correlation}. For a given parameter $\theta_j$, define
\[
\text{Range across corr.}
=
\max_{c\in\{\text{ind},\text{ar1},\text{exc}\}}
\Bigl\{\text{CI length}(\theta_j;c)\Bigr\}
-
\min_{c\in\{\text{ind},\text{ar1},\text{exc}\}}
\Bigl\{\text{CI length}(\theta_j;c)\Bigr\}.
\]
Smaller values indicate more stable (i.e., less correlation-sensitive) uncertainty quantification across plausible working correlations.

\subsection{Results}\label{ssec:realdata_results}
We summarize cross-validated deviance, point estimates and CIs for selected coefficients in $\boldsymbol{\theta}$, and the fitted $\eta(t)$ with pointwise bands in Table~\ref{tab:real_cv}-\ref{tab:real_stab}. Figure~\ref{fig:realdata_eta} visualizes $\widehat{\eta}(t)$: Profile-BEL estimate with its $95\%$ bootstrap pointwise band, and overlay the GEE-Poly fit for comparison.

\begin{table}[h]
\centering
\caption{$K$-fold CV Poisson deviance.}
\label{tab:real_cv}
\begin{tabular}{lccc}
\toprule
Method & Independence & AR(1) & Exchangeable \\
\hline
Profile-BEL & 4.5251 & 4.5113 & 5.0705 \\
GEE-Wald    & 4.6351 & 4.6213 & 5.1805 \\
Naive-EL    & 4.6357 & 4.6357 & 4.6357 \\
GEE-Poly    & 4.6129 & 4.5849 & 4.7328 \\
\bottomrule
\end{tabular}
\end{table}

\begin{table}[h]
\centering
\caption{Point estimates and 95\% CIs under different working correlations.}
\label{tab:real_coef}
\begin{tabular}{llccc}
\toprule
Coefficient & Method & Independence & AR(1) & Exchangeable \\
\hline
\multirow{4}{*}{$\theta_{\text{trt}}$}
& Profile-BEL & -0.163 [-0.306, -0.020] & -0.174 [-0.312, -0.037] & -0.023 [-0.219, 0.172] \\
& GEE-Wald    & -0.153 [-0.488, 0.183] & -0.161 [-0.484, 0.161] & -0.005 [-0.464, 0.453] \\
& Naive-EL    & -0.153 [-0.491, 0.186] & -0.153 [-0.491, 0.186] & -0.153 [-0.491, 0.186] \\
& GEE-Poly    & -0.153 [-0.488, 0.183]	 & -0.161 [-0.479, 0.157] & 0.001 [-0.377, 0.379] \\
\hline
\multirow{4}{*}{$\theta_{\text{base}}$}
& Profile-BEL & 0.613 [0.585, 0.640] & 0.627 [0.599, 0.654] & 0.686 [0.641, 0.731] \\
& GEE-Wald    & 0.605 [0.540, 0.670] & 0.617 [0.551, 0.682] & 0.674 [0.568, 0.780] \\
& Naive-EL    & 0.605 [0.540, 0.670] & 0.605 [0.540, 0.670] & 0.605 [0.540, 0.670] \\
& GEE-Poly    & 0.605 [0.540, 0.670] & 0.615 [0.550, 0.680] & 0.648 [0.551, 0.744] \\
\hline
\multirow{4}{*}{$\theta_{\text{age}}$}
& Profile-BEL & 0.148 [0.088, 0.209] & 0.169 [0.107, 0.230] & 0.304 [0.182, 0.426] \\
& GEE-Wald    & 0.142 [0.000, 0.284] & 0.161 [0.015, 0.306] & 0.294 [0.008, 0.580] \\
& Naive-EL    & 0.142 [-0.001, 0.286] & 0.142 [-0.001, 0.286] & 0.142 [-0.001, 0.286] \\
& GEE-Poly    & 0.142 [0.000, 0.284] & 0.158 [0.014, 0.302]	 & 0.248 [-0.015, 0.512] \\
\bottomrule
\end{tabular}
\end{table}

\begin{table}[h]
\centering
\caption{CI length and correlation-sensitivity.}
\label{tab:real_stab}
\begin{tabular}{llcc}
\toprule
Coefficient & Method & Avg. CI length & Range across corr. \\
\hline
\multirow{4}{*}{$\theta_{\text{trt}}$}
& Profile-BEL & 0.3170 & 0.1161 \\
& GEE-Wald    & 0.7441 & 0.2726 \\
& Naive-EL    & 0.6765 & 0.0000 \\
& GEE-Poly    & 0.6878 & 0.1204 \\
\hline
\multirow{4}{*}{$\theta_{\text{base}}$}
& Profile-BEL & 0.0670 & 0.0351 \\
& GEE-Wald    & 0.1572 & 0.0823 \\
& Naive-EL    & 0.1305 & 0.0000 \\
& GEE-Poly    & 0.1507 & 0.0640 \\
\bottomrule
\end{tabular}
\end{table}

\begin{figure}[!]
\centering
\includegraphics[width=0.95\textwidth]{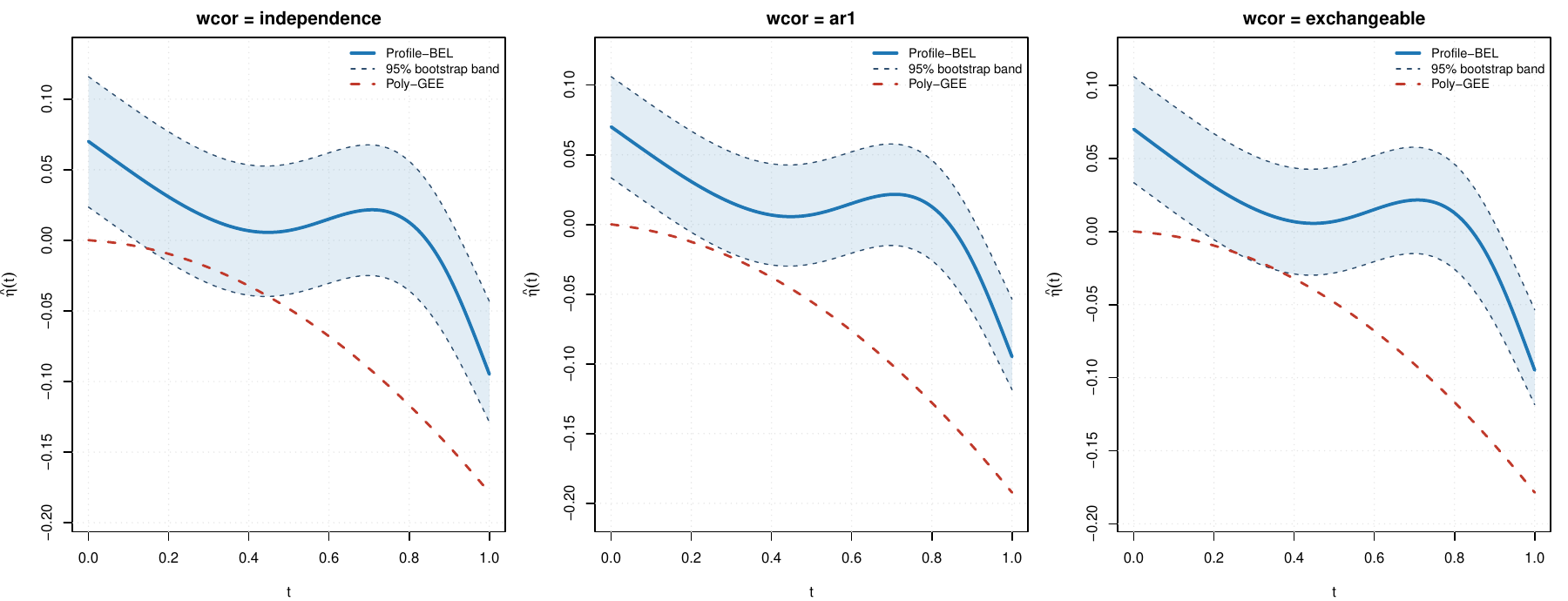}
\caption{Estimated time effect $\widehat{\eta}(t)$ on the epilepsy dataset. All $\eta$ estimates are post-processed to satisfy the same identifiability constraint.}
\label{fig:realdata_eta}
\end{figure}

Across all methods, the treatment-related coefficient $\theta_{\mathrm{trt}}$ is estimated to be negative, but Profile-BEL delivers the most informative inference: under both independence and AR(1), its $95\%$ confidence interval excludes zero while remaining substantially shorter than the corresponding intervals from GEE-Wald and GEE-Poly. In contrast, the Wald-type intervals tend to be wider and often include zero, reflecting their heavier reliance on plug-in variance estimation in a moderate-sample correlated setting, and Naive-EL (by construction) yields essentially identical point/interval outputs regardless of the working correlation because it ignores within-subject dependence. The fact that Profile-BEL simultaneously sharpens uncertainty quantification and preserves the direction and magnitude of the estimated effect provides empirical support for the self-normalization advantage of empirical likelihood in this longitudinal semiparametric problem.

Beyond significance, the real-data analysis highlights stability and practical robustness. The cross-validated deviance favors Profile-BEL under plausible working correlations, indicating that properly accounting for dependence while profiling out the nonparametric link improves generalization rather than merely tightening intervals. Moreover, the CI-stability summaries show that Profile-BEL achieves the best length--stability trade-off: its average CI length is the smallest among competitors, and its range across correlation specifications is comparatively limited, suggesting reduced sensitivity to the working correlation choice. 

Taken together, the real-data findings align with the simulation evidence and reinforce the main takeaway of the paper: Profile-BEL offers a practical, dependence-aware inference strategy that produces tighter and more stable conclusions for longitudinal GPLSIMs.
\section{Conclusion}\label{sec:conclusion}

We studied a generalized partially linear single-index model for longitudinal data, where the covariate effects are decomposed into a finite-dimensional linear component and an unknown smooth link along a low-dimensional index. By combining spline-sieve profiling with   estimating equations, we developed a practical inference framework that accommodates within-subject dependence while retaining a clear separation between the target parameter $\pmb{\theta}$ and the nuisance function $\eta(\cdot)$. The proposed profile block empirical likelihood provides likelihood-type confidence regions for $\pmb{\theta}$ without requiring explicit stabilization of sandwich variance estimators, and our asymptotic theory establishes a Wilks-type chi-square limit under mild regularity conditions.

From a methodological perspective, the main advantage of the proposed approach is its flexibility for longitudinal dependence. The inference is constructed at the subject level, which naturally respects the block structure of repeated measurements, and it remains applicable when the working correlation is only an approximation of the true dependence. The spline-based profiling step offers a convenient and computationally efficient way to estimate the unknown link function, while preserving root-$n$ inference for the finite-dimensional component. Our simulation results support these theoretical findings and suggest that the empirical-likelihood calibration can deliver stable coverage in moderate samples across a range of outcome types and correlation strengths.

Several extensions are of interest. First, the current framework assumes a bounded cluster size; it would be useful to study regimes where the number of repeated measurements grows with $n$, potentially requiring refined empirical-process arguments and alternative normalization. Second, one may incorporate more flexible dependence models, including time-varying correlation or random-effect structures, while retaining the   estimating-equation foundation. Third, the index structure can be enriched by allowing multiple indices, leading to an additive multi-index link $\sum_{\ell=1}^{L}\eta_{\ell}(\pmb{z}^{\top}\pmb{\alpha}_{\ell})$ that balances interpretability and flexibility. Fourth, it is natural to consider high-dimensional linear components with structured regularization, where one can combine profiling with penalized estimating equations to enable variable selection in the presence of an unknown link function. Finally, extending the current methodology to handle irregular observation times, missingness mechanisms, and more complex measurement error structures would broaden its applicability in real longitudinal studies.

Overall, the proposed profile block empirical likelihood framework offers a principled and implementable route for inference in semiparametric longitudinal models with dimension reduction. We hope it will serve as a useful building block for more general dependence structures and richer functional components in future work.

\bibliographystyle{erae}
\bibliography{refs}

\clearpage
\appendix

\section{Proofs}\label{app:proofs}

The proofs of Section~\ref{sec:theory} rely on spline sieve approximation and uniform convergence of the profiled nuisance estimator, a linearization of the profile estimating equation, and a standard EL Lagrange-multiplier expansion.  Throughout the appendix, $C$ denotes a generic positive constant that may change from line to line. All stochastic orders are with respect to $n\to\infty$.

\subsection{Auxiliary Lemmas}

\begin{lemma}[Spline approximation]\label{lem:spline_approx}
Under Assumptions~\ref{ass:smooth}--\ref{ass:index_density}, there exists a coefficient vector $\pmb{\gamma}_{0}^{\ast}=\pmb{\gamma}_{0}^{\ast}(K)$ such that
\[
\sup_{u\in\mathcal{U}}\left|\eta_0(u)-\pmb{B}(u)^{\top}\pmb{\gamma}_{0}^{\ast}\right|
\le C K^{-s}.
\]
\end{lemma}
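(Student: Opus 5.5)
This is the classical Jackson-type approximation result for polynomial splines (de Boor), so the plan is to invoke the standard quasi-interpolant construction rather than build anything new. Let $\mathcal{U}=[a,b]$ and let $a=t_0<t_1<\dots<t_{N+1}=b$ be the interior and boundary knots of the cubic $B$-spline basis in Assumption~\ref{ass:K}. Write $h=\max_k(t_{k+1}-t_k)$. Since the knots are quasi-uniform, $h\asymp N^{-1}\asymp K^{-1}$, because $K=N+4$ for cubic splines.

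\textbf{Step 1: construct the coefficients.} I will take $\pmb{\gamma}_0^{\ast}$ to be the coefficient vector of the de Boor--Fix quasi-interpolant $Q\eta_0=\sum_{k=1}^K \lambda_k(\eta_0)B_k$. Each $\lambda_k$ is a linear functional supported on the support $I_k$ of $B_k$, which spans at most four knot intervals. These functionals have three properties I will use:
\begin{itemize}
\item they satisfy $|\lambda_k(f)|\le C\|f\|_{\infty,I_k}$, with $C$ depending only on the spline order and the mesh ratio;
\item the operator $Q$ reproduces all cubic polynomials;
\item the $B_k$ form a nonnegative partition of unity, and at each $u$ at most four of them are nonzero.
\end{itemize}

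\textbf{Step 2: local approximation error.} Fix $u$ in a knot interval $J$, and let $\tilde J$ be the union of the supports of the at most four $B_k$ that are nonzero on $J$. The length of $\tilde J$ is at most $Ch$. Let $P$ be the Taylor polynomial of $\eta_0$ on $\tilde J$ of degree $\min(s,4)-1\le 3$. Because $Q$ reproduces cubics, $QP=P$. Hence
\[
|\eta_0(u)-Q\eta_0(u)|=|(\eta_0-P)(u)-Q(\eta_0-P)(u)|\le (1+\|Q\|)\,\|\eta_0-P\|_{\infty,\tilde J}.
\]
Here $\|Q\|\le C$ follows from the two bounds in Step 1, namely $|\lambda_k(f)|\le C\|f\|_{\infty}$ and the partition-of-unity property. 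By Taylor's theorem and Assumption~\ref{ass:smooth}, $\|\eta_0-P\|_{\infty,\tilde J}\le C h^{\min(s,4)}\sup|\eta_0^{(\min(s,4))}|$. Taking the supremum over $J$ gives the bound $C K^{-\min(s,4)}$.

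\textbf{Step 3: the role of $s$.} Cubic splines saturate at order $4$. The bound therefore reads $CK^{-s}$ exactly when $2\le s\le 4$, which is the regime the paper uses; the $\sqrt{n}K^{-s}\to0$ condition is read with $s$ in this range. If $s>4$ were wanted, one would need a spline order of at least $s$. I will state this caveat explicitly: the rate is $K^{-\min(s,4)}$, interpreted as $K^{-s}$ under the standing convention $s\le 4$.

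Assumption~\ref{ass:index_density} is needed only so that $\mathcal{U}$ is a fixed compact interval carrying the knots, and it plays no further role. The main point requiring care is the uniformity of the constant in Steps 1--2. This rests entirely on the quasi-uniform mesh ratio being bounded, which keeps $\|Q\|$ bounded independently of $K$. I will cite de Boor (2001, Ch.~XII) for the functional bounds rather than re-derive them.
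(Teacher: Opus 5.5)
Your proposal is correct and is essentially the standard Jackson-type quasi-interpolant argument that the paper's one-line proof simply cites from the spline literature. Your Step~3 caveat is also right, and it corrects something the paper omits: cubic splines saturate at order four (e.g.\ $u^4$ lies at sup-norm distance at least $cK^{-4}$ from every piecewise cubic on a quasi-uniform mesh), so the lemma as stated holds only for $2\le s\le 4$, and in general the rate is $K^{-\min(s,4)}$.

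One precision: the literal de Boor--Fix functionals evaluate $f,f',f'',f'''$ at a point, so they are neither sup-norm bounded nor even defined on $\eta_0$ when $s<3$. The bounded functionals you need come from applying them to a local cubic interpolant of $f$ on a subinterval of $I_k$, which is the construction in de Boor (2001, Ch.~XII). For that construction $\|Q\|$ depends only on the spline order, so quasi-uniformity is used only to turn $h$ into $K^{-1}$.
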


\begin{proof}
This is a standard property of polynomial spline approximation on a compact interval with quasi-uniform knots.
Since $\eta_0$ is $s$ times continuously differentiable with bounded $s$-th derivative (Assumption~\ref{ass:smooth}),
the spline space of dimension $K$ contains an approximant with sup-norm error $O(K^{-s})$.
A detailed construction can be found in \citep{huang2004polynomial,he2000parameters}.
\end{proof}

\begin{lemma}[Uniform consistency of the profiled nuisance]\label{lem:gamma_uniform}
Under Assumptions~\ref{ass:sampling}--\ref{ass:V}. Let $\widehat{\pmb{\gamma}}(\pmb{\theta})$ solve \eqref{eq:gamma_prof} and $\pmb{\gamma}_0(\pmb{\theta})$ solve \eqref{eq:gamma_pop}. Then, uniformly over $\pmb{\theta}$ in a neighborhood $\mathcal{N}$ of $\pmb{\theta}_0$,
\[
\left\|\widehat{\pmb{\gamma}}(\pmb{\theta})-\pmb{\gamma}_0(\pmb{\theta})\right\|_2
=
O_{\mathbb{P}}\!\left(\sqrt{\frac{K}{n}}+K^{-s}\right).
\]
Consequently, uniformly over $\pmb{\theta}\in\mathcal{N}$,
\[
\sup_{u\in\mathcal{U}}
\left|
\pmb{B}(u)^{\top}\widehat{\pmb{\gamma}}(\pmb{\theta})
-
\pmb{B}(u)^{\top}\pmb{\gamma}_0(\pmb{\theta})
\right|
=
O_{\mathbb{P}}\!\left(\sqrt{\frac{K}{n}}+K^{-s}\right).
\]
\end{lemma}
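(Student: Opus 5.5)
We treat \eqref{eq:gamma_prof} for fixed $\pmb{\theta}$ as a $K$-dimensional quasi-score equation, a GLM with offset $\pmb{X}_i\pmb{\beta}$ and design $\pmb{B}_i(\pmb{\theta})$. We bound the distance of its root from $\pmb{\gamma}_0(\pmb{\theta})$ by a standard increasing-dimension $Z$-estimator argument, made uniform in $\pmb{\theta}\in\mathcal{N}$.

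Write $\pmb{\Psi}_n(\pmb{\theta},\pmb{\gamma})=n^{-1}\sum_i \pmb{B}_i(\pmb{\theta})^{\top}\pmb{\Delta}_i\pmb{V}_i^{-1}(\pmb{Y}_i-\pmb{\mu}_i)$ and let $\pmb{\Psi}(\pmb{\theta},\pmb{\gamma})$ be its expectation, so that $\pmb{\Psi}(\pmb{\theta},\pmb{\gamma}_0(\pmb{\theta}))=\pmb{0}$. The proof proceeds in four steps.

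\textbf{Step 1: local curvature.} The first step is to show that the Jacobian
\[
-\partial\pmb{\Psi}/\partial\pmb{\gamma}^{\top}=\mathbb{E}\bigl[\pmb{B}_i^{\top}\pmb{\Delta}_i\pmb{V}_i^{-1}\pmb{\Delta}_i\pmb{B}_i\bigr]+(\text{terms involving }\pmb{Y}_i-\pmb{\mu}_i)
\]
has all eigenvalues of order $K^{-1}$, uniformly over $\pmb{\theta}\in\mathcal{N}$ and over $\pmb{\gamma}$ in a shrinking neighbourhood. This uses three facts:
\begin{itemize}[leftmargin=1.25em]
\item Assumptions~\ref{ass:link} and \ref{ass:V} bound $\pmb{\Delta}_i$ and $\pmb{V}_i^{-1}$ above and below.
\item Assumption~\ref{ass:index_density} makes the density of $\pmb{z}_{ij}^{\top}\pmb{\alpha}(\pmb{\varphi})$ bounded above and below for $\pmb{\varphi}$ near $\pmb{\varphi}_0$, which follows by continuity.
\item For B-splines with quasi-uniform knots, $\mathbb{E}[\pmb{B}(U)\pmb{B}(U)^{\top}]$ has eigenvalues $\asymp K^{-1}$ (de Boor's stability inequality).
\end{itemize}
The residual terms are handled at $\pmb{\gamma}=\pmb{\gamma}_0(\pmb{\theta})$, where the moment equation \eqref{eq:gamma_pop} holds, together with smoothness from Assumption~\ref{ass:link}. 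The empirical version of this Jacobian concentrates around its mean in operator norm, uniformly in $\pmb{\theta}$. The error is $O_{\mathbb{P}}(\sqrt{K\log n/n})\cdot K^{-1/2}$, obtained by a matrix Bernstein inequality combined with a finite net over $\mathcal{N}$ and Lipschitz continuity in $\pmb{\theta}$. Since $K^2/n\to0$, this is $o_{\mathbb{P}}(K^{-1})$.

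\textbf{Step 2: score size at the centre.} Next we bound
\[
\mathbb{E}\|\pmb{\Psi}_n(\pmb{\theta},\pmb{\gamma}_0(\pmb{\theta}))\|_2^2=n^{-1}\mathbb{E}\|\pmb{B}_i^{\top}\pmb{\Delta}_i\pmb{V}_i^{-1}(\pmb{Y}_i-\pmb{\mu}_{i,0})\|_2^2\le C n^{-1}\sum_k\mathbb{E}B_k(U)^2\le CK^{-1}\cdot K/n .
\]
Hence $\|\pmb{\Psi}_n\|_2=O_{\mathbb{P}}(\sqrt{1/n})$ in a scale where the Jacobian is of order $K^{-1}$. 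Here the B-splines are unnormalised, each bounded by $1$ and supported on $O(K^{-1})$ of the range. Uniformity in $\pmb{\theta}$ follows from the same chaining argument.

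\textbf{Step 3: locate the root.} Combining Steps 1 and 2 via a Brouwer or Newton–Kantorovich argument on the ball $\{\pmb{\gamma}:\|\pmb{\gamma}-\pmb{\gamma}_0(\pmb{\theta})\|_2\le M\sqrt{K/n}\}$, a root exists in this ball with probability tending to one, uniformly in $\pmb{\theta}\in\mathcal{N}$. The radius comes from the ratio of the score size to the curvature, namely $n^{-1/2}/K^{-1}\cdot K^{-1/2}$. The $K^{-s}$ term enters if one centres instead at the spline approximant $\pmb{\gamma}_0^\ast$ of Lemma~\ref{lem:spline_approx}, whose population score is $O(K^{-s}\cdot K^{-1/2})$; including it gives the stated rate $\sqrt{K/n}+K^{-s}$ in any case. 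Uniqueness of the root within the ball follows from the curvature bound.

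\textbf{Step 4: sup-norm statement.} Because the B-splines are nonnegative and sum to one, $|\pmb{B}(u)^{\top}\pmb{a}|\le\|\pmb{a}\|_\infty\le\|\pmb{a}\|_2$. The second display therefore follows immediately from the first.

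\textbf{Main obstacle.} We expect the hardest step to be the uniformity over $\pmb{\theta}$ in Step 1. The design $\pmb{B}_i(\pmb{\theta})$ moves with $\pmb{\theta}$, so the concentration requires a covering of $\mathcal{N}$ together with a Lipschitz bound in $\pmb{\theta}$. That Lipschitz bound involves $\dot{\pmb{B}}=O(K)$, which inflates the constant by a factor of $K$. We would choose the net mesh to be polynomially small in $n$, so that the logarithmic cost $\log n$ is absorbed by $K^2/n\to0$.
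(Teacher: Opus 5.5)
\textbf{The gap is in Step 3.} The radius you assign to the root does not follow from Steps 1--2. Worse, in the normalization you adopt it is not the true order.

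With partition-of-unity $B$-splines, Step 1 gives a Jacobian $\pmb{J}=-\partial\pmb{\Psi}/\partial\pmb{\gamma}^{\top}$ with all eigenvalues $\asymp K^{-1}$, and Step 2 gives $\|\pmb{\Psi}_n(\pmb{\theta},\pmb{\gamma}_0(\pmb{\theta}))\|_2\asymp n^{-1/2}$. On the sphere $\|\pmb{\delta}\|_2=r$, the Brouwer condition $\pmb{\delta}^{\top}\pmb{\Psi}_n(\pmb{\theta},\pmb{\gamma}_0+\pmb{\delta})<0$ requires $cK^{-1}r^2>r\|\pmb{\Psi}_n\|_2$, i.e.\ $r\gtrsim K/\sqrt{n}$. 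The extra factor $K^{-1/2}$ in your ratio $n^{-1/2}/K^{-1}\cdot K^{-1/2}$ has no source.

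No sharper argument can recover it. The leading term of $\widehat{\pmb{\gamma}}-\pmb{\gamma}_0$ is $\pmb{J}^{-1}\pmb{\Psi}_n$, and $\mathbb{E}\|\pmb{J}^{-1}\pmb{\Psi}_n\|_2^2=n^{-1}\mathrm{tr}(\pmb{J}^{-1}\pmb{\Sigma}\pmb{J}^{-1})\asymp K^2/n$. This holds because $\pmb{J}$ and $\pmb{\Sigma}=\mathrm{Var}\{\pmb{B}_i^{\top}\pmb{\Delta}_i\pmb{V}_i^{-1}(\pmb{Y}_i-\pmb{\mu}_{i,0})\}$ both have eigenvalues of exact order $K^{-1}$ for nondegenerate errors. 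Intuitively, each of the $K$ coefficients is a local average over roughly $n/K$ subjects, so each has standard deviation of order $\sqrt{K/n}$.

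What is $O_{\mathbb{P}}(\sqrt{K/n})$ is the $\pmb{J}$-weighted norm $\{(\widehat{\pmb{\gamma}}-\pmb{\gamma}_0)^{\top}\pmb{J}(\widehat{\pmb{\gamma}}-\pmb{\gamma}_0)\}^{1/2}$, i.e.\ the $L^2$ norm of the function $\pmb{B}(\cdot)^{\top}(\widehat{\pmb{\gamma}}-\pmb{\gamma}_0)$. You have conflated it with the coefficient $\ell_2$ norm. The same slip appears in your $K^{-s}$ remark: a score of size $K^{-s-1/2}$ over curvature $K^{-1}$ gives $K^{1/2-s}$, not $K^{-s}$. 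That term is superfluous anyway, since $\pmb{\gamma}_0^{\ast}$ does not enter $\widehat{\pmb{\gamma}}(\pmb{\theta})-\pmb{\gamma}_0(\pmb{\theta})$.

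\textbf{Consequence for Step 4.} Your Step 4 only delivers $\sup_u|\pmb{B}(u)^{\top}(\widehat{\pmb{\gamma}}-\pmb{\gamma}_0)|=O_{\mathbb{P}}(K/\sqrt{n})$. Rescaling the basis to $\sqrt{K}B_k$ does not rescue this. The curvature is then bounded below and the first display holds at rate $\sqrt{K/n}$; this is what the paper's proof tacitly uses when it asserts $\|\pmb{J}_0^{-1}\|\le c$. But now $\sup_u\|\pmb{B}(u)\|_2\asymp\sqrt{K}$, and Cauchy--Schwarz again gives only $K/\sqrt{n}$. The paper's final ``bounded spline basis'' step has exactly this mismatch, so it cannot be borrowed to close your gap.

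A sup-norm rate of the stated type needs an $\ell_\infty$ bound on the coefficients, which no $\ell_2$-ball argument provides. One route works with a diagonal working covariance:
\begin{itemize}
\item $\pmb{J}_n$ is then banded with eigenvalues $\asymp K^{-1}$, so Demko-type exponential decay of inverses of banded matrices gives $\|\pmb{J}_n^{-1}\|_\infty=O_{\mathbb{P}}(K)$ in max-row-sum norm.
\item Combine this with a maximal inequality $\|\pmb{\Psi}_n(\pmb{\theta},\pmb{\gamma}_0(\pmb{\theta}))\|_\infty=O_{\mathbb{P}}(\sqrt{\log K/(nK)})$ and with $|\pmb{B}(u)^{\top}\pmb{a}|\le\|\pmb{a}\|_\infty$.
\end{itemize}
This yields $\sqrt{K\log K/n}$, so expect a logarithmic factor beyond the stated rate. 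The cross-visit terms from a non-diagonal $\pmb{R}_i$ destroy bandedness and need further work.

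\textbf{Two smaller points.}
\begin{itemize}
\item In Step 1 the matrix-Bernstein deviation is $O_{\mathbb{P}}(\sqrt{\log n/(nK)})$, not $\sqrt{\log n/n}$. The latter is $o(K^{-1})$ only if $K^2\log n/n\to0$.
\item The moment equation at $\pmb{\gamma}_0(\pmb{\theta})$ does not annihilate the residual terms of the Jacobian, since they carry a different weighting than the score. They are small only because the conditional mean of $\pmb{Y}_i-\pmb{\mu}_i$ given the covariates is small near $(\pmb{\theta}_0,\pmb{\gamma}_0(\pmb{\theta}_0))$, which forces $\mathcal{N}$ to be small.
\end{itemize}
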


\begin{proof}
Fix a neighborhood $\mathcal{N}$ of $\pmb{\theta}_0$, $\forall\pmb{\theta}\in\mathcal{N}$, define the sample and population inner estimating maps
\[
\pmb{\Phi}_n(\pmb{\gamma};\pmb{\theta})
=
\frac{1}{n}\sum_{i=1}^{n}
\pmb{B}_i(\pmb{\theta})^{\top}
\pmb{\Delta}_i(\pmb{\theta},\pmb{\gamma})
\pmb{V}_i(\pmb{\theta},\pmb{\gamma})^{-1}
\left\{\pmb{Y}_i-\pmb{\mu}_i(\pmb{\theta},\pmb{\gamma})\right\},
\]
\[
\pmb{\Phi}_0(\pmb{\gamma};\pmb{\theta})
=
\mathbb{E}\!\left[
\pmb{B}_i(\pmb{\theta})^{\top}
\pmb{\Delta}_i(\pmb{\theta},\pmb{\gamma})
\pmb{V}_i(\pmb{\theta},\pmb{\gamma})^{-1}
\left\{\pmb{Y}_i-\pmb{\mu}_i(\pmb{\theta},\pmb{\gamma})\right\}
\right].
\]
By definition, $\widehat{\pmb{\gamma}}(\pmb{\theta})$ satisfies $\pmb{\Phi}_n(\widehat{\pmb{\gamma}}(\pmb{\theta});\pmb{\theta})=\pmb{0}$ and $\pmb{\gamma}_0(\pmb{\theta})$ satisfies $\pmb{\Phi}_0(\pmb{\gamma}_0(\pmb{\theta});\pmb{\theta})=\pmb{0}$.

We first control the uniform stochastic fluctuation of $\pmb{\Phi}_n$ around $\pmb{\Phi}_0$. Under Assumptions~\ref{ass:sampling}, \ref{ass:moments}, \ref{ass:link}, and \ref{ass:V}, together with $\max_i m_i\le M$, the   contributions are i.i.d.\ across $i$ with an integrable envelope, and the maps $(\pmb{\theta},\pmb{\gamma})\mapsto \pmb{\Phi}_n(\pmb{\gamma};\pmb{\theta})$ are uniformly Lipschitz in $(\pmb{\theta},\pmb{\gamma})$ on $\mathcal{N}\times \Gamma_K$ for any bounded set $\Gamma_K\subset\mathbb{R}^K$ containing the relevant solutions. Moreover, because the effective dimension of the sieve component is $K$, standard empirical-process bounds for finite-dimensional sieve scores yield
\begin{equation}\label{eq:Phi_uln}
\sup_{\pmb{\theta}\in\mathcal{N}}\sup_{\pmb{\gamma}\in\Gamma_K}
\left\|
\pmb{\Phi}_n(\pmb{\gamma};\pmb{\theta})-\pmb{\Phi}_0(\pmb{\gamma};\pmb{\theta})
\right\|_2
=
O_{\mathbb{P}}\!\left(\sqrt{\frac{K}{n}}\right).
\end{equation}

Next we establish local invertibility (uniformly in $\pmb{\theta}$) of the population Jacobian w.r.t $\pmb{\gamma}$. Denote
\[
\pmb{J}_0(\pmb{\gamma};\pmb{\theta})
=
\frac{\partial}{\partial \pmb{\gamma}^{\top}}\pmb{\Phi}_0(\pmb{\gamma};\pmb{\theta}).
\]
By Assumptions~\ref{ass:link} and \ref{ass:V}, $\pmb{J}_0(\pmb{\gamma};\pmb{\theta})$ exists and is continuous in $(\pmb{\gamma},\pmb{\theta})$. Furthermore, since $\pmb{\Phi}_0(\pmb{\gamma};\pmb{\theta})$ is a generalized least-squares normal equation in $\pmb{\gamma}$, the matrix $-\pmb{J}_0(\pmb{\gamma}_0(\pmb{\theta});\pmb{\theta})$ is positive definite and its eigenvalues are uniformly bounded away from $0$ and $\infty$ over $\pmb{\theta}\in\mathcal{N}$. Consequently, there exists $c>0$ such that
\begin{equation}\label{eq:J0_inv}
\sup_{\pmb{\theta}\in\mathcal{N}}
\left\|
\pmb{J}_0(\pmb{\gamma}_0(\pmb{\theta});\pmb{\theta})^{-1}
\right\|
\le c.
\end{equation}
By the same continuity argument and \eqref{eq:Phi_uln}, the sample Jacobian $\pmb{J}_n(\pmb{\gamma};\pmb{\theta})=\partial \pmb{\Phi}_n(\pmb{\gamma};\pmb{\theta})/\partial \pmb{\gamma}^{\top}$ converges uniformly to $\pmb{J}_0(\pmb{\gamma};\pmb{\theta})$ on $\mathcal{N}\times\Gamma_K$, hence is invertible uniformly with probability tending to one.

We now relate $\widehat{\pmb{\gamma}}(\pmb{\theta})$ to $\pmb{\gamma}_0(\pmb{\theta})$. For any fixed $\pmb{\theta}\in\mathcal{N}$, apply the mean value theorem to $\pmb{\Phi}_n(\widehat{\pmb{\gamma}}(\pmb{\theta});\pmb{\theta})-\pmb{\Phi}_n(\pmb{\gamma}_0(\pmb{\theta});\pmb{\theta})$: there exists $\tilde{\pmb{\gamma}}(\pmb{\theta})$ on the segment joining $\widehat{\pmb{\gamma}}(\pmb{\theta})$ and $\pmb{\gamma}_0(\pmb{\theta})$ such that
\[
\pmb{0}
=
\pmb{\Phi}_n(\widehat{\pmb{\gamma}}(\pmb{\theta});\pmb{\theta})
=
\pmb{\Phi}_n(\pmb{\gamma}_0(\pmb{\theta});\pmb{\theta})
+
\pmb{J}_n(\tilde{\pmb{\gamma}}(\pmb{\theta});\pmb{\theta})
\left\{\widehat{\pmb{\gamma}}(\pmb{\theta})-\pmb{\gamma}_0(\pmb{\theta})\right\}.
\]
Therefore,
\[
\widehat{\pmb{\gamma}}(\pmb{\theta})-\pmb{\gamma}_0(\pmb{\theta})
=
-\pmb{J}_n(\tilde{\pmb{\gamma}}(\pmb{\theta});\pmb{\theta})^{-1}
\pmb{\Phi}_n(\pmb{\gamma}_0(\pmb{\theta});\pmb{\theta}).
\]
Decompose the right-hand side as
\[
\pmb{\Phi}_n(\pmb{\gamma}_0(\pmb{\theta});\pmb{\theta})
=
\left\{\pmb{\Phi}_n(\pmb{\gamma}_0(\pmb{\theta});\pmb{\theta})-\pmb{\Phi}_0(\pmb{\gamma}_0(\pmb{\theta});\pmb{\theta})\right\}
+
\pmb{\Phi}_0(\pmb{\gamma}_0(\pmb{\theta});\pmb{\theta}).
\]
The second term is exactly $\pmb{0}$ by definition of $\pmb{\gamma}_0(\pmb{\theta})$, hence
\[
\pmb{\Phi}_n(\pmb{\gamma}_0(\pmb{\theta});\pmb{\theta})
=
\pmb{\Phi}_n(\pmb{\gamma}_0(\pmb{\theta});\pmb{\theta})-\pmb{\Phi}_0(\pmb{\gamma}_0(\pmb{\theta});\pmb{\theta}).
\]
Combining this with the uniform invertibility of $\pmb{J}_n$ and the uniform bound \eqref{eq:Phi_uln} gives, uniformly for $\pmb{\theta}\in\mathcal{N}$,
\[
\left\|\widehat{\pmb{\gamma}}(\pmb{\theta})-\pmb{\gamma}_0(\pmb{\theta})\right\|_2
=
O_{\mathbb{P}}\!\left(\sqrt{\frac{K}{n}}\right).
\]

To incorporate the sieve approximation error, let $\pmb{\gamma}_0^{\ast}(K)$ be the spline coefficient vector in Lemma~\ref{lem:spline_approx} so that $\sup_{u\in\mathcal{U}}|\eta_0(u)-\pmb{B}(u)^{\top}\pmb{\gamma}_0^{\ast}(K)|\le C K^{-s}$. Under Assumptions~\ref{ass:link} and \ref{ass:V}, the map $\pmb{\gamma}\mapsto \pmb{\Phi}_0(\pmb{\gamma};\pmb{\theta})$ is continuously differentiable with Jacobian uniformly nonsingular around $\pmb{\gamma}_0(\pmb{\theta})$. 

Because the population equation \eqref{eq:gamma_pop} is the sieve-restricted population score, the deviation between its root $\pmb{\gamma}_0(\pmb{\theta})$ and the oracle approximation $\pmb{\gamma}_0^{\ast}(K)$ is controlled by the approximation error through a standard implicit-function perturbation argument, yielding an additional $O(K^{-s})$ term. Hence, uniformly on $\mathcal{N}$,
\[
\left\|\widehat{\pmb{\gamma}}(\pmb{\theta})-\pmb{\gamma}_0(\pmb{\theta})\right\|_2
=
O_{\mathbb{P}}\!\left(\sqrt{\frac{K}{n}}+K^{-s}\right).
\]

Finally, the stated sup-norm bound follows from boundedness of the spline basis on $\mathcal{U}$:
\[
\sup_{u\in\mathcal{U}}
\left|
\pmb{B}(u)^{\top}\widehat{\pmb{\gamma}}(\pmb{\theta})
-
\pmb{B}(u)^{\top}\pmb{\gamma}_0(\pmb{\theta})
\right|
\le
\sup_{u\in\mathcal{U}}\|\pmb{B}(u)\|_2\cdot
\left\|\widehat{\pmb{\gamma}}(\pmb{\theta})-\pmb{\gamma}_0(\pmb{\theta})\right\|_2,
\]
which is
$O_{\mathbb{P}}(\sqrt{K/n}+K^{-s})$ uniformly over $\pmb{\theta}\in\mathcal{N}$.
\end{proof}

\begin{lemma}[Linearization of the profile estimating equation]\label{lem:U_linear}
Under Assumptions~\ref{ass:sampling}--\ref{ass:H}, the sample profile map $\pmb{U}_n(\pmb{\theta})=n^{-1}\sum_{i=1}^{n}\pmb{g}_i(\pmb{\theta})$ satisfies, uniformly for $\pmb{\theta}$ in a neighborhood of $\pmb{\theta}_0$,
\[
\pmb{U}_n(\pmb{\theta})
=
\pmb{U}_n(\pmb{\theta}_0)
+\pmb{H}_0(\pmb{\theta}-\pmb{\theta}_0)
+o_{\mathbb{P}}\!\left(\|\pmb{\theta}-\pmb{\theta}_0\|_2\right)
+o_{\mathbb{P}}(n^{-1/2}).
\]
Moreover, $\sqrt{n}\,\pmb{U}_n(\pmb{\theta}_0)\Rightarrow \mathcal{N}(\pmb{0},\pmb{S}_0)$.
\end{lemma}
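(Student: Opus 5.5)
The plan is to split $\pmb{U}_n(\pmb{\theta})$ into a population (oracle) part and a nuisance-estimation remainder. The oracle part is handled by standard finite-dimensional arguments. For the remainder, the key claim is that its leading term vanishes by profile orthogonality, so that what is left is $o_{\mathbb{P}}(n^{-1/2})$ uniformly in $\pmb{\theta}$. Concretely, write
\[
\pmb{U}_n(\pmb{\theta})=\pmb{U}_{n,0}(\pmb{\theta})+\pmb{R}_n(\pmb{\theta}),\qquad
\pmb{U}_{n,0}(\pmb{\theta})=\frac1n\sum_{i=1}^n\pmb{g}_{i,0}(\pmb{\theta}),
\]
where $\pmb{R}_n(\pmb{\theta})=n^{-1}\sum_i\{\pmb{g}_i(\pmb{\theta})-\pmb{g}_{i,0}(\pmb{\theta})\}$ collects the effect of replacing $\pmb{\gamma}_0(\pmb{\theta})$ by $\widehat{\pmb{\gamma}}(\pmb{\theta})$, and $\pmb{\rho}^{\dagger}$ by $\widehat{\pmb{\rho}}$.

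\textbf{Oracle part.} Since $\pmb{g}_{i,0}(\pmb{\theta})$ is a fixed $d$-dimensional function of i.i.d.\ blocks (Assumption~\ref{ass:sampling}), I would first handle it directly. Smoothness in $\pmb{\theta}$ comes from Assumptions~\ref{ass:link} and \ref{ass:V} and the compact support in Assumption~\ref{ass:moments}. Together these give a stochastic equicontinuity bound
\[
\sup_{\|\pmb{\theta}-\pmb{\theta}_0\|\le\delta_n}\bigl\|\{\pmb{U}_{n,0}(\pmb{\theta})-\pmb{U}_0(\pmb{\theta})\}-\{\pmb{U}_{n,0}(\pmb{\theta}_0)-\pmb{U}_0(\pmb{\theta}_0)\}\bigr\|=o_{\mathbb{P}}(n^{-1/2}+\|\pmb{\theta}-\pmb{\theta}_0\|).
\]
This follows from a Lipschitz (in $\pmb{\theta}$) envelope and a finite-dimensional bracketing or Donsker argument. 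Differentiability of $\pmb{U}_0$ at $\pmb{\theta}_0$ with derivative $\pmb{H}_0$ then yields $\pmb{U}_0(\pmb{\theta})=\pmb{U}_0(\pmb{\theta}_0)+\pmb{H}_0(\pmb{\theta}-\pmb{\theta}_0)+o(\|\pmb{\theta}-\pmb{\theta}_0\|)$. I would also verify $\pmb{U}_0(\pmb{\theta}_0)=o(n^{-1/2})$: at $\pmb{\theta}_0$ the population sieve mean differs from the true mean by $O(K^{-s})$ (Lemma~\ref{lem:spline_approx} plus the perturbation step in Lemma~\ref{lem:gamma_uniform}), and $\sqrt nK^{-s}\to0$ by Assumption~\ref{ass:K}.

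\textbf{Remainder.} This is the main obstacle. Expanding $\pmb{g}_i$ in $\pmb{\gamma}$ around $\pmb{\gamma}_0(\pmb{\theta})$ gives $\pmb{R}_n(\pmb{\theta})=\pmb{T}_{1n}(\pmb{\theta})+\pmb{T}_{2n}(\pmb{\theta})$, with
\[
\pmb{T}_{1n}(\pmb{\theta})=\frac1n\sum_i\frac{\partial\pmb{g}_{i}}{\partial\pmb{\gamma}^{\top}}\{\widehat{\pmb{\gamma}}(\pmb{\theta})-\pmb{\gamma}_0(\pmb{\theta})\}.
\]
The quadratic term $\pmb{T}_{2n}$ is bounded by $\|\widehat{\pmb{\gamma}}-\pmb{\gamma}_0\|^2$ times a bounded factor. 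By Lemma~\ref{lem:gamma_uniform} this is $O_{\mathbb{P}}(K/n+K^{-2s})=o_{\mathbb{P}}(n^{-1/2})$, using $K^2/n\to0$ and $\sqrt nK^{-s}\to0$.

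For $\pmb{T}_{1n}$, write $n^{-1}\sum_i\partial\pmb{g}_i/\partial\pmb{\gamma}^{\top}=\mathbb{E}[\cdot]+O_{\mathbb{P}}(\sqrt{K/n})$. The centered piece then contributes $O_{\mathbb{P}}(\sqrt{K/n})\cdot O_{\mathbb{P}}(\sqrt{K/n}+K^{-s})=o_{\mathbb{P}}(n^{-1/2})$. The expectation piece is where profile orthogonality enters. Because $\pmb{G}_{i,0}$ is the total derivative of the profiled mean, its columns are the efficient-score-type directions, i.e.\ the $\pmb{\theta}$-scores with the projection onto the spline tangent space $\{\pmb{\Delta}_i\pmb{B}_i\pmb{c}\}$ removed in the $\pmb{V}_i^{-1}$ inner product. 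Differentiating the population equation \eqref{eq:gamma_pop} implicitly in $\pmb{\theta}$ gives exactly this. Hence
\[
\mathbb{E}\bigl[\pmb{G}_{i,0}^{\top}\pmb{V}_{i,0}^{-1}\pmb{\Delta}_i\pmb{B}_i\bigr]=O(K^{-s})
\]
in the appropriate operator sense, and the terms where $\partial/\partial\pmb{\gamma}$ falls on $\pmb{G}_i$ or $\pmb{V}_i$ are multiplied by $\pmb{Y}_i-\pmb{\mu}_{i,0}$, whose conditional mean is $O(K^{-s})$. So the expectation piece is $O(K^{-s})\cdot O_{\mathbb{P}}(\sqrt{K/n}+K^{-s})=o_{\mathbb{P}}(n^{-1/2})$. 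The $\widehat{\pmb{\rho}}$ perturbation is handled the same way: its derivative term has mean zero up to $O(K^{-s})$ because it multiplies the residual, and $\widehat{\pmb{\rho}}-\pmb{\rho}^{\dagger}=o_{\mathbb{P}}(1)$, giving $o_{\mathbb{P}}(n^{-1/2})$. Uniformity in $\pmb{\theta}$ uses the uniform statement of Lemma~\ref{lem:gamma_uniform}. I expect the delicate point to be making the orthogonality bound rigorous, in particular controlling the norm of a $d\times K$ matrix against a $K$-vector whose norm grows. I would try to handle this by working with the sup-norm form in Lemma~\ref{lem:gamma_uniform} directly, rather than with Euclidean norms of coefficient vectors.

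\textbf{Asymptotic normality.} Since $\pmb{U}_n(\pmb{\theta}_0)=\pmb{U}_{n,0}(\pmb{\theta}_0)+o_{\mathbb{P}}(n^{-1/2})$ by the previous step, and $\pmb{g}_{i,0}(\pmb{\theta}_0)$ are i.i.d.\ with mean $o(n^{-1/2})$, variance $\pmb{S}_0$ and finite fourth moment (Assumption~\ref{ass:moments}), the multivariate Lindeberg--Lévy CLT combined with Slutsky gives $\sqrt n\,\pmb{U}_n(\pmb{\theta}_0)\Rightarrow\mathcal N(\pmb 0,\pmb{S}_0)$.
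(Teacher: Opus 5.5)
Your architecture (oracle part, nuisance remainder, CLT at $\pmb{\theta}_0$) matches the paper's, but you handle the remainder by a genuinely different argument, and yours is the right one.

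The paper bounds $\sup_{\pmb{\theta}}\|\pmb{U}_n(\pmb{\theta})-\pmb{U}_{n,0}(\pmb{\theta})\|_2$ by a first-order Lipschitz argument in $\pmb{\gamma}$, obtaining $O_{\mathbb{P}}(\sqrt{K/n}+K^{-s})$. It then asserts this is $o_{\mathbb{P}}(n^{-1/2})$ by Assumption~\ref{ass:K}. Since $\sqrt{K/n}=\sqrt{K}\,n^{-1/2}$ with $K\to\infty$, that step fails.

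Your second-order expansion repairs this. The linear term is killed up to $O(K^{-s})$ by the orthogonality identity obtained by implicitly differentiating \eqref{eq:gamma_pop}, and the quadratic term is $O_{\mathbb{P}}(K/n+K^{-2s})$. This is exactly what is needed, and it is where $K^{2}/n\to0$ is genuinely used. The paper's introduction names profile orthogonality as the key ingredient, but its proof never uses it.

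You also verify the centering $\pmb{U}_0(\pmb{\theta}_0)=O(K^{-s})=o(n^{-1/2})$, which the paper's CLT step takes for granted. Because $\pmb{g}_{i,0}$ depends on $K$, you need a triangular-array CLT (Lindeberg--Feller, or Lyapunov with a fourth-moment bound uniform in $n$), not Lindeberg--L\'evy.

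For the oracle part, your stochastic equicontinuity on shrinking balls and the paper's mean-value-plus-ULLN argument are interchangeable. The shrinking-ball form is the correct one: over a fixed neighborhood, $n^{-1}\sum_i\pmb{H}_{i,0}(\pmb{\theta})$ tends to $\partial\pmb{U}_0(\pmb{\theta})/\partial\pmb{\theta}^{\top}$ rather than to $\pmb{H}_0$.

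Three things need tightening.

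\emph{First}, $\pmb{G}_i(\pmb{\theta})$ depends on $\partial\widehat{\pmb{\gamma}}(\pmb{\theta})/\partial\pmb{\theta}^{\top}$, not only on $\widehat{\pmb{\gamma}}(\pmb{\theta})$, so ``expanding $\pmb{g}_i$ in $\pmb{\gamma}$'' is not yet well defined. The fix costs nothing. Since $\widehat{\pmb{\gamma}}(\pmb{\theta})$ solves \eqref{eq:gamma_prof} exactly and $\partial\pmb{\mu}_i/\partial\pmb{\gamma}^{\top}=\pmb{\Delta}_i\pmb{B}_i$, the $\partial\widehat{\pmb{\gamma}}/\partial\pmb{\theta}^{\top}$ part of $\sum_i\pmb{g}_i$ vanishes identically. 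You may therefore replace it by $\partial\pmb{\gamma}_0/\partial\pmb{\theta}^{\top}$, giving $\sum_i\pmb{g}_i(\pmb{\theta})=\sum_i\tilde{\pmb{g}}_i(\pmb{\theta},\widehat{\pmb{\gamma}}(\pmb{\theta}))$ with
\[
\tilde{\pmb{g}}_i(\pmb{\theta},\pmb{\gamma})=\Bigl\{\pmb{D}_i(\pmb{\theta},\pmb{\gamma})+\pmb{\Delta}_i(\pmb{\theta},\pmb{\gamma})\pmb{B}_i(\pmb{\theta})\frac{\partial\pmb{\gamma}_0(\pmb{\theta})}{\partial\pmb{\theta}^{\top}}\Bigr\}^{\top}\pmb{V}_i(\pmb{\theta},\pmb{\gamma})^{-1}\{\pmb{Y}_i-\pmb{\mu}_i(\pmb{\theta},\pmb{\gamma})\},
\]
and $\tilde{\pmb{g}}_i(\pmb{\theta},\pmb{\gamma}_0(\pmb{\theta}))=\pmb{g}_{i,0}(\pmb{\theta})$. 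Apply your expansion to $\tilde{\pmb{g}}_i$.

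\emph{Second}, for $\pmb{\theta}\neq\pmb{\theta}_0$ the conditional mean of $\pmb{Y}_i-\pmb{\mu}_{i,0}(\pmb{\theta})$ is $O(K^{-s}+\|\pmb{\theta}-\pmb{\theta}_0\|_2)$, not $O(K^{-s})$, so the orthogonality defect is of that order too. The extra part is absorbed into $o_{\mathbb{P}}(\|\pmb{\theta}-\pmb{\theta}_0\|_2)$, but this should be stated.

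\emph{Third}, the $\pmb{\varphi}$-columns of $\pmb{D}_i$ contain $\pmb{B}'(u)^{\top}\pmb{\gamma}$. Hence the residual-weighted part of $\pmb{T}_{1n}$ and the Hessian in $\pmb{T}_{2n}$ contain $\pmb{B}'(u)^{\top}(\widehat{\pmb{\gamma}}-\pmb{\gamma}_0)$. In $\pmb{T}_{2n}$ this term carries no residual factor and is multiplied by $\pmb{B}(u)^{\top}(\widehat{\pmb{\gamma}}-\pmb{\gamma}_0)$. So neither your ``bounded factor'' for $\pmb{T}_{2n}$ nor the $O_{\mathbb{P}}(\sqrt{K/n})$ operator bound for the centered matrix is automatic. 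Naive bounds via the spline inverse inequality $\|s'\|_\infty\lesssim K\|s\|_\infty$ lose a factor of $K$. This is where the delicate point you flagged actually sits. To close it you need either an integration-by-parts argument in $u$ or a growth condition on $K$ stronger than $K^2/n\to0$.
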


\begin{proof}
Write $\pmb{U}_n(\pmb{\theta})=n^{-1}\sum_{i=1}^{n}\pmb{g}_i(\pmb{\theta})$ and $\pmb{U}_{n,0}(\pmb{\theta})=n^{-1}\sum_{i=1}^{n}\pmb{g}_{i,0}(\pmb{\theta})$. We first show that replacing the profiled nuisance estimator $\widehat{\pmb{\gamma}}(\pmb{\theta})$ by its population counterpart $\pmb{\gamma}_0(\pmb{\theta})$ only induces an asymptotically negligible error at the $n^{-1/2}$ scale. 

By Lemma~\ref{lem:gamma_uniform} and the smoothness of the map $(\pmb{\theta},\pmb{\gamma})\mapsto \pmb{g}_i(\pmb{\theta})$ implied by Assumptions~\ref{ass:link} and \ref{ass:V}, there exists a neighborhood $\mathcal{N}$ of $\pmb{\theta}_0$ such that
\[
\sup_{\pmb{\theta}\in\mathcal{N}}
\left\|
\pmb{U}_n(\pmb{\theta})-\pmb{U}_{n,0}(\pmb{\theta})
\right\|_2
=
O_{\mathbb{P}}\!\left(\sqrt{\frac{K}{n}}+K^{-s}\right).
\]
Under Assumption~\ref{ass:K}, the right-hand side is $o_{\mathbb{P}}(n^{-1/2})$, hence uniformly on $\mathcal{N}$,
\begin{equation}\label{eq:Un_replace}
\pmb{U}_n(\pmb{\theta})
=
\pmb{U}_{n,0}(\pmb{\theta})+o_{\mathbb{P}}(n^{-1/2}).
\end{equation}

Next we linearize $\pmb{U}_{n,0}(\pmb{\theta})$ around $\pmb{\theta}_0$.
Let $\pmb{H}_{i,0}(\pmb{\theta})=\partial \pmb{g}_{i,0}(\pmb{\theta})/\partial \pmb{\theta}^{\top}$.
By Assumptions~\ref{ass:moments}--\ref{ass:V}, $\pmb{H}_{i,0}(\pmb{\theta})$ exists, is continuous in $\pmb{\theta}$, and is uniformly bounded on $\mathcal{N}$.
Therefore, for any $\pmb{\theta}\in\mathcal{N}$, a mean-value expansion yields
\[
\pmb{U}_{n,0}(\pmb{\theta})
=
\pmb{U}_{n,0}(\pmb{\theta}_0)
+
\left\{\frac{1}{n}\sum_{i=1}^{n}\pmb{H}_{i,0}(\tilde{\pmb{\theta}})\right\}
(\pmb{\theta}-\pmb{\theta}_0),
\]
where $\tilde{\pmb{\theta}}$ lies on the line segment joining $\pmb{\theta}$ and $\pmb{\theta}_0$.
By a uniform law of large numbers (using Assumptions~\ref{ass:sampling} and \ref{ass:moments}),
\[
\sup_{\pmb{\theta}\in\mathcal{N}}
\left\|
\frac{1}{n}\sum_{i=1}^{n}\pmb{H}_{i,0}(\pmb{\theta})-\pmb{H}_0
\right\|
=o_{\mathbb{P}}(1),
\]
so the derivative term can be replaced by $\pmb{H}_0$ up to $o_{\mathbb{P}}(1)$ uniformly on $\mathcal{N}$.
Consequently,
\[
\pmb{U}_{n,0}(\pmb{\theta})
=
\pmb{U}_{n,0}(\pmb{\theta}_0)
+\pmb{H}_0(\pmb{\theta}-\pmb{\theta}_0)
+o_{\mathbb{P}}\!\left(\|\pmb{\theta}-\pmb{\theta}_0\|_2\right),
\]
uniformly for $\pmb{\theta}\in\mathcal{N}$.

Finally we establish the central limit theorem at $\pmb{\theta}_0$.
Since subjects are independent (Assumption~\ref{ass:sampling}) and $\mathbb{E}\|\pmb{g}_{i,0}(\pmb{\theta}_0)\|_2^2<\infty$ (Assumption~\ref{ass:moments}),
the multivariate Lindeberg--Feller CLT yields
\[
\sqrt{n}\,\pmb{U}_{n,0}(\pmb{\theta}_0)
=
\frac{1}{\sqrt{n}}\sum_{i=1}^{n}\pmb{g}_{i,0}(\pmb{\theta}_0)
\Rightarrow
\mathcal{N}(\pmb{0},\pmb{S}_0).
\]
Combining this CLT with \eqref{eq:Un_replace} shows
$\sqrt{n}\,\pmb{U}_n(\pmb{\theta}_0)\Rightarrow \mathcal{N}(\pmb{0},\pmb{S}_0)$, and substituting \eqref{eq:Un_replace} into the above linearization of $\pmb{U}_{n,0}(\pmb{\theta})$ yields
\[
\pmb{U}_n(\pmb{\theta})
=
\pmb{U}_n(\pmb{\theta}_0)+\pmb{H}_0(\pmb{\theta}-\pmb{\theta}_0)
+o_{\mathbb{P}}\!\left(\|\pmb{\theta}-\pmb{\theta}_0\|_2\right)
+o_{\mathbb{P}}(n^{-1/2}),
\]
uniformly on $\mathcal{N}$. This completes the proof.
\end{proof}

\subsection{Proofs of Theorems}\label{app:proofs_main}

In this subsection we provide proofs for Theorems~\ref{thm:consistency}--\ref{thm:wilks}. Throughout, $\mathcal{N}$ denotes a sufficiently small open neighborhood of $\pmb{\theta}_0$ on which the expansions in Lemmas~\ref{lem:gamma_uniform} and \ref{lem:U_linear} hold uniformly, and where Assumption~\ref{ass:H} guarantees local uniqueness of the population root.

\begin{proof}[Proof of Theorem~\ref{thm:consistency}]
We prove existence of a root $\widehat{\pmb{\theta}}\in\mathcal{N}$ to \eqref{eq:Utheta} with probability tending to one, $\widehat{\pmb{\theta}}\to \pmb{\theta}_0$, and the stated rates for $\widehat{\pmb{\theta}}$ and $\widehat{\eta}$.

\paragraph{Existence and consistency of a local root.}
Let $\pmb{U}_n(\pmb{\theta})=n^{-1}\sum_{i=1}^{n}\pmb{g}_i(\pmb{\theta})$ be the sample profile estimating map so that \eqref{eq:Utheta} is $\pmb{U}_n(\pmb{\theta})=\pmb{0}$, and let $\pmb{U}_0(\pmb{\theta})=\mathbb{E}\{\pmb{g}_{i,0}(\pmb{\theta})\}$ be its population counterpart. 

By Assumption~\ref{ass:H}, $\pmb{U}_0(\pmb{\theta})$ is continuously differentiable on $\mathcal{N}$, has a unique zero at $\pmb{\theta}_0$, and $\pmb{H}_0=\partial \pmb{U}_0(\pmb{\theta})/\partial \pmb{\theta}^{\top}|_{\pmb{\theta}=\pmb{\theta}_0}$ is nonsingular. Thus, there exists $\varepsilon>0$ such that on the sphere $\partial\mathcal{B}(\pmb{\theta}_0,\varepsilon)$,
\begin{equation}\label{eq:U0_separation}
\inf_{\|\pmb{\theta}-\pmb{\theta}_0\|_2=\varepsilon}\|\pmb{U}_0(\pmb{\theta})\|_2
\ge c_0>0.
\end{equation}
Lemma~\ref{lem:U_linear} implies $\sup_{\pmb{\theta}\in\mathcal{N}}\|\pmb{U}_n(\pmb{\theta})-\pmb{U}_0(\pmb{\theta})\|_2=o_{\mathbb{P}}(1)$. Hence, with probability tending to one,
\[
\sup_{\|\pmb{\theta}-\pmb{\theta}_0\|_2=\varepsilon}\|\pmb{U}_n(\pmb{\theta})-\pmb{U}_0(\pmb{\theta})\|_2 \le c_0/2,
\]
which together with \eqref{eq:U0_separation} gives
\[
\inf_{\|\pmb{\theta}-\pmb{\theta}_0\|_2=\varepsilon}\|\pmb{U}_n(\pmb{\theta})\|_2 \ge c_0/2.
\]
Since $\pmb{U}_n(\pmb{\theta})$ is continuous in $\pmb{\theta}$ on $\mathcal{N}$ (by Assumptions~\ref{ass:link}--\ref{ass:V} and the profile construction), a standard topological argument for vector equations (e.g., degree theory for Z-estimators) implies that $\pmb{U}_n(\pmb{\theta})=\pmb{0}$ admits at least one root $\widehat{\pmb{\theta}}$ inside $\mathcal{B}(\pmb{\theta}_0,\varepsilon)$. Therefore $\widehat{\pmb{\theta}}\to \pmb{\theta}_0$ in probability.

\paragraph{Root-$n$ rate for $\widehat{\pmb{\theta}}$.}
Because $\widehat{\pmb{\theta}}\in\mathcal{N}$ w.h.p., we may apply Lemma~\ref{lem:U_linear} at $\pmb{\theta}=\widehat{\pmb{\theta}}$:
\begin{equation}\label{eq:Un_expand_at_that}
\pmb{0}
=
\pmb{U}_n(\widehat{\pmb{\theta}})
=
\pmb{U}_n(\pmb{\theta}_0)+\pmb{H}_0(\widehat{\pmb{\theta}}-\pmb{\theta}_0)
+\pmb{r}_n,
\end{equation}
where $\|\pmb{r}_n\|_2=o_{\mathbb{P}}(\|\widehat{\pmb{\theta}}-\pmb{\theta}_0\|_2)+o_{\mathbb{P}}(n^{-1/2})$. Lemma~\ref{lem:U_linear} also yields $\pmb{U}_n(\pmb{\theta}_0)=O_{\mathbb{P}}(n^{-1/2})$. Left-multiplying \eqref{eq:Un_expand_at_that} by $\pmb{H}_0^{-1}$ and taking norms gives
\[
\|\widehat{\pmb{\theta}}-\pmb{\theta}_0\|_2
\le
\|\pmb{H}_0^{-1}\|\cdot\|\pmb{U}_n(\pmb{\theta}_0)\|_2
+
\|\pmb{H}_0^{-1}\|\cdot\|\pmb{r}_n\|_2.
\]
Since the first term is $O_{\mathbb{P}}(n^{-1/2})$ and the remainder is asymptotically smaller than $\|\widehat{\pmb{\theta}}-\pmb{\theta}_0\|_2$ plus $n^{-1/2}$, a standard contraction argument implies $\|\widehat{\pmb{\theta}}-\pmb{\theta}_0\|_2=O_{\mathbb{P}}(n^{-1/2})$.

\paragraph{Uniform rate for $\widehat{\eta}$.}
Recall $\widehat{\eta}(u)=\pmb{B}(u)^{\top}\widehat{\pmb{\gamma}}(\widehat{\pmb{\theta}})$. Add and subtract $\pmb{B}(u)^{\top}\pmb{\gamma}_0(\widehat{\pmb{\theta}})$ and $\pmb{B}(u)^{\top}\pmb{\gamma}_0(\pmb{\theta}_0)$:
\begin{align*}
\sup_{u\in\mathcal{U}}|\widehat{\eta}(u)-\eta_0(u)|
&\le
\sup_{u}|\pmb{B}(u)^{\top}\{\widehat{\pmb{\gamma}}(\widehat{\pmb{\theta}})-\pmb{\gamma}_0(\widehat{\pmb{\theta}})\}|
+
\sup_{u}|\pmb{B}(u)^{\top}\{\pmb{\gamma}_0(\widehat{\pmb{\theta}})-\pmb{\gamma}_0(\pmb{\theta}_0)\}| \\
&+
\sup_{u}|\pmb{B}(u)^{\top}\pmb{\gamma}_0(\pmb{\theta}_0)-\eta_0(u)|.
\end{align*}
The first term is $O_{\mathbb{P}}(\sqrt{K/n}+K^{-s})$ uniformly by Lemma~\ref{lem:gamma_uniform} and boundedness of $\pmb{B}(u)$ on $\mathcal{U}$. 

For the second term, the map $\pmb{\theta}\mapsto \pmb{\gamma}_0(\pmb{\theta})$ is locally Lipschitz by the implicit function theorem, since $\pmb{\Phi}_0(\pmb{\gamma};\pmb{\theta})=\pmb{0}$ has a locally unique solution and its Jacobian w.r.t.\ $\pmb{\gamma}$ is uniformly nonsingular (see the Jacobian argument in the proof of Lemma~\ref{lem:gamma_uniform}). Thus $\|\pmb{\gamma}_0(\widehat{\pmb{\theta}})-\pmb{\gamma}_0(\pmb{\theta}_0)\|_2\le C\|\widehat{\pmb{\theta}}-\pmb{\theta}_0\|_2=O_{\mathbb{P}}(n^{-1/2})$, so the second term is $O_{\mathbb{P}}(n^{-1/2})$. 

The last term is the sieve approximation error and is $O(K^{-s})$ by Lemma~\ref{lem:spline_approx}. Combining these bounds yields
\[
\sup_{u\in\mathcal{U}}|\widehat{\eta}(u)-\eta_0(u)|
=
O_{\mathbb{P}}\!\left(K^{-s}+\sqrt{\frac{K}{n}}\right),
\]
which completes the proof.
\end{proof}

\begin{proof}[Proof of Theorem~\ref{thm:normal}] From \eqref{eq:Un_expand_at_that} in the previous proof,
\[
\sqrt{n}(\widehat{\pmb{\theta}}-\pmb{\theta}_0)
=
-\pmb{H}_0^{-1}\sqrt{n}\,\pmb{U}_n(\pmb{\theta}_0)
-\pmb{H}_0^{-1}\sqrt{n}\,\pmb{r}_n.
\]
It suffices to show $\sqrt{n}\,\pmb{r}_n=o_{\mathbb{P}}(1)$. By Lemma~\ref{lem:U_linear}, $\|\pmb{r}_n\|_2=o_{\mathbb{P}}(\|\widehat{\pmb{\theta}}-\pmb{\theta}_0\|_2)+o_{\mathbb{P}}(n^{-1/2})$. Using Theorem~\ref{thm:consistency}, $\|\widehat{\pmb{\theta}}-\pmb{\theta}_0\|_2=O_{\mathbb{P}}(n^{-1/2})$, hence $\|\pmb{r}_n\|_2=o_{\mathbb{P}}(n^{-1/2})$ and therefore $\sqrt{n}\,\pmb{r}_n=o_{\mathbb{P}}(1)$.

Lemma~\ref{lem:U_linear} further gives $\sqrt{n}\,\pmb{U}_n(\pmb{\theta}_0)\Rightarrow \mathcal{N}(\pmb{0},\pmb{S}_0)$. By Slutsky's theorem,
\[
\sqrt{n}(\widehat{\pmb{\theta}}-\pmb{\theta}_0)
\Rightarrow
\mathcal{N}\!\left(\pmb{0},\,\pmb{H}_0^{-1}\pmb{S}_0(\pmb{H}_0^{-1})^{\top}\right),
\]
and the influence representation follows immediately by writing $\sqrt{n}\,\pmb{U}_n(\pmb{\theta}_0)=n^{-1/2}\sum_{i=1}^{n}\pmb{g}_{i,0}(\pmb{\theta}_0)+o_{\mathbb{P}}(1)$, which is part of Lemma~\ref{lem:U_linear}.
\end{proof}

\begin{lemma}[Existence and expansion of the EL multiplier]\label{lem:lambda_expand}
Under Assumptions~\ref{ass:sampling}--\ref{ass:H}, with probability tending to one, the Lagrange multiplier equation \eqref{eq:lambda_eq} at $\pmb{\theta}=\pmb{\theta}_0$ admits a unique solution $\widehat{\pmb{\lambda}}\in\mathbb{R}^{d}$ such that $\|\widehat{\pmb{\lambda}}\|_2=O_{\mathbb{P}}(n^{-1/2})$, and
\[
\widehat{\pmb{\lambda}}
=
\pmb{S}_n(\pmb{\theta}_0)^{-1}\bar{\pmb{g}}(\pmb{\theta}_0)
+o_{\mathbb{P}}(n^{-1/2}),
\]
where $\bar{\pmb{g}}(\pmb{\theta}_0)=n^{-1}\sum_{i=1}^{n}\pmb{g}_i(\pmb{\theta}_0)$ and $\pmb{S}_n(\pmb{\theta}_0)=n^{-1}\sum_{i=1}^{n}\pmb{g}_i(\pmb{\theta}_0)\pmb{g}_i(\pmb{\theta}_0)^{\top}$.
\end{lemma}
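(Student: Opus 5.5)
The proof follows Owen's argument for empirical likelihood, applied to the profiled blocks $\pmb{g}_i(\pmb{\theta}_0)$. Everything rests on three stochastic facts about these blocks, which I establish first:
\[
\text{(a)}\ \bar{\pmb{g}}(\pmb{\theta}_0)=O_{\mathbb{P}}(n^{-1/2}),\qquad
\text{(b)}\ \pmb{S}_n(\pmb{\theta}_0)\to_{\mathbb{P}}\pmb{S}_0,\qquad
\text{(c)}\ \max_{1\le i\le n}\|\pmb{g}_i(\pmb{\theta}_0)\|_2=o_{\mathbb{P}}(n^{1/2}).
\]
I will assume $\pmb{S}_0$ is positive definite, as is implicit in the Wilks statement.

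\emph{Fact (a).} This follows directly from Lemma~\ref{lem:U_linear}, since $\sqrt{n}\,\pmb{U}_n(\pmb{\theta}_0)$ converges in distribution.

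\emph{Facts (b) and (c).} For the oracle blocks $\pmb{g}_{i,0}(\pmb{\theta}_0)$, both facts are standard. Since $\mathbb{E}\pmb{g}_{i,0}(\pmb{\theta}_0)=\pmb{U}_0(\pmb{\theta}_0)=\pmb{0}$, the law of large numbers gives $n^{-1}\sum_i\pmb{g}_{i,0}\pmb{g}_{i,0}^{\top}\to\pmb{S}_0$. For the maximum, the finite fourth moment in Assumption~\ref{ass:moments} gives $\max_i\|\pmb{g}_{i,0}\|_2=o_{\mathbb{P}}(n^{1/4})$, via $\mathbb{P}(\max_i\|\pmb{g}_{i,0}\|>\epsilon n^{1/4})\le n\,\mathbb{P}(\|\pmb{g}_{1,0}\|^4>\epsilon^4 n)\to0$.

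To transfer (b) and (c) to the profiled blocks, I need a per-block bound on $\pmb{\delta}_i=\pmb{g}_i(\pmb{\theta}_0)-\pmb{g}_{i,0}(\pmb{\theta}_0)$, not just an averaged one. I will use the smoothness of the map $(\pmb{\gamma},\pmb{\rho})\mapsto\pmb{G}_i^{\top}\pmb{V}_i^{-1}(\pmb{Y}_i-\pmb{\mu}_i)$ (Assumptions~\ref{ass:link}, \ref{ass:V}), together with bounded cluster sizes and the uniform bound of Lemma~\ref{lem:gamma_uniform}. This should give
\[
\|\pmb{\delta}_i\|_2\le C\,(1+\|\pmb{Y}_i\|_2)\,r_n,\qquad r_n=\sqrt{K/n}+K^{-s}\to0.
\]
Averaging, $n^{-1}\sum_i\|\pmb{\delta}_i\|_2^2=o_{\mathbb{P}}(1)$ and $\max_i\|\pmb{\delta}_i\|_2=o_{\mathbb{P}}(n^{1/2})$. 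Cauchy–Schwarz on the cross terms then yields (b) and (c).

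\textbf{Main obstacle.} The delicate step is this per-block control of $\pmb{\delta}_i$. Two quantities need care. The profiled Jacobian $\pmb{G}_i$ contains the implicit derivative $\partial\widehat{\pmb{\gamma}}/\partial\pmb{\theta}^{\top}$. The estimated correlation parameter $\widehat{\pmb{\rho}}$ also enters. I will control both by the same mean-value and implicit-function argument used in Lemma~\ref{lem:gamma_uniform}, together with the consistency of $\widehat{\pmb{\rho}}$ from Assumption~\ref{ass:V}. Only $o_{\mathbb{P}}(1)$ perturbation is needed at this stage, not $o_{\mathbb{P}}(n^{-1/2})$.

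\emph{Existence and uniqueness.} By (a) and $\pmb{S}_0>0$, the origin lies in the interior of the convex hull of $\{\pmb{g}_i(\pmb{\theta}_0)\}$ with probability tending to one. On the domain $\{\pmb{\lambda}:1+\pmb{\lambda}^{\top}\pmb{g}_i>0\ \forall i\}$, the function $\pmb{\lambda}\mapsto-\sum_i\log(1+\pmb{\lambda}^{\top}\pmb{g}_i)$ is strictly convex and tends to $+\infty$ at the boundary. Its unique minimizer is therefore the unique root $\widehat{\pmb{\lambda}}$ of \eqref{eq:lambda_eq}.

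\emph{Order of $\widehat{\pmb{\lambda}}$.} Write $\widehat{\pmb{\lambda}}=\|\widehat{\pmb{\lambda}}\|\pmb{e}$ with $\|\pmb{e}\|_2=1$, and set $Z_n=\max_i\|\pmb{g}_i\|_2$. Multiplying \eqref{eq:lambda_eq} by $\pmb{e}^{\top}/n$ and using $1/(1+x)=1-x/(1+x)$ gives
\[
\|\widehat{\pmb{\lambda}}\|\,\pmb{e}^{\top}\pmb{S}_n\pmb{e}\le\bigl(1+\|\widehat{\pmb{\lambda}}\|Z_n\bigr)\,|\pmb{e}^{\top}\bar{\pmb{g}}|.
\]
By (b), $\pmb{e}^{\top}\pmb{S}_n\pmb{e}$ is bounded below by the smallest eigenvalue of $\pmb{S}_0$ minus $o_{\mathbb{P}}(1)$. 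Combined with (a) and (c), this gives $\|\widehat{\pmb{\lambda}}\|_2=O_{\mathbb{P}}(n^{-1/2})$, and hence $\max_i|\widehat{\pmb{\lambda}}^{\top}\pmb{g}_i|=o_{\mathbb{P}}(1)$.

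\emph{Expansion.} Expand $1/(1+x)=1-x+x^2/(1+x)$ in \eqref{eq:lambda_eq}:
\[
\pmb{0}=\bar{\pmb{g}}-\pmb{S}_n\widehat{\pmb{\lambda}}+n^{-1}\sum_i\pmb{g}_i\frac{(\widehat{\pmb{\lambda}}^{\top}\pmb{g}_i)^2}{1+\widehat{\pmb{\lambda}}^{\top}\pmb{g}_i}.
\]
The remainder is bounded in norm by
\[
Z_n\,\|\widehat{\pmb{\lambda}}\|^2\,\Bigl\|n^{-1}\sum_i\pmb{g}_i\pmb{g}_i^{\top}\Bigr\|\,(1-o_{\mathbb{P}}(1))^{-1}=o_{\mathbb{P}}(n^{1/2})\,O_{\mathbb{P}}(n^{-1})\,O_{\mathbb{P}}(1)=o_{\mathbb{P}}(n^{-1/2}).
\]
Since $\pmb{S}_n^{-1}=O_{\mathbb{P}}(1)$ by (b), it follows that $\widehat{\pmb{\lambda}}=\pmb{S}_n^{-1}\bar{\pmb{g}}+o_{\mathbb{P}}(n^{-1/2})$.
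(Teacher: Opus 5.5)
Your proof is correct, but it reaches existence and the $O_{\mathbb{P}}(n^{-1/2})$ rate by a different route from the paper. The paper argues locally. It writes $\pmb{\Psi}(\pmb{\lambda})=\bar{\pmb{g}}-\pmb{S}_n\pmb{\lambda}+\pmb{R}_n(\pmb{\lambda})$ and bounds $\pmb{R}_n$ uniformly on the ball $\{\|\pmb{\lambda}\|_2\le cn^{-1/2}\}$ by $2\|\pmb{\lambda}\|_2^2\,n^{-1}\sum_i\|\pmb{g}_i\|_2^3$. It then checks $\pmb{\Psi}(\pmb{S}_n^{-1}\bar{\pmb{g}})=o_{\mathbb{P}}(n^{-1/2})$ and appeals to a Newton--Kantorovich/implicit-function argument to place a root in that ball, with uniqueness coming from negative definiteness of the Jacobian. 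You follow Owen instead: global existence and uniqueness on the feasible set from strict convexity of the dual, then an a priori inequality that forces this root to be $O_{\mathbb{P}}(n^{-1/2})$, and only then the expansion, which is the same algebra as the paper's.

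Your version buys three things. First, it needs only (a)--(c), in particular $\max_i\|\pmb{g}_i\|_2=o_{\mathbb{P}}(n^{1/2})$, and no third-moment average. The paper instead claims $\max_i\|\pmb{g}_i\|_2=O_{\mathbb{P}}(1)$, which Assumption~\ref{ass:moments} does not deliver, although $o_{\mathbb{P}}(n^{1/2})$ would also rescue the paper's ball argument. Second, it shows directly that the root entering $\ell(\pmb{\theta}_0)$, namely the one in the feasible domain, is the small one. Third, it makes explicit the transfer of (b) and (c) from $\pmb{g}_{i,0}$ to the profiled $\pmb{g}_i$, which the paper simply asserts. The paper's local construction, in turn, does not need the convex-hull condition as a separate input.

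Two small repairs. First, the profiled $\pmb{g}_i(\pmb{\theta}_0)$ are not i.i.d., so you should not obtain the convex-hull condition from ``(a) and $\pmb{S}_0>0$'' via the standard i.i.d.\ fact. Derive it from (a)--(c) instead. If $\pmb{0}$ were not interior, some unit $\pmb{e}$ would satisfy $\pmb{e}^{\top}\pmb{g}_i\ge 0$ for all $i$. Then $\pmb{e}^{\top}\pmb{S}_n\pmb{e}\le Z_n\,\pmb{e}^{\top}\bar{\pmb{g}}=o_{\mathbb{P}}(1)$, which contradicts (b). Second, as you anticipate, your per-block bound on $\pmb{\delta}_i$ is still a sketch. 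The derivative terms in $\pmb{G}_i$ are not controlled by Lemma~\ref{lem:gamma_uniform}: these are the implicit derivative of $\widehat{\pmb{\gamma}}$ and the spline derivative defining $\dot{\eta}$. The paper, however, leaves this same step unproved.
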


\begin{proof}
For brevity write $\pmb{g}_i=\pmb{g}_i(\pmb{\theta}_0)$, $\bar{\pmb{g}}=n^{-1}\sum_{i=1}^{n}\pmb{g}_i$, and $\pmb{S}_n=n^{-1}\sum_{i=1}^{n}\pmb{g}_i\pmb{g}_i^{\top}$. Define the dual map
\[
\pmb{\Psi}(\pmb{\lambda})
=
\frac{1}{n}\sum_{i=1}^{n}\frac{\pmb{g}_i}{1+\pmb{\lambda}^{\top}\pmb{g}_i},
\]
so that \eqref{eq:lambda_eq} is equivalent to $\pmb{\Psi}(\pmb{\lambda})=\pmb{0}$.

We first record the size of the key empirical quantities. By Lemma~\ref{lem:U_linear}, $\sqrt{n}\,\bar{\pmb{g}}\Rightarrow \mathcal{N}(\pmb{0},\pmb{S}_0)$, hence $\bar{\pmb{g}}=O_{\mathbb{P}}(n^{-1/2})$. Also, $\pmb{S}_n\to \pmb{S}_0$ in probability with $\pmb{S}_0$ positive definite, so $\pmb{S}_n$ is invertible w.h.p. Moreover, Assumption~\ref{ass:moments} implies $\max_{1\le i\le n}\|\pmb{g}_i\|_2=O_{\mathbb{P}}(1)$ because the observations are i.i.d.\ at the subject level and $m_i$ is bounded.

Next we establish a local expansion of $\pmb{\Psi}(\pmb{\lambda})$ around $\pmb{\lambda}=\pmb{0}$. For $\|\pmb{\lambda}\|_2$ sufficiently small, use the identity
\[
\frac{1}{1+\pmb{\lambda}^{\top}\pmb{g}_i}
=
1-\pmb{\lambda}^{\top}\pmb{g}_i + \frac{(\pmb{\lambda}^{\top}\pmb{g}_i)^2}{1+\pmb{\lambda}^{\top}\pmb{g}_i},
\]
which yields
\[
\pmb{\Psi}(\pmb{\lambda})
=
\bar{\pmb{g}}
-\pmb{S}_n\,\pmb{\lambda}
+\pmb{R}_n(\pmb{\lambda}),
\]
where
\[
\pmb{R}_n(\pmb{\lambda})
=
\frac{1}{n}\sum_{i=1}^{n}
\pmb{g}_i\frac{(\pmb{\lambda}^{\top}\pmb{g}_i)^2}{1+\pmb{\lambda}^{\top}\pmb{g}_i}.
\]
On the event $\max_i |\pmb{\lambda}^{\top}\pmb{g}_i|\le 1/2$ (which holds w.h.p. for $\|\pmb{\lambda}\|_2\le c n^{-1/2}$ and fixed $c$),
\[
\|\pmb{R}_n(\pmb{\lambda})\|_2
\le
\frac{2}{n}\sum_{i=1}^{n}\|\pmb{g}_i\|_2\,(\pmb{\lambda}^{\top}\pmb{g}_i)^2
\le
2\|\pmb{\lambda}\|_2^2\cdot \frac{1}{n}\sum_{i=1}^{n}\|\pmb{g}_i\|_2^3
=
O_{\mathbb{P}}(\|\pmb{\lambda}\|_2^2),
\]
and thus $\sup_{\|\pmb{\lambda}\|\le c n^{-1/2}}\|\pmb{R}_n(\pmb{\lambda})\|_2=o_{\mathbb{P}}(n^{-1/2})$.

Define the candidate approximation $\pmb{\lambda}^{\ast}=\pmb{S}_n^{-1}\bar{\pmb{g}}$. Then $\|\pmb{\lambda}^{\ast}\|_2=O_{\mathbb{P}}(n^{-1/2})$ and
\[
\pmb{\Psi}(\pmb{\lambda}^{\ast})
=
\bar{\pmb{g}}-\pmb{S}_n\pmb{\lambda}^{\ast}+\pmb{R}_n(\pmb{\lambda}^{\ast})
=
\pmb{R}_n(\pmb{\lambda}^{\ast})
=
o_{\mathbb{P}}(n^{-1/2}).
\]
Also, the Jacobian of $\pmb{\Psi}$ is
\[
\frac{\partial \pmb{\Psi}(\pmb{\lambda})}{\partial \pmb{\lambda}^{\top}}
=
-\frac{1}{n}\sum_{i=1}^{n}\frac{\pmb{g}_i\pmb{g}_i^{\top}}{\{1+\pmb{\lambda}^{\top}\pmb{g}_i\}^{2}},
\]
which is negative definite in a neighborhood of $\pmb{0}$ w.h.p. because $\pmb{S}_n$ is positive definite and the denominators stay bounded. Therefore, by the implicit function theorem / Newton-Kantorovich argument, there exists a unique root $\widehat{\pmb{\lambda}}$ of $\pmb{\Psi}(\pmb{\lambda})=\pmb{0}$ in the ball $\{\|\pmb{\lambda}\|\le c n^{-1/2}\}$ w.h.p., and it satisfies
\[
\widehat{\pmb{\lambda}}-\pmb{\lambda}^{\ast}
=
\left\{\frac{\partial \pmb{\Psi}(\tilde{\pmb{\lambda}})}{\partial \pmb{\lambda}^{\top}}\right\}^{-1}\pmb{\Psi}(\pmb{\lambda}^{\ast})
=
o_{\mathbb{P}}(n^{-1/2}),
\]
for some $\tilde{\pmb{\lambda}}$ between $\widehat{\pmb{\lambda}}$ and $\pmb{\lambda}^{\ast}$. 

This yields $\|\widehat{\pmb{\lambda}}\|_2=O_{\mathbb{P}}(n^{-1/2})$ and the expansion $\widehat{\pmb{\lambda}}=\pmb{S}_n^{-1}\bar{\pmb{g}}+o_{\mathbb{P}}(n^{-1/2})$.
\end{proof}

\begin{proof}[Proof of Lemma~\ref{lem:el_quad}]
Let $\pmb{g}_i=\pmb{g}_i(\pmb{\theta}_0)$, $\bar{\pmb{g}}=n^{-1}\sum_{i=1}^{n}\pmb{g}_i$ and $\pmb{S}_n=n^{-1}\sum_{i=1}^{n}\pmb{g}_i\pmb{g}_i^{\top}$. By Lemma~\ref{lem:U_linear}, $\bar{\pmb{g}}=O_{\mathbb{P}}(n^{-1/2})$ and $\pmb{S}_n\to \pmb{S}_0$ in probability with $\pmb{S}_0$ positive definite, so $\pmb{S}_n$ is invertible w.h.p.

We also need the feasibility (convex hull) condition ensuring the EL weights exist. Since $\mathbb{E}(\pmb{g}_i)=\pmb{0}$ at $\pmb{\theta}_0$ and $\mathrm{Var}(\pmb{g}_i)=\pmb{S}_0$ is positive definite, the origin lies in the interior of the convex hull of $\{\pmb{g}_i\}_{i=1}^{n}$ with probability tending to one; this is a standard EL fact under nondegeneracy and i.i.d.\ sampling \citep{Kolaczyk1994empirical,owen2001empirical}. Hence the Lagrange multiplier equation \eqref{eq:lambda_eq} admits a unique solution $\widehat{\pmb{\lambda}}$ w.h.p.

By Lemma~\ref{lem:lambda_expand}, $\widehat{\pmb{\lambda}}=\pmb{S}_n^{-1}\bar{\pmb{g}}+o_{\mathbb{P}}(n^{-1/2})$ and $\|\widehat{\pmb{\lambda}}\|_2=O_{\mathbb{P}}(n^{-1/2})$. Moreover, $\max_{1\le i\le n}\|\pmb{g}_i\|_2=O_{\mathbb{P}}(1)$ by Assumption~\ref{ass:moments} and bounded cluster size, so $\max_i |\widehat{\pmb{\lambda}}^{\top}\pmb{g}_i|=o_{\mathbb{P}}(1)$, which justifies the Taylor expansion of the log.

Using $\log(1+t)=t-\tfrac{1}{2}t^2+O(t^3)$ uniformly for $|t|\le o(1)$,
\begin{align*}
\ell(\pmb{\theta}_0)
&=
2\sum_{i=1}^{n}\log\{1+\widehat{\pmb{\lambda}}^{\top}\pmb{g}_i\} \\
&=
2\sum_{i=1}^{n}\left(\widehat{\pmb{\lambda}}^{\top}\pmb{g}_i-\frac{1}{2}(\widehat{\pmb{\lambda}}^{\top}\pmb{g}_i)^2\right)
+
2\sum_{i=1}^{n}O\!\left(|\widehat{\pmb{\lambda}}^{\top}\pmb{g}_i|^3\right).
\end{align*}
The cubic remainder is $o_{\mathbb{P}}(1)$ because $\sum_i |\widehat{\pmb{\lambda}}^{\top}\pmb{g}_i|^3 \le (\max_i |\widehat{\pmb{\lambda}}^{\top}\pmb{g}_i|)\sum_i (\widehat{\pmb{\lambda}}^{\top}\pmb{g}_i)^2 = o_{\mathbb{P}}(1)\cdot O_{\mathbb{P}}(1)$. Also, $\sum_{i=1}^{n}\widehat{\pmb{\lambda}}^{\top}\pmb{g}_i = n\,\widehat{\pmb{\lambda}}^{\top}\bar{\pmb{g}}$ and $\sum_{i=1}^{n}(\widehat{\pmb{\lambda}}^{\top}\pmb{g}_i)^2 = n\,\widehat{\pmb{\lambda}}^{\top}\pmb{S}_n\widehat{\pmb{\lambda}}$. Therefore,
\[
\ell(\pmb{\theta}_0)
=
2n\,\widehat{\pmb{\lambda}}^{\top}\bar{\pmb{g}}
-
n\,\widehat{\pmb{\lambda}}^{\top}\pmb{S}_n\widehat{\pmb{\lambda}}
+o_{\mathbb{P}}(1).
\]
Substituting $\widehat{\pmb{\lambda}}=\pmb{S}_n^{-1}\bar{\pmb{g}}+o_{\mathbb{P}}(n^{-1/2})$ yields
\[
\ell(\pmb{\theta}_0)
=
n\,\bar{\pmb{g}}^{\top}\pmb{S}_n^{-1}\bar{\pmb{g}}+o_{\mathbb{P}}(1),
\]
as claimed.
\end{proof}

\begin{proof}[Proof of Theorem~\ref{thm:wilks}]
By Lemma~\ref{lem:el_quad},
\[
\ell(\pmb{\theta}_0) = n\,\bar{\pmb{g}}(\pmb{\theta}_0)^{\top}\pmb{S}_n(\pmb{\theta}_0)^{-1}\bar{\pmb{g}}(\pmb{\theta}_0) +o_{\mathbb{P}}(1).
\]
Lemma~\ref{lem:U_linear} gives $\sqrt{n}\,\bar{\pmb{g}}(\pmb{\theta}_0)\Rightarrow \mathcal{N}(\pmb{0},\pmb{S}_0)$ and $\pmb{S}_n(\pmb{\theta}_0)\to \pmb{S}_0$ in probability. Let $\pmb{S}_0^{1/2}$ be the symmetric square root and define $\pmb{Z}_n=\pmb{S}_n(\pmb{\theta}_0)^{-1/2}\sqrt{n}\,\bar{\pmb{g}}(\pmb{\theta}_0)$. Then $\pmb{Z}_n\Rightarrow \mathcal{N}(\pmb{0},\pmb{I}_d)$ by Slutsky, hence
\[
n\,\bar{\pmb{g}}^{\top}\pmb{S}_n^{-1}\bar{\pmb{g}}
=
\pmb{Z}_n^{\top}\pmb{Z}_n
\Rightarrow
\chi^2_d,
\]
which proves $\ell(\pmb{\theta}_0)\Rightarrow \chi^2_d$.

For the profile statistic, partition $\pmb{\theta}=(\pmb{\theta}_1^{\top},\pmb{\theta}_2^{\top})^{\top}$ with $\dim(\pmb{\theta}_1)=r$. Near $\pmb{\theta}_0$, the minimizer $\widehat{\pmb{\theta}}_2(\pmb{\theta}_1)=\arg\min_{\pmb{\theta}_2}\ell(\pmb{\theta}_1,\pmb{\theta}_2)$ exists and is unique w.h.p. because $\ell(\pmb{\theta})$ is locally convex and twice differentiable in a neighborhood of $\pmb{\theta}_0$ under Assumptions~\ref{ass:link}--\ref{ass:H}, the convexity follows from the EL dual form and nondegeneracy of $\pmb{S}_0$, see \cite{owen2001empirical,Kolaczyk1994empirical}.

Applying the quadratic approximation in Lemma~\ref{lem:el_quad} to $\ell(\pmb{\theta})$ yields, uniformly locally,
\[
\ell(\pmb{\theta})
=
n\,\bar{\pmb{g}}(\pmb{\theta})^{\top}\pmb{S}_n(\pmb{\theta})^{-1}\bar{\pmb{g}}(\pmb{\theta})
+o_{\mathbb{P}}(1).
\]
Using the smoothness of $\bar{\pmb{g}}(\pmb{\theta})$ and a Taylor expansion around $\pmb{\theta}_0$,
\[
\sqrt{n}\,\bar{\pmb{g}}(\pmb{\theta})
=
\sqrt{n}\,\bar{\pmb{g}}(\pmb{\theta}_0)
+\pmb{G}\sqrt{n}(\pmb{\theta}-\pmb{\theta}_0)+o_{\mathbb{P}}(1),
\]
where $\pmb{G}=\partial \pmb{U}_0(\pmb{\theta})/\partial \pmb{\theta}^{\top}|_{\pmb{\theta}=\pmb{\theta}_0}=\pmb{H}_0$. 

Minimizing the resulting quadratic form over $\pmb{\theta}_2$ yields a reduced quadratic form in the $r$-dimensional component corresponding to $\pmb{\theta}_1$, which converges to $\chi^2_r$ by the same self-normalized CLT argument as above (formally, via the Schur complement of the partitioned information matrix). Therefore $\ell_{\mathrm{prof}}(\pmb{\theta}_{1,0})\Rightarrow \chi^2_r$.
\end{proof}

\end{document}